\newcommand{\email}[1]{{\tt #1}}
\newcommand{\Thta}{\mathrm{\Theta}}
\newcommand{\Omga}{\mathrm{\Omega}}
\newcommand{\bigO}{\mathrm{O}}
\newcommand{\RPi}{\mathrm{\Pi}}
\newcommand{\es}{\{\}}
\newcommand{\eq}{[\,]}
\newcommand{\sm}{-}
\newcommand{\nlv}{\textit{null}}
\newcommand{\E}{\mathrm{E}}
\newcommand{\outl}{\mathit{out}}
\newcommand{\inl}{\mathit{in}}
\newcommand{\foutl}{\mathit{first\mbox{-}out}}
\newcommand{\finl}{\mathit{first\mbox{-}in}}
\newcommand{\noutl}{\mathit{next\mbox{-}out}}
\newcommand{\ninl}{\mathit{next\mbox{-}in}}
\newcommand{\position}{\mathit{position}}
\newcommand{\vertex}{\mathit{vertex}}
\newcommand{\inject}{\mathit{inject}}
\newcommand{\pop}{\mathit{pop}}
\newcommand{\find}{\mathit{find}}
\newcommand{\unite}{\mathit{unite}}
\title{Incremental Cycle Detection, Topological Ordering, and Strong Component Maintenance}
\author{BERNHARD HAEUPLER \affil{Massachusetts Institute of Technology}
TELIKEPALLI KAVITHA \affil{Tata Institute of
Fundamental Research}
ROGERS MATHEW \affil{Indian Institute of Science}
SIDDHARTHA SEN \affil{Princeton University} 
ROBERT E. TARJAN \affil{Princeton University \& HP Laboratories}}
\begin{abstract}

We present two on-line algorithms for maintaining a topological order
of a directed $n$-vertex acyclic graph as arcs are added, and
detecting a cycle when one is created.  Our first algorithm handles
$m$ arc additions in $\bigO(m^{3/2})$ time.  For sparse graphs ($m/n =
\bigO(1)$), this bound improves the best previous bound by a logarithmic
factor, and is tight to within a constant factor among algorithms
satisfying a natural {\em locality} property.  Our second algorithm
handles an arbitrary sequence of arc additions in $\bigO(n^{5/2})$ time.
For sufficiently dense graphs, this bound improves the best previous
bound by a polynomial factor.  Our bound may be far from tight: we show that
the algorithm can take $\Omga(n^22^{\sqrt{2\lg n}})$ time by
relating its performance to a generalization of the $k$-levels problem of
combinatorial geometry. A completely different algorithm running in $\Thta(n^2\log n)$ time was given
recently by Bender, Fineman, and Gilbert. We extend both of our algorithms to
the maintenance of strong components, without affecting the asymptotic time
bounds.

\end{abstract}
\keywords{Dynamic algorithms, directed graphs, topological order, cycle
detection, strong components, halving intersection, arrangement}
\begin{document}

\setcounter{page}{1}

\begin{bottomstuff}
A preliminary version of this article appeared in {\it Proceedings of the
35$^{\text{th}}$ International Colloquium on Automata, Languages, and
Programming (ICALP), Reykjavik, Iceland, July 7-11, 2008.}\newline
Author's addresses:  Bernhard Haeupler, CSAIL, Massachusetts Institute of
Technology, Cambridge, MA 02139, United States, \email{haeupler@mit.edu}; work
done while the author was a visiting student at Princeton University.  Telikepalli
Kavitha, Tata Institute of Fundamental Research, Mumbai, India,
\email{kavitha@tcs.tifr.res.in}; work done while the author was at Indian
Institute of Science.  Rogers Mathew, Indian Institute of Science, Bangalore,
India, \email{rogers@csa.iisc.ernet.in}.  Siddhartha Sen, Department of Computer Science, Princeton University, Princeton, NJ 08540, United States, \email{sssix@cs.princeton.edu}.  Robert E. Tarjan, Department of Computer
Science, Princeton University, Princeton, NJ 08540, United States and HP
Laboratories, Palo Alto, CA 94304, United States, \email{ret@cs.princeton.edu}.\newline
Research at Princeton University partially supported by NSF grants CCF-0830676
and CCF-0832797. The information contained herein does not necessarily reflect
the opinion or policy of the federal government and no official endorsement
should be inferred.
\end{bottomstuff}
\maketitle

\section{Introduction} \label{sec:intro}

In this paper we consider three related problems on dynamic directed graphs: cycle
detection, maintaining a topological order, and maintaining strong components.  We
begin with a few standard definitions.  A {\em topological order} of a directed graph
is a total order ``$<$'' of the vertices such that for every arc $(v, w)$, $v < w$.
A directed graph is {\em strongly connected} if every vertex is reachable from every
other.  The {\em strongly connected components} of a directed graph are its maximal
strongly connected subgraphs.  These components partition the vertices
\cite{Harary1965}.  Given a directed graph $G$, its {\em graph of strong components}
is the graph whose vertices are the strong components of $G$ and whose arcs are
all pairs $(X, Y)$ with $X \ne Y$ such that there is an arc in the original graph from a vertex
in $X$ to a vertex in $Y$. The graph of strong components is acyclic
\cite{Harary1965}.

A directed graph has a topological order (and in general more than one) if and only
if it is acyclic.  The first implication is equivalent to the statement that every
partial order can be embedded in a total order, which, as Knuth~\cite{Knuth1973}
noted, was proved by Szpilrajn~\cite{Szpilrajn1930} in 1930, for infinite as well as
finite sets.  Szpilrajn remarked that this result was already known to at least
Banach, Kuratowski, and Tarski, though none of them published a proof.

Given a fixed $n$-vertex, $m$-arc graph, one can find either a cycle or a topological
order in $\bigO(n + m)$ time by either of two methods: repeated deletion of sources
(vertices of in-degree zero)~\cite{Knuth1973,Knuth1974} or depth-first search
\cite{Tarjan1972}.  The former method (but not the latter) extends to the enumeration
of all possible topological orders~\cite{Knuth1974}.  One can find strong components,
and a topological order of the strong components in the graph of strong components,
in $\bigO(n + m)$ time using depth-first search, either one-way
\cite{Cheriyan1996,Gabow2000,Tarjan1972} or two-way~\cite{Sharir1981,Aho1983}.

In some situations the graph is not fixed but changes over time.  An {\em
incremental} problem is one in which vertices and arcs can be added; a {\em
decremental} problem is one in which vertices and arcs can be deleted; a {\em
(fully) dynamic} problem is one in which vertices and arcs can be added or
deleted. Incremental cycle detection or topological ordering occurs in circuit evaluation
\cite{Alpern1990}, pointer analysis~\cite{Pearce2003}, management of compilation
dependencies~\cite{Marchetti1993,Omohundro1992}, and deadlock detection
\cite{Belik1990}.  In some applications cycles are not fatal; strong components, and
possibly a topological order of them, must be maintained.  An example is
speeding up pointer analysis by finding cyclic relationships~\cite{Pearce2003b}.

We focus on incremental problems.  We assume that the vertex set is fixed and
given initially, and that the arc set is initially empty.  We denote by $n$
the number of vertices and by $m$ the number of arcs added.  For simplicity in stating time
bounds we assume that $m = \Omga(n)$.  We do not allow multiple arcs, so $m \le
{n \choose 2}$.  One can easily extend our algorithms to support vertex additions in
$\bigO(1)$ time per vertex addition.  (A new vertex has no incident arcs.)  Our
topological ordering algorithms, as well as all others in the literature, can handle
arc deletions as well as insertions, since an arc deletion preserves topological
order, but our time bounds are no longer valid.  Maintaining strong components as
arcs are deleted, or inserted and deleted, is a harder problem, as is maintaining the
transitive closure of a directed graph under arc insertions and/or deletions.  These
problems are quite interesting and much is known, but they are beyond the scope of
this paper.  We refer the interested reader to Roditty and Zwick~\cite{RodittyZ2008}
and the references given there for a thorough discussion of results on these
problems.

Our goal is to develop algorithms for incremental cycle detection and topological
ordering that are significantly more efficient than running an algorithm for a static
graph from scratch after each arc addition.  In Section~\ref{sec:lim-search} we
discuss the use of graph search to solve these problems, work begun by Shmueli
\cite{Shmueli1983} and realized more fully by Marchetti-Spaccamela et
al.~\cite{Marchetti1996}, whose algorithm runs in $\bigO(nm)$ time.  In
Section~\ref{sec:2way-search} we develop a two-way search method that we call
{\em compatible search}.  Compatible search is essentially a generalization of
two-way ordered search, which was first proposed by Alpern et
al.~\cite{Alpern1990}.  They gave a time bound for their algorithm in an
incremental model of computation, but their analysis does not give a good bound
in terms of $n$ and $m$.  They also considered batched arc additions.  Katriel
and Bodlaender~\cite{Katriel2006} gave a variant of two-way ordered search with
a time bound of $\bigO(\min\{m^{3/2}\log n, m^{3/2} + n^2\log n\})$. Liu and
Chao~\cite{Liu2007} improved the bound of this variant to $\Thta(m^{3/2} +
mn^{1/2}\log n)$, and Kavitha and Mathew~\cite{Kavitha2007} gave another variant with a bound of
$\bigO(m^{3/2} + nm^{1/2}\log n)$.

A two-way search need not be ordered to solve the topological ordering problem. We
apply this insight in Section~\ref{sec:soft-search} to develop a version of
compatible search that we call {\em soft-threshold search}.  This method uses
either median-finding (which can be approximate) or random sampling in place of
the heaps (priority queues) needed in ordered search, resulting in a time bound
of $\bigO(m^{3/2})$.  We also show that any algorithm among a natural class of
algorithms takes $\Omga(nm^{1/2})$ time in the worst case.  Thus for sparse
graphs ($m/n = \bigO(1)$) our bound is best possible in this class of algorithms.

The algorithms discussed in Sections~\ref{sec:2way-search} and~\ref{sec:soft-search}
have two drawbacks.  First, they require a sophisticated data structure, namely a
{\em dynamic ordered list}~\cite{Bender2002,Dietz1987}, to maintain the topological
order.  One can address this drawback by maintaining the topological order as an
explicit numbering of the vertices from $1$ through $n$.  Following Katriel
\cite{Katriel2004}, we call an algorithm that does this a {\em topological sorting}
algorithm.  The one-way search algorithm of Marchetti-Spaccamela et
al.~\cite{Marchetti1996} is such an algorithm.  Pearce and Kelly~\cite{Pearce2006}
gave a two-way-search topological sorting algorithm.  They claimed it is fast in
practice, although they did not give a good time bound in terms of $n$ and $m$.
Katriel~\cite{Katriel2004} showed that any topological sorting algorithm that has a
natural {\em locality} property takes $\Omga(n^2)$ time in the worst case even if
$m/n = \Thta(1)$.

The second drawback of the algorithms discussed in
Sections~\ref{sec:2way-search} and~\ref{sec:soft-search} is that using graph
search to maintain a topological order becomes less and less efficient as the
graph becomes denser. Ajwani et al.~\cite{Ajwani2006} addressed this drawback
by giving a topological sorting algorithm with a running time of
$\bigO(n^{11/4})$. In Section~\ref{sec:top-search} we simplify and improve this
algorithm. Our algorithm searches the topological order instead of the graph. 
We show that it runs in $\bigO(n^{5/2})$ time.  This bound may be far from
tight.  We obtain a lower bound of $\Omga(n2^{\sqrt{2\lg n}})$ on
the running time of the algorithm by relating its efficiency to a generalization
of the $k$-levels problem of combinatorial geometry.

In Section~\ref{sec:strong} we extend the algorithms of Sections
\ref{sec:soft-search} and~\ref{sec:top-search} to the incremental maintenance of
strong components.  We conclude in Section~\ref{sec:remarks} with some remarks and
open problems.

This paper is an improvement and extension of a conference paper
\cite{Haeupler2008b}, which itself is a combination and condensation of two on-line
reports~\cite{Haeupler2008,Kavitha2007}.  Our main improvement is a simpler analysis
of the algorithm presented in Section~\ref{sec:top-search} and originally in
\cite{Kavitha2007}.  At about the same time as~\cite{Kavitha2007} appeared and also
building on the work of Ajwani et al., Liu and Chao~\cite{Liu2008} independently
obtained a topological sorting algorithm that runs in $\bigO(n^{5/2}\log^2 n)$ or
$\bigO(n^{5/2}\log n)$ time, depending on the details of the implementation. More
recently, Bender, Fineman, and Gilbert~\cite{Bender2009} have presented a topological
ordering algorithm that uses completely different techniques and runs in
$\Thta(n^2\log n)$ time.

\section{One-Way Search} \label{sec:lim-search}

\SetKwFunction{limitedsearch}{Limited-Search}

The simplest of the three problems we study is that of detecting a cycle when an arc
addition creates one.  All the known efficient algorithms for this problem, including
ours, rely on the maintenance of a topological order.  When an arc $(v, w)$ is added,
we can test for a cycle by doing a search forward from $w$ until either reaching $v$
(there is a cycle) or visiting all vertices reachable from $w$ without finding $v$.
This method takes $\Thta(m)$ time per arc addition in the worst case, for a total of
$\Thta(m^2)$ time.  By maintaining a topological order, we can improve this method.
When a new arc $(v, w)$ is added, test if $v < w$.  If so, the order is still
topological, and the graph is acyclic. If not, search for $v$ from $w$.  If the
search finishes without finding $v$, we need to restore topological order, since (at
least) $v$ and $w$ are out of order.  We can make the order topological by moving all
the vertices visited by the search to positions after all the other vertices, and
ordering the visited vertices among themselves topologically.

We need a way to represent the topological order. A simple numbering scheme suffices.
Initially, number the vertices arbitrarily from $1$ through $n$ and initialize a
global counter $c$ to $n$.  When a search occurs, renumber the vertices visited by
the search consecutively from $c + 1$, in a topological order with respect to the
subgraph induced by the set of visited vertices, and increment $c$ to be the new
maximum vertex number.  One way to order the visited vertices is to make the search
depth-first and order the vertices in reverse postorder \cite{Tarjan1972}. With this
scheme, all vertex numbers are positive integers no greater than $nm$.

Shmueli \cite{Shmueli1983} proposed this method as a heuristic for cycle detection,
although he used a more-complicated two-part numbering scheme and he did not mention
that the method maintains a topological order. In the worst case, every new arc can
invalidate the current topological order and trigger a search that visits a large
part of the graph, so the method does not improve the $\bigO(m^2)$ worst-case bound for
cycle detection.  But it is the starting point for asymptotic improvement.

To do better we use the topological order to limit the searching.  The search for $v$
from $w$ need not visit vertices larger than $v$ in the current order, since no such
vertex, nor any vertex reachable from such a vertex, can be $v$.  Here is the
resulting method in detail.  When a new arc $(v, w)$ has $v > w$, search for $v$		
from $w$ by calling \limitedsearch{$v$,$w$}, where the function \limitedsearch
is defined in Figure~\ref{alg:lim-search}.  In this and later functions and
procedures, a minus sign denotes set subtraction.

\begin{figure}
\setlength{\algomargin}{9em}
\begin{function}[H]
arc {\bf function} \limitedsearch{{\rm vertex} $v$, {\rm vertex} $w$}\;
\Indp
$F = \{w\}$; $A = \{(w, x) | (w, x) \text{ is an arc}\}$\;
\While {$A \ne \es$} {
	choose $(x, y) \in A$; $A = A \sm \{(x, y)\}$\;
	\lIf {$y = v$} {\Return $(x, y)$}\;
	\ElseIf {$y < v$ {\rm and} $y \not\in F$} {
		$F = F \cup \{y\}$; $A = A \cup	\{(y, z)| (y,z) \text{ is an arc}\}$\;
	}
}
\Return $\nlv$\;
\end{function}
\caption{Implementation of limited search.}
\label{alg:lim-search}
\end{figure}

In \limitedsearch, $F$ is the set of vertices visited by the search, and $A$
is the set of arcs to be traversed by the search.  An iteration of
the while loop that deletes an arc $(x, y)$ from $A$ does a {\em traversal} of
$(x, y)$.  The choice of which arc in $A$ to traverse is arbitrary.  If the
addition of $(v, w)$ creates a cycle, \limitedsearch{$v$,$w$} returns an arc
$(x, y) \ne (v, w)$ on such a cycle; otherwise, it returns null.  If it returns
null, restore topological order by moving all vertices in $F$ just after $v$
(and before the first vertex following $v$, if any).  Order the vertices within
$F$ topologically, for example by making the search depth-first and ordering
the vertices in $F$ in reverse postorder with respect to the search. 
Figure~\ref{fig:lim-search} shows an example of limited search and reordering.

\begin{figure}
\centering
\includegraphics[scale=0.45]{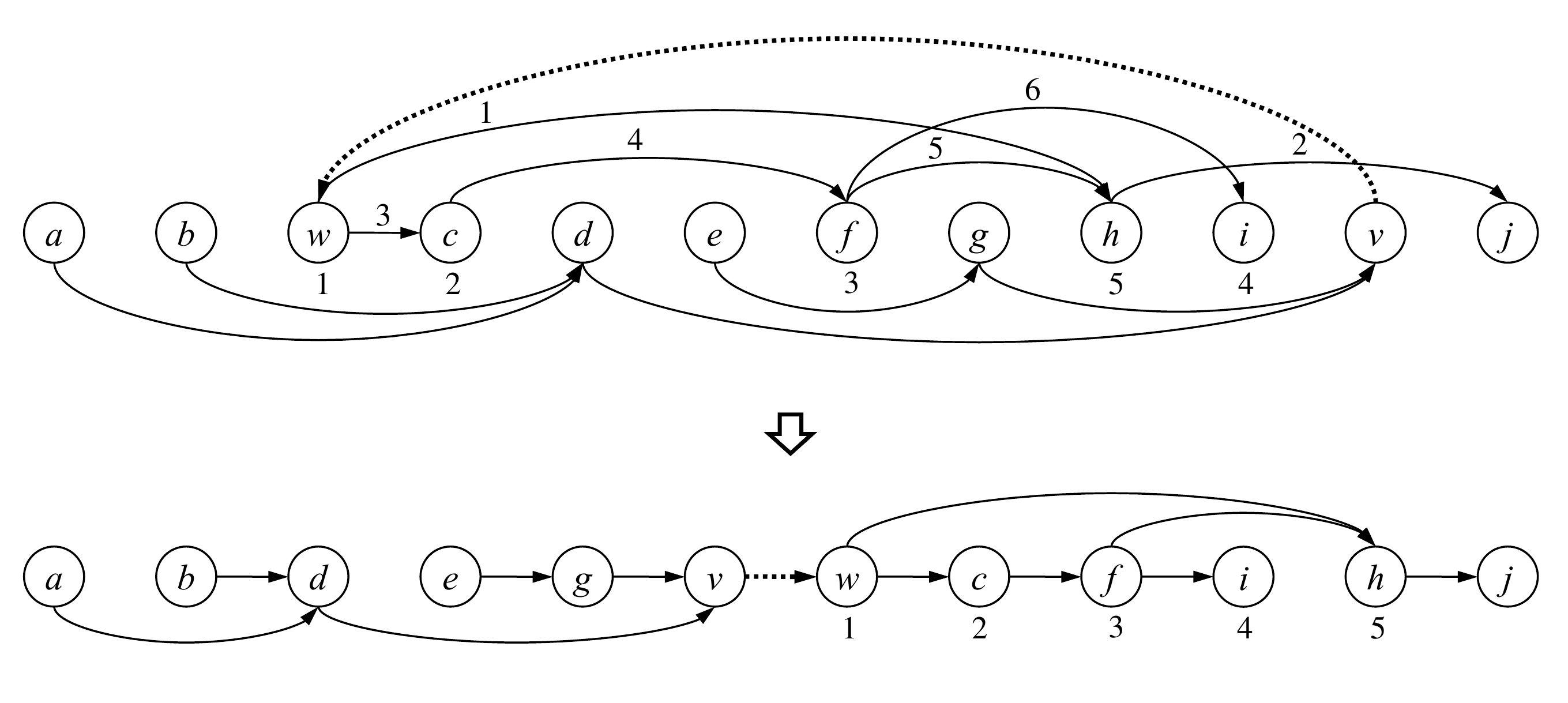}
\caption{Limited search followed by vertex reordering.  Initial topological
order is left-to-right.  Arcs are numbered in order of traversal; the search is
depth-first.  Visited vertices are $w$, $c$, $f$, $h$, $i$, $j$.  They are
numbered in reverse postorder with respect to the search and reordered
correspondingly.}
\label{fig:lim-search}
\end{figure}

Before discussing how to implement the reordering, we bound the total time for
the limited searches.  If we represent $F$ and $A$ as linked lists and mark
vertices as they are added to $F$, the time for a search is $\bigO(1)$ plus
$\bigO(1)$ per arc traversal. Only the last search, which does at most $m$ arc
traversals, can report a cycle.  To bound the total number of arc traversals,
we introduce the notion of {\em relatedness}.  We define a vertex and an arc to
be {\em related} if some path contains both the vertex and the arc, and {\em
unrelated} otherwise.  This definition does not depend on whether the vertex or
the arc occurs first on the path; they are related in either case.  If the
graph is acyclic, only one order is possible, but in a cyclic graph, a vertex
can occur before an arc on one path and after the arc on a different path.  If
either case occurs, or both, the vertex and the arc are related.

\begin{lemma}
\label{lem:lim-search-rel}
Suppose the addition of $(v, w)$ does not create a cycle but does trigger a search.
Let $(x, y)$ be an arc traversed during the (unsuccessful) search for $v$ from $w$.  Then $v$
and $(x, y)$ are unrelated before the addition but related after it.

\end{lemma}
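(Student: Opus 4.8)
The plan is to prove the two halves of the statement separately: first that $v$ and $(x,y)$ are related \emph{after} the addition of $(v,w)$, and then that they are unrelated \emph{before} it.

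For the ``related after'' direction, I would exhibit a single path through the new graph that contains both $v$ and the arc $(x,y)$. The key observation is how the search reaches $(x,y)$. Since $(x,y)$ is traversed, $x$ was added to $F$, so there is a path in the graph from $w$ to $x$ all of whose vertices lie in $F$ (hence are $< v$ in the old order, so in particular none of them is $v$); concatenating the new arc $(v,w)$ with this path gives a path from $v$ to $x$, and appending the arc $(x,y)$ yields a path from $v$ through $(x,y)$. This path contains $v$ as its first vertex and the arc $(x,y)$ as its last arc, so $v$ and $(x,y)$ are related after the addition. (I should double‑check the degenerate case $x = w$, where the path from $w$ to $x$ is trivial; the argument still goes through, the path being $v, w, y$.)

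For the ``unrelated before'' direction, suppose for contradiction that some path $P$ in the old graph contains both $v$ and the arc $(x,y)$. Since the old graph is acyclic, $v$ cannot appear after the arc $(x,y)$ on $P$: that would force both $v < y$ and $y \le \dots \le v$ along $P$ via the topological order, i.e.\ $y < v$ together with a path from $y$ to $v$, contradicting acyclicity once we note $y < v$ is exactly the condition under which the search would have continued past $y$. Wait — more carefully: I want to rule out $v$ preceding $(x,y)$ on $P$. If $v$ precedes $x$ on $P$, then there is an old path from $v$ to $x$, hence $v < x$ in the old topological order; but $(x,y)$ is traversed only because $x \in F$, and every vertex of $F$ is smaller than $v$ in the old order (the search only adds $y'$ to $F$ when $y' < v$, and $w < v$ since the arc $(v,w)$ triggered the search), so $x < v$, a contradiction. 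If instead $v$ follows $(x,y)$ on $P$, there is an old path from $y$ to $v$, so $y < v$; combined with the old path from $w$ to $x$ inside $F$ and the fact that all of $F$ precedes $v$, this does not immediately contradict anything, so I need the acyclicity of the \emph{new} graph here instead — but the lemma only assumes the new graph is acyclic, which is fine. From the old path $y \rightsquigarrow v$ and the new arc $(v,w)$ and the all‑in‑$F$ old path $w \rightsquigarrow x$ and finally $(x,y)$, we obtain a cycle in the new graph, contradicting the hypothesis that adding $(v,w)$ creates no cycle. Hence no such $P$ exists and $v$ and $(x,y)$ are unrelated before the addition.

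The main obstacle is the bookkeeping around the topological order and the set $F$: one has to state precisely the invariant ``every vertex ever placed in $F$ is $< v$ in the pre‑addition topological order'' and ``there is an old path from $w$ to each vertex of $F$ lying entirely within $F$,'' and verify both by induction on the order in which vertices enter $F$, using the guard ``$y < v$ and $y \notin F$'' in \limitedsearch. Once those two invariants are in hand, both directions of the lemma follow by short path‑concatenation arguments as sketched above, using acyclicity of the old graph for one sub‑case and the no‑new‑cycle hypothesis for the other.
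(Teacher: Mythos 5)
Your proof is correct and follows essentially the same route as the paper's: relatedness after the addition via the path $v \to w \rightsquigarrow x \to y$, and unrelatedness before by splitting on whether $v$ precedes or follows $(x,y)$ on a hypothetical common path, killing the first case with the invariant $x < v$ and the second with the no-new-cycle hypothesis. The paper merely compresses the first case into the phrase ``since $x < v$''; your explicit statement of the two invariants about $F$ is the same argument spelled out.
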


\begin{proof}
Let $<$ be the topological order before the addition of $(v, w)$.  Since $x < v$, for
$v$ and $(x, y)$ to be related before the addition there must be a path containing
$(x, y)$ followed by $v$.  But then there is a path from $x$ to $v$.  Since there is
a path from $w$ to $x$, the addition of $(v, w)$ creates a cycle, a contradiction.
Thus $v$ and $(x, y)$ are unrelated before the addition.  After the addition there is a
path from $v$ through $(v, w)$ to $(x, y)$, so $v$ and $(x, y)$ are related. \qed

\end{proof}

The number of related vertex-arc pairs is at most $nm$, so the number of arc
traversals during all limited searches, including the last one, is at most $nm + m$.
Thus the total search time is $\bigO(nm)$.

Shmueli \cite{Shmueli1983} suggested this method but did not analyze it.  Nor did he
give an efficient way to do the reordering; he merely hinted that one could modify
his numbering scheme to accomplish this.  According to Shmueli, ``This may force us
to use real numbers (not a major problem)."  In fact, it {\em is} a major problem,
because the precision required may be unrealistically high.

To do the reordering efficiently, we need a representation more complicated than a
simple numbering scheme.  We use instead a solution to the {\em dynamic ordered list}
problem: represent a list of distinct elements so that order queries (does $x$ occur
before $y$ in the list?), deletions, and insertions (insert a given non-list element
just before, or just after, a given list element) are fast.  Solving this problem is
tantamount to addressing the precision question that Shmueli overlooked.  Dietz and
Sleator \cite{Dietz1987} gave two related solutions.  Each takes $\bigO(1)$ time
worst-case for an order query or a deletion.  For an insertion, the first takes
$\bigO(1)$ amortized time; the second, $\bigO(1)$ time worst-case.  Bender et
al. \cite{Bender2002} simplified the Dietz-Sleator methods.  With any of these
methods, the time for reordering after an arc addition is bounded by a constant
factor times the search time, so $m$ arc additions take $\bigO(nm)$ time.

There is a simpler way to do the reordering, but it requires rearranging all {\em
affected vertices}, those between $w$ and $v$ in the order (inclusive): move all
vertices visited by the search after all other affected vertices, preserving the
original order within each of these two sets.  Figure~\ref{fig:lim-search-alt}
illustrates this alternative reordering method.  We call a topological ordering
algorithm {\em local} if it reorders only affected vertices.  Except for
Shmueli's unlimited search algorithm and the recent algorithm of Bender et
al.~\cite{Bender2009}, all the algorithms we discuss are local.

\begin{figure}
\centering
\includegraphics[scale=0.45]{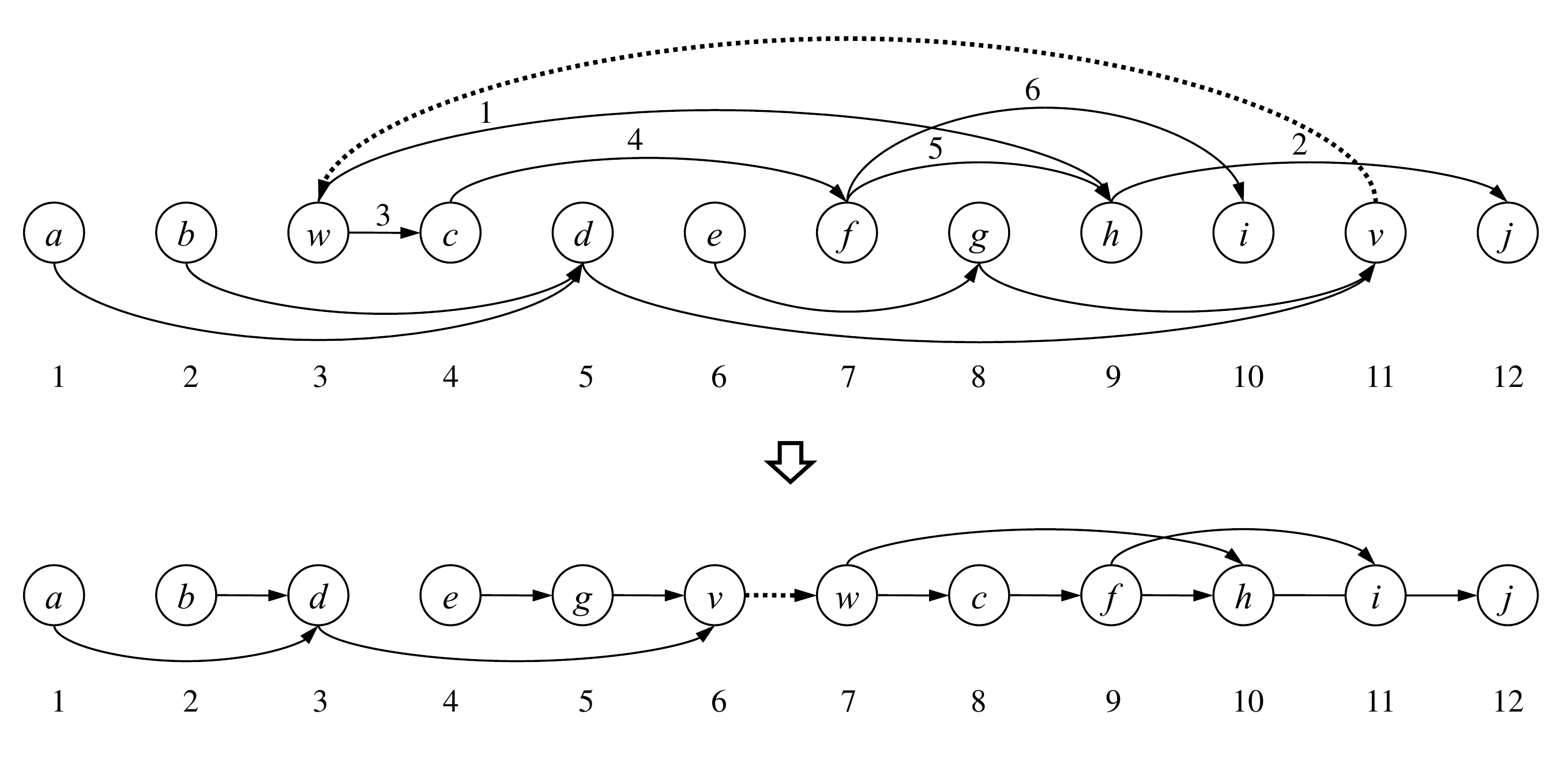}
\caption{Alternative method of restoring topological order after a limited
search of the graph in Figure~\ref{fig:lim-search}.  The vertices are numbered
in topological order. The affected vertices are
$w$,$c$,$d$,$e$,$f$,$g$,$h$,$i$,$v$. Arcs are numbered in order of traversal. 
The affected vertices are reordered by moving the visited vertices
$w$,$c$,$f$,$h$,$i$ after the unvisited vertices $d$,$e$,$g$,$v$.}
\label{fig:lim-search-alt}
\end{figure}

We can do this reordering efficiently even if the topological order is
explicitly represented by a one-to-one mapping between the vertices and the
integers from $1$ through $n$.  This makes the method a topological sorting
algorithm as defined in Section \ref{sec:intro}. This method was proposed and
analyzed by Marchetti-Spaccamela et al. \cite{Marchetti1996}.  The reordering
time is $\bigO(n)$ per arc addition; the total time for $m$ arc additions is
$\bigO(nm)$.

\section{Two-Way Search} \label{sec:2way-search}

\SetKwFunction{compatiblesearch}{Compatible-Search}
\SetKwFunction{vertexguidedsearch}{Vertex-Guided-Search}
\SetKwFunction{searchstep}{Search-Step}

We can further improve cycle detection and topological ordering by making the search
two-way instead of one-way: when a new arc $(v, w)$ has $v > w$, concurrently search
forward from $w$ and backward from $v$ until some vertex is reached from both
directions (there is a cycle), or enough arcs are traversed to guarantee that the
graph remains acyclic; if so, rearrange the visited vertices to restore topological
order.

Each step of the two-way search traverses one arc $(u, x)$ forward and one arc $(y,
z)$ backward.  To make the search efficient, we make sure that these arcs are {\em
compatible}, by which we mean that $u < z$ (in the topological order before $(v, w)$
is added).  Here is the resulting method in detail.  For ease of notation we adopt
the convention that the minimum of an empty set is bigger than any other value and
the maximum of an empty set is smaller than any other value.  Every vertex is in one
of three states: {\em unvisited}, {\em forward} (first visited by the forward
search), or {\em backward} (first visited by the backward search).  Before any arcs
are added, all vertices are unvisited.  The search maintains the set $F$ of
forward vertices and the set $B$ of backward vertices: if the search does not
detect a cycle, certain vertices in $B \cup F$ must be reordered to restore
topological order.  The search also maintains the set $A_F$ of arcs to be
traversed forward and the set $A_B$ of arcs to be traversed backward.  If the
search detects a cycle, it returns an arc other than $(v,w)$ on the cycle; if
there is no cycle, the search returns null.

When a new arc $(v, w)$ has $v > w$, search forward from $w$ and backward from
$v$ by calling \compatiblesearch{$v$,$w$}, where the function \compatiblesearch
is defined in Figure~\ref{alg:comp-search}.

\begin{figure}
\setlength{\algomargin}{4.3em}
\begin{function}[H]
arc {\bf function} \compatiblesearch{{\rm vertex} $v$, {\rm vertex} $w$}\;
\Indp
    $F = \{w\}$; $B = \{v\}$; $A_F = \{(w, x)|(w,x) \text{ is an arc}\}$; $A_B =
    \{(y, v)|(y,v) \text{ is an arc}\}$\; \While {$\exists (u,x) \in A_F,\exists
    (y,z)\in A_B \hspace{2pt}(u < z)$} { choose $(u, x) \in A_F$ and $(y, z) \in A_B$ with $u < z$\;
        $A_F = A_F \sm \{(u, x)\}$; $A_B = A_B \sm \{(y, z)\}$\;
        \lIf {$x \in B$} {\Return $(u, x)$} \lElseIf {$y \in F$} {\Return $(y,
        z)$}\; 
        \If {$x \not\in F$} {
        	$F = F \cup \{x\}$; $A_F = A_F \cup \{(x,q)|(x,q) \text{ is an
        	arc}\}$\; }
        
        \If {$y \not\in B$} {
        	$B = B \cup \{y\}$; $A_B = A_B \cup \{(r,y)|(r,y) \text{ is an
        	arc}\}$\; }
    }
	\Return $\nlv$
\end{function}
\caption{Implementation of compatible search.}
\label{alg:comp-search}
\end{figure}

In compatible search, an iteration of the while loop is a {\em search step}. 
The step does a {\em forward traversal} of the arc $(u, x)$ that it deletes from
$A_F$ and a {\em backward traversal} of the arc $(y, z)$ that it deletes from $A_B$. 
The choice of which pair of arcs to traverse is arbitrary, as long as they are
compatible.  If the addition of $(v, w)$ creates a cycle, it is possible for a
single arc $(u, z)$ to be added to both $A_F$ (when $u$ becomes forward) and to
$A_B$ (when $z$ becomes backward). It is even possible for such an arc to be
traversed both forward and backward in the same search step, but if this happens
it is the last search step.  Such a double traversal does not affect the
correctness of the algorithm.  Unlike limited search, compatible search can
visit unaffected vertices (those less than $w$ or greater than $v$ in
topological order), but this does not affect correctness, only efficiency.  If
the search returns null, restore topological order as follows.  Let $t =
\min(\{v\} \cup \{u| \exists (u, x) \in A_F \})$.  Let $F_< = \{x \in F | x < t
\}$ and $B_> = \{y \in B | y > t\}$.  If $t = v$, reorder as in limited search
(Section~\ref{sec:lim-search}): move all vertices in $F_<$ just after $t$.  (In
this case $B_> = \es$.)  Otherwise $(t < v)$, move all vertices in $F_<$ just
before $t$ and all vertices in $B_>$ just before all vertices in $F_<$.  In
either case, order the vertices within $F_<$ and within $B_>$ topologically. 
Figure~\ref{fig:comp-search} illustrates compatible search and reordering. 

\begin{figure}
\centering
\includegraphics[scale=0.45]{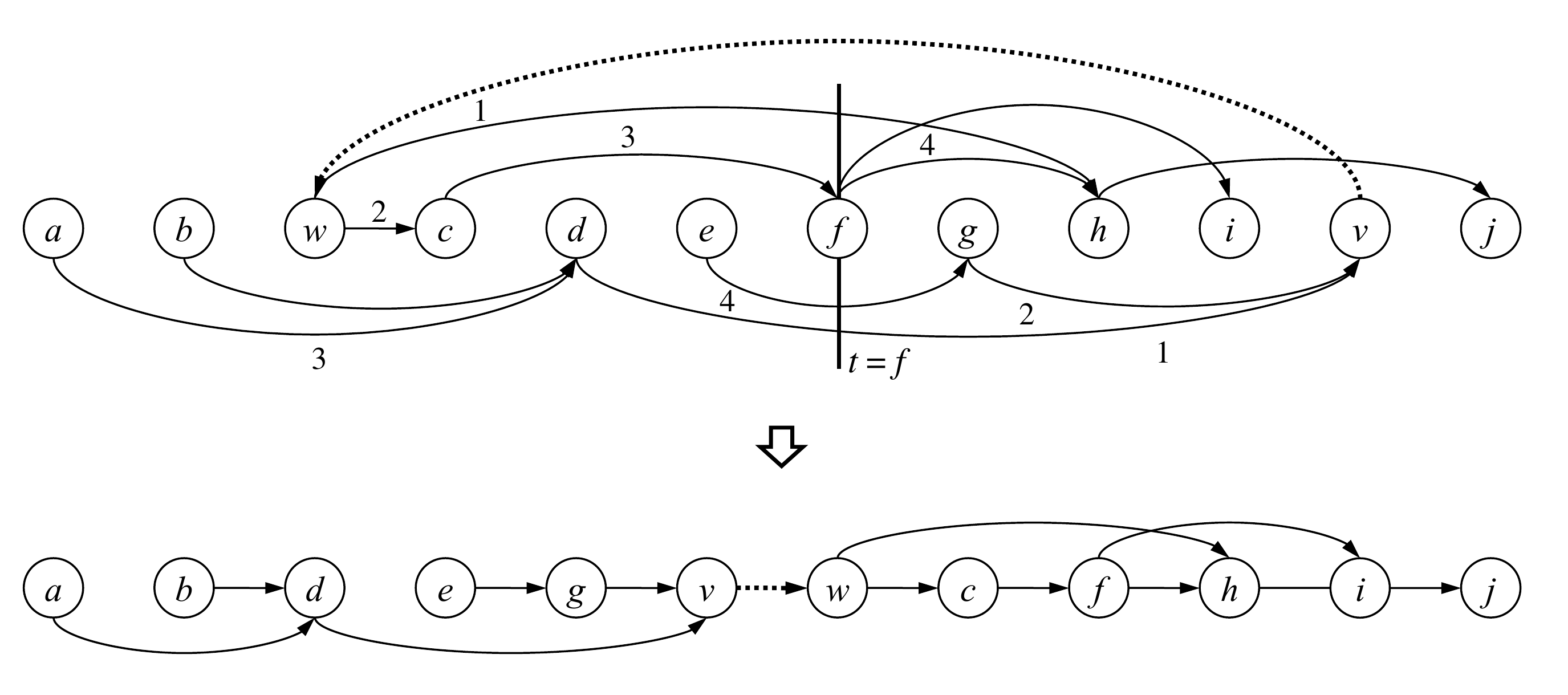}
\caption{
Compatible search of the graph in Figure~\ref{fig:lim-search} and restoration of
topological order. Traversed arc pairs are numbered in order of traversal. 
Forward vertices are $w$,$c$,$f$,$h$,$i$; backward vertices are
$v$,$d$,$g$,$e$,$a$.  After the search, $A_F = \{(f,i), (h,j)\}$; $t = f$; $F_<
= \{w,c\}$; $B_> = \{v,g\}$.  Reordering moves vertices in $F_<$ just before
$t$ and all vertices in $B_>$ just before those in $F_<$, arranging each
internally in topological order.}
\label{fig:comp-search}
\end{figure}

\begin{theorem}
\label{thm:2way-corr}
Compatible search correctly detects cycles and maintains a topological order.

\end{theorem}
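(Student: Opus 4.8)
The plan is to split the theorem into two assertions. \emph{Soundness}: whenever compatible search returns an arc, that arc is different from $(v,w)$ and lies on a cycle of the graph with $(v,w)$ added. \emph{Completeness and reordering correctness}: whenever the search returns $\nlv$, the prescribed rearrangement converts the old topological order into a topological order of the new graph, so the new graph is acyclic. Since each search step deletes an arc from $A_F$ and only $\bigO(m)$ arcs are ever inserted into $A_F$, the search halts, returning either an arc or $\nlv$; granting the two assertions the theorem follows. If adding $(v,w)$ creates a cycle, the search cannot return $\nlv$ (completeness would then exhibit a topological order of a cyclic graph), so by soundness it returns an arc on a cycle; if no cycle is created, soundness forbids returning an arc, so the search returns $\nlv$ and the reordering gives the new topological order. (When $v<w$ nothing is done and the order is unchanged.)

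For soundness I would first record several invariants, each proved by a routine induction on search steps using that the order before the addition is topological and $w<v$: every vertex that ever enters $F$ is reachable from $w$ along original arcs and has value at least that of $w$; every vertex that ever enters $B$ reaches $v$ along original arcs and has value at most that of $v$; the tails of the arcs in $A_F$ are exactly the vertices of $F$ having an untraversed out-arc, and the heads of the arcs in $A_B$ are exactly the vertices of $B$ having an untraversed in-arc; $w$ never enters $B$ and $v$ never enters $F$; and $F\cap B=\emptyset$ unless something has been returned. This last invariant holds because a vertex is tested for membership in the opposite set just before it could be added to either set, so any violation causes an immediate return; the one borderline case, a single arc lying in both $A_F$ and $A_B$ and chosen for both directions in one step, forces the tested head to be in $B$ and hence an immediate return, as the text notes. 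Given these invariants, if the search returns $(u,x)$ with $x\in B$, then $w$ reaches $u$, both $(u,x)$ and $(v,w)$ are arcs, and $x$ reaches $v$, so there is a cycle; moreover $u\ne v$ since $v\notin F$, so the returned arc is not $(v,w)$. The symmetric return $(y,z)$ with $y\in F$ is handled the same way, with $z\ne w$ since $w\notin B$.

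For the reordering, assume $\nlv$ is returned, let $<$ be the old order, and recall $t=\min(\{v\}\cup\{u\mid\exists(u,x)\in A_F\})$, $F_<=\{x\in F\mid x<t\}$, $B_>=\{y\in B\mid y>t\}$. The engine of the proof is two saturation facts forced by the stopping condition. Because no compatible pair remains, the smallest tail of an arc of $A_F$ is at least the largest head of an arc of $A_B$; and every head of an arc of $A_B$ is a vertex of $B$ and so has value at most that of $v$. Combining, every arc of $A_B$ has head of value at most that of $t$ and every arc of $A_F$ has tail of value at least that of $t$. Hence a vertex $x\in F_<$, having value below $t$, is the tail of no arc of $A_F$, so all its out-arcs have been traversed, and each forward traversal of an out-arc of $x$ put that arc's head into $F$; symmetrically, every in-arc of a vertex $y\in B_>$ has been traversed backward, and each such traversal put that arc's tail into $B$. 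In brief: out-neighbours of $F_<$-vertices lie in $F$, and in-neighbours of $B_>$-vertices lie in $B$.

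It remains to verify $a<'b$ for every arc $(a,b)$ of the new graph, where $<'$ lists the unmoved vertices of value below $t$, then $B_>$, then $F_<$, then $t$, then the unmoved vertices of value above $t$, with $F_<$ and $B_>$ each internally arranged in a topological order of its induced subgraph (such orders exist since these are subgraphs of the old acyclic graph, and since $v\notin F_<$ and $w\notin B_>$ none of their arcs is $(v,w)$); when $t=v$ this degenerates to the limited-search reordering, with $B_>$ empty and $F_<$ placed just after $v$. The new arc $(v,w)$ causes no trouble: $v\in B_>$ or $v=t$, and $w\in F_<$ or $w=t$, so $v<'w$. For an old arc $(a,b)$, i.e.\ $a<b$, every situation in which $a$ would land in a later $<'$-block than $b$ is impossible: an arc whose head lies in $B_>$ has its tail in $B$, so it cannot come from an unmoved vertex, from $t$ (which lies in $F$ whenever $t<v$), or from $F_<\subseteq F$, since $F\cap B=\emptyset$; an arc whose tail lies in $F_<$ has its head in $F$, so it cannot point to an unmoved vertex of value below $t$ nor into $B_>\subseteq B$; and any would-be backward arc between $F_<$ and $B_>$ or incident to $t$ contradicts $a<b$, because every element of $F_<$ has value below $t$ and every element of $B_>$ has value above $t$. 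Every remaining arc lies inside one block, where it respects the block's order, or runs from an earlier block to a later one. The arc-by-arc check of this last paragraph is the step I expect to be delicate: the difficulty is to enumerate precisely the crossing arcs that must be ruled out and to confirm that the two saturation facts together with $F\cap B=\emptyset$ rule out each of them, all while keeping straight the block order and the split between $t<v$ and $t=v$. By comparison the invariants and the soundness argument are routine.
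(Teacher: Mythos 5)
Your soundness argument and your block-by-block verification of the reordering coincide in substance with the paper's (its five-case analysis is exactly your two ``saturation facts'' plus $F\cap B=\emptyset$), but your treatment of completeness is genuinely different: the paper proves directly that if adding $(v,w)$ creates a cycle, a witnessing path $P$ from $w$ to $v$ always leaves a compatible pair available (either two distinct arcs of $P$, the $A_F$ one preceding the $A_B$ one, or a single arc of $P$ in both sets), so the loop cannot exit normally; you instead deduce completeness from your second assertion, arguing that a $\nlv$ return would exhibit a topological order and hence acyclicity. That inversion is elegant, but it forces the second assertion to hold \emph{without} assuming the new graph is acyclic, and that is exactly where your proof breaks. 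The invariant ``$F\cap B=\emptyset$ unless something has been returned'' does not follow from your justification: the head $x$ of the forward arc is tested against $B$, and the tail $y$ of the backward arc is tested against $F$, both \emph{before} either insertion, so when $x=y$ and this vertex is new to both sets the step inserts it into $F$ and into $B$ with no return. You anticipated only the collision in which one arc is traversed in both directions; the dangerous collision is two distinct arcs sharing that vertex. Concretely, with old order $w<x<v$ and arcs $(w,x),(x,v)$, adding $(v,w)$ forces the first step to traverse $(w,x)$ forward and $(x,v)$ backward (the only compatible pair); afterward $x\in F\cap B$, $A_F=\{(x,v)\}$, $A_B=\{(w,x)\}$, no compatible pair remains, and the search returns $\nlv$ although a cycle exists. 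So the second assertion is false as stated, and with it your derivation of cycle detection collapses precisely in the case it is meant to cover.

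To repair the argument you must confront this collision explicitly: either prove completeness directly from an invariant on a witnessing path, as the paper does (note that the paper flags only the same-arc double traversal, so this configuration is genuinely the delicate point of the whole theorem), or show that whenever $F\cap B$ becomes nonempty the search is still guaranteed to return a non-null arc, which the example above shows is not automatic. Everything else in your write-up is sound: the reachability invariants, the two saturation facts, the placement of $(v,w)$, and the elimination of backward cross-block arcs all go through in the acyclic case, where $F\cap B=\emptyset$ does hold for the right reason, namely that a common vertex of $F$ and $B$ would itself witness a path from $w$ to $v$ and hence a cycle.
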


\begin{proof}
The algorithm maintains the invariant that every forward vertex is reachable from $w$
and $v$ is reachable from every backward vertex.  Thus if $(u, x)$ with $x \in
B$ is traversed forward, there is a cycle consisting of a path from $w$ to $u$,
the arc $(u, x)$, a path from $x$ to $v$, and the arc $(v, w)$.  Symmetrically,
if $(y,z)$ with $y \in F$ is traversed backward, there is a cycle.  Thus if the
algorithm reports a cycle, there is one.

Suppose the addition of $(v, w)$ creates a cycle.  Such a cycle consists of a
pre-existing path $P$ from $w$ to $v$ and the arc $(v, w)$.  The existence of $P$
implies that $v > w$, so the addition of $(v, w)$ will trigger a search.  The
search maintains the invariant that either there are distinct arcs $(u, x)$ and
$(y, z)$ on $P$ with $x \le y$, $(u, x)$ is in $A_F$, and $(y, z)$ is in $A_B$,
or there is an arc $(u, z)$ in both $A_F$ and $A_B$.  In either case there is
a compatible arc pair, so the search can only stop by returning a non-null
arc.  Thus if there is a cycle the algorithm will report one.

It remains to show that if $v > w$ and the addition of $(v, w)$ does not create a cycle,
then the algorithm restores topological order.  This is a case analysis.  First
consider $(v, w)$.  If $t = v$, then $w$ is in $F_<$.  If $t < v$, then $v$ is in $B_>$ and
$w$ is in $\{t\} \cup F_<$.  In either case, $v$ precedes $w$ after the reordering.

Second, consider an arc $(x, y)$ other than $(v, w)$.  Before the reordering $x < y$; we
must show that the reordering does not reverse this order.  There are five cases:

Case 1: neither $x$ nor $y$ is in $F_< \cup B_>$.  Neither $x$ nor $y$ is reordered.

Case 2: $x$ is in $F_<$.  Vertex $y$ must be forward.  If $y < t$ then $y$ is in $F_<$, and the
order of $x$ and $y$ is preserved because the reordering within $F_<$ is topological.  If $y
= t$, then $t < v$, so the reordering inserts $x$ before $t = y$.  If $y > t$, the reordering
does not move $y$ and inserts $x$ before $y$.

Case 3: $y$ is in $F_<$ but $x$ is not.  Vertex $x$ is not moved, and $y$ follows $x$ after the
reordering since vertices in $F_<$ are only moved higher in the order.

Case 4: $y$ is in $B_>$.  Vertex $x$ must be backward.  Then $x \ne t$, since $x = t$ would
imply $t = v$ (since $x$ is backward) and $y > v$, which is impossible.  If $x > t$ then $x$ is
in $B_>$, and the order of $x$ and $y$ is preserved because the reordering within $B_>$ is
topological.  If $x < t$, the reordering does not move $x$ and inserts $y$ after $x$.

Case 5: $x$ is in $B_>$ but $y$ is not. Vertex $y$ is not moved, and $y$ follows $x$ after the
reordering since vertices in $B_>$ are only moved lower in the order.

We conclude that the reordering restores topological order.  \qed

\end{proof}

A number of implementation details remain to be filled in.  Before doing this, we
prove the key result that bounds the efficiency of two-way compatible search: the
total number of arc traversals over $m$ arc additions is $\bigO(m^{3/2})$.  To
prove this, we extend the notion of relatedness used in Section~\ref{sec:lim-search} to arc
pairs: two distinct arcs are {\em related} if they are on a common path.  Relatedness
is symmetric: the order in which the arcs occur on the common path is irrelevant.
(In an acyclic graph only one order is possible, but in a graph with cycles both
orders can occur, on different paths.)  The following lemma is analogous to
Lemma~\ref{lem:lim-search-rel}:

\begin{lemma}
\label{lem:2way-search-rel}
Suppose the addition of $(v, w)$ triggers a search but does not create a cycle.  Let
$(u, x)$ and $(y, z)$, respectively, be compatible arcs traversed forward and backward
during the search, not necessarily during the same search step.  Then $(u, x)$ and $(y,
z)$ are unrelated before the addition of $(v, w)$ but are related after the addition.

\end{lemma}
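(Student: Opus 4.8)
The plan is to mirror the structure of the proof of Lemma~\ref{lem:lim-search-rel}, handling the ``after'' direction first because it is immediate, then doing the ``before'' direction by contradiction using compatibility. For the ``after'' direction: by the reachability invariant established in the proof of Theorem~\ref{thm:2way-corr}, once $(u,x)$ is traversed forward we know $u$ (and hence the tail $u$ of the arc) is reachable from $w$, and once $(y,z)$ is traversed backward we know $v$ is reachable from $z$. So after adding $(v,w)$ there is a path that goes from $u$ along $(u,x)$, then $x \rightsquigarrow w$? No — rather: a path from $x$ to $v$ is not what we have. Let me instead use: $x$ is forward, so $x$ is reachable from $w$; also $y$ is backward, so $v$ is reachable from $y$. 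Hence after the addition there is a path $w \rightsquigarrow$ (through $(u,x)$ if we start at $u$) and $\ldots$ — the clean statement is: there is a path consisting of $(u,x)$, then a path from $x$ to $v$? We do not directly have $x \rightsquigarrow v$. The right assembly is: $u$ is reachable from $w$ and $x$ is reachable from $w$; $y$ reaches $v$ and $z$ reaches $v$; and $(v,w)$ is present. So concatenate: start at $y$, traverse $(y,z)$, go $z \rightsquigarrow v$, take arc $(v,w)$, go $w \rightsquigarrow u$, traverse $(u,x)$. This is a single walk containing both arcs, and it is a path provided the pieces are vertex-disjoint; even if not, a path containing both arcs can be extracted, or — more carefully — relatedness only requires a common path, and standard shortcutting of a closed-walk-free walk gives one. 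Since the addition does not create a cycle, the relevant reachabilities cannot close a cycle, so the walk can be reduced to a genuine path containing both $(u,x)$ and $(y,z)$. Thus $(u,x)$ and $(y,z)$ are related after the addition.

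For the ``before'' direction, suppose for contradiction that $(u,x)$ and $(y,z)$ lie on a common path $Q$ in the graph before $(v,w)$ is added. Since the pre-addition order $<$ is topological, all arcs on $Q$ go forward in $<$, so the order of $(u,x)$ and $(y,z)$ along $Q$ is forced: because $(u,x)$ and $(y,z)$ are compatible, $u < z$, and I will argue this forces $(u,x)$ to precede $(y,z)$ on $Q$ (if $(y,z)$ came first, then $z \le u$ along $Q$ by topological monotonicity, contradicting $u < z$; here I use that distinct arcs on a path with the tail of one weakly preceding the head of the other, combined with the path's monotone vertex labels, pins down the relative order). Hence $Q$ contains a subpath from $x$ to $y$, so there is a path from $x$ to $y$ in the pre-addition graph. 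Now combine with the reachability facts from the forward and backward searches: $x$ is reachable from $w$ (it became forward), and $v$ is reachable from $y$ (it became backward). Chaining $w \rightsquigarrow x \rightsquigarrow y \rightsquigarrow v$ gives a pre-existing path from $w$ to $v$, so adding $(v,w)$ creates a cycle — contradicting the hypothesis. Therefore $(u,x)$ and $(y,z)$ are unrelated before the addition.

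The one step that needs care, and which I expect to be the main obstacle, is the claim in the ``before'' direction that compatibility ($u<z$) forces $(u,x)$ to precede $(y,z)$ on the common path $Q$. The subtlety is that ``$(u,x)$ and $(y,z)$ are on a common path'' a priori allows either order, and I must rule out the order $(y,z),\ldots,(u,x)$. In that order, reading $Q$ from $z$ onward to $u$, topological monotonicity of $<$ along $Q$ gives $z < u$ if there is at least one arc strictly between $z$'s occurrence and $u$'s occurrence, or $z = u$ if $(y,z)$ and $(u,x)$ are consecutive with $z$ the shared vertex; either way $z \le u$, flatly contradicting $u < z$. I also need to separately dispose of the degenerate possibility that $(u,x)$ and $(y,z)$ are the same arc — but they are explicitly distinct (one is traversed forward, one backward, and compatibility with oneself would require $u < x$, so even a coincidental equality $u=y, x=z$ is excluded once $u<z$ is noted). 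Handling the double-traversal corner case mentioned before the lemma (a single arc $(u,z)$ in both $A_F$ and $A_B$) is also worth a sentence: in that scenario a cycle is being created, so the lemma's hypothesis does not apply and nothing needs to be checked. With these edge cases dispatched, the argument is a direct adaptation of Lemma~\ref{lem:lim-search-rel}.
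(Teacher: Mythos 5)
Your proof is correct and follows essentially the same route as the paper's: the same explicit path $(y,z)$, $z\rightsquigarrow v$, $(v,w)$, $w\rightsquigarrow u$, $(u,x)$ for the ``after'' direction (acyclicity after the addition makes this walk a genuine path), and the same contradiction for the ``before'' direction, where compatibility $u<z$ forces $(u,x)$ to precede $(y,z)$ on the common path and hence yields a pre-existing $w$-to-$v$ path and a cycle. One small blemish: your parenthetical claim that coincidence of the two arcs is excluded because compatibility would force $u<x$ is vacuous (every arc satisfies $u<x$ under a topological order); the correct and sufficient reason --- which you also give in the next clause --- is that a single arc traversed both forward and backward has its tail forward and its head backward, which would create a cycle, exactly as the paper argues.
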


\begin{proof}
Since adding $(v, w)$ does not create a cycle, $(u, x)$ and $(y, z)$ must be
distinct.  Suppose $(u, x)$ and $(y, z)$ were related before the addition of $(v,
w)$.  Let $P$ be a path containing both.  The definition of compatibility is $u < z$.
But $u < z$ implies that $(u, x)$ precedes $(y, z)$ on $P$.  Since $u$ is forward and
$z$ is backward, the addition of $(v, w)$ creates a cycle, consisting of a path from
$w$ to $u$, the part of $P$ from $u$ to $z$, a path from $z$ to $v$, and the arc $(v,
w)$.  This contradicts the hypothesis of the lemma.  Thus $(u, x)$ and $(y, z)$ are
unrelated before the addition of $(v, w)$.

After the addition of $(v, w)$, there is a path containing both $(u, x)$ and $(y,
z)$, consisting of $(y, z)$, a path from $z$ to $v$, the arc $(v, w)$, a path from
$w$ to $u$, and the arc $(u, x)$.  Thus $(u, x)$ and $(y, z)$ are related after the
addition.  \qed

\end{proof}

\begin{theorem}
\label{thm:2way-search-arcs}
Over $m$ arc additions, two-way compatible search does at most $4m^{3/2} + m+1$
arc traversals.

\end{theorem}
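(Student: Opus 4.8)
The plan is to bound the total number of \emph{search steps}, since each step of compatible search does exactly two arc traversals, and then add a crude bound for the single search (if any) that detects a cycle. Within one search, an arc enters $A_F$ only when its tail first becomes forward and enters $A_B$ only when its head first becomes backward, so in one search each arc is traversed at most once forward and at most once backward; hence the cycle‑detecting search contributes at most $2m$ traversals, and we may focus on the unsuccessful (non‑cycle) searches. For the $i$‑th unsuccessful search let $k_i$ be its number of steps (so it does $2k_i$ traversals). I would show $\sum_i k_i = \bigO(m^{3/2})$ by a relatedness‑counting argument based on Lemma~\ref{lem:2way-search-rel}.

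The key is that one unsuccessful search with $k_i$ steps creates about $k_i^2/2$ arc pairs that are unrelated before its triggering addition and related after. Each of the $k_i$ steps traverses a compatible pair $(a_j,b_j)$ with $\mathrm{tail}(a_j)<\mathrm{head}(b_j)$, where $a_j$ is traversed forward and $b_j$ backward. In a non‑cycle search no arc is ever both forward‑ and backward‑traversed (such an arc $(u,z)$ with $u$ forward and $z$ backward would witness a cycle), and no arc is traversed twice in the same direction; hence the $k_i$ forward‑traversed arcs are distinct, the $k_i$ backward‑traversed arcs are distinct, and the two sets are disjoint, so $a_j\mapsto b_j$ is a perfect matching between them all of whose pairs are compatible. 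A standard exchange (rearrangement) argument then shows that if we list the forward‑traversed arcs in increasing order of tail and the backward‑traversed arcs in increasing order of head, the $i$‑th entries of the two lists still form a compatible pair; consequently the $i$‑th forward‑traversed‑by‑tail arc is compatible with the $j$‑th backward‑traversed‑by‑head arc for every $i\le j$, giving at least $\binom{k_i+1}{2}\ge k_i^2/2$ compatible forward/backward arc pairs. By Lemma~\ref{lem:2way-search-rel} every such compatible pair is unrelated before the addition and related after it, so this search creates at least $k_i^2/2$ newly related arc pairs.

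To finish, note that relatedness is monotone (adding arcs never destroys a path through two arcs), so a given pair of arcs can be newly related in at most one search, and the number of related pairs never exceeds $\binom{m}{2}$. Therefore $\sum_i k_i^2/2 \le \binom{m}{2} < m^2/2$, i.e. $\sum_i k_i^2 < m^2$. There are at most $m$ arc additions, hence at most $m$ searches, so by Cauchy--Schwarz $\sum_i k_i \le \sqrt{\,m\sum_i k_i^2\,} < m^{3/2}$, and the unsuccessful searches do fewer than $2m^{3/2}$ traversals. Adding the cycle‑detecting search yields at most $2m^{3/2}+2m \le 4m^{3/2}+m+1$ traversals.

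\textbf{Main obstacle.} The crux is the middle step: the naive bound uses only the $k_i$ step‑pairs, which gives merely $\sum_i k_i \le \binom m2$ and hence a useless $\bigO(m^2)$ bound on traversals; the quadratic improvement is essential and comes precisely from the exchange argument that promotes one compatible perfect matching into a full triangular family of $\Theta(k_i^2)$ compatible pairs. Some care is also needed in the bookkeeping that makes this work — verifying that within a non‑cycle search the forward‑traversed and backward‑traversed arc sets are disjoint and each has size exactly $k_i$, and that ``newly related'' events are counted against distinct arc pairs.
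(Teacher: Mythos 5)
Your proof is correct and follows essentially the same route as the paper's: both use Lemma~\ref{lem:2way-search-rel} to show that an unsuccessful search with $k$ steps creates $\Omega(k^2)$ newly related compatible arc pairs and then sum against the $\binom{m}{2}$ bound on the total number of related pairs. The only cosmetic differences are that you obtain the triangular family of compatible pairs via a rearrangement lemma after sorting both arc sets (the paper just compares each forward arc, sorted by tail, with the twins of all later forward arcs, which needs no re-sorting or matching argument), and you finish with Cauchy--Schwarz where the paper splits searches into small and big — two interchangeable ways to pass from $\sum_i k_i^2 \le m^2$ to $\sum_i k_i \le m^{3/2}$.
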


\begin{proof}
Only the last arc addition can create a cycle; the corresponding search does at most
$m + 1$ arc traversals. (One arc may be traversed twice.)  Consider any search
other than the last. Let $A$ be the set of arcs traversed forward during the search.  Let $k$ be the number of arcs in $A$.
Each arc $(u, x)$ in $A$ has a distinct {\em twin} $(y, z)$ that was traversed
backward during the search step that traversed $(u, x)$.  These twins are compatible;
that is, $u < z$.  Order the arcs $(u, x)$ in $A$ in non-decreasing order on
$u$. Each arc $(u, x)$ in $A$ is compatible not only with its own twin but also with the
twin of each arc $(q, r)$ following $(u, x)$ in the order within $A$, because if $(y,
z)$ is the twin of $(q, r)$, $u \le q < z$.  Thus if $(u, x)$ is $i^{\text{th}}$ in
the order within $A$, $(u, x)$ is compatible with at least $k - i + 1$ twins of
arcs in $A$.  By Lemma~\ref{lem:2way-search-rel}, each such compatible pair is unrelated
before the addition of $(v, w)$ but is related after the addition.  Summing over all
arcs in $A$, we find that the addition of $(v, w)$ increases the number of related
arc pairs by at least $k(k + 1)/2$.

Call a search other than the last one {\em small} if it does no more than $2m^{1/2}$ arc
traversals and {\em big} otherwise.  Since there are at most $m$ small searches, together
they do at most $2m^{3/2}$ arc traversals.  A big search that does $2k$ arc traversals is
triggered by an arc addition that increases the number of related arc pairs by at
least $k(k + 1)/2 > km^{1/2}/2$.  Since there are at most ${m \choose 2} <
m^2/2$ related arc pairs, the total number of arc traversals during big searches is at most
$2m^{3/2}$.  \qed

\end{proof}

The example in Figure~\ref{fig:comp-search} illustrates the argument in the
proof of Theorem~\ref{thm:2way-search-arcs}.  The arcs traversed forward,
arranged in non-decreasing order by first vertex, are $(w, h)$ with twin $(d,
v)$, $(w, c)$ with twin $(g, v)$, $(c, f)$ with twin $(a, d)$, and $(f, h)$ with
twin $(e, g)$.  Arc $(w, h)$ is compatible with the twins of all arcs in $A$,
$(w, c)$ is compatible with its own twin and those of $(c, f)$ and $(f, h)$,
$(c, f)$ is compatible with its own twin and that of $(f, h)$, and $(f, h)$ is
compatible with its own twin.  There can be other compatible pairs, and indeed
there are in this example, but the proof does not use them.

Our goal now is to implement two-way compatible search so that the time per arc
addition is $\bigO(1)$ plus $\bigO(1)$ per arc traversal.  By
Theorem~\ref{thm:2way-search-arcs}, this would give a time bound of $\bigO(m^{3/2})$
for $m$ arc additions.  First we discuss the graph representation, then the
maintenance of the topological order, and finally (in this and the next section) the
detailed implementation of the search algorithm.

We represent the graph using forward and backward incidence lists: each vertex has a
list of its outgoing arcs and a list of its incoming arcs, which we call the {\em
outgoing list} and {\em incoming list}, respectively.  Singly linked lists
suffice.  We denote by $\foutl(x)$ and $\finl(x)$ the first arc on the outgoing
list and the first arc on the incoming list of vertex $x$, respectively.  We
denote by $\noutl((x, y))$ and $\ninl((x, y))$ the arcs after $(x, y)$ on the
outgoing list of $x$ and the incoming list of $y$, respectively.  In each case,
if there is no such arc, the value is null.  Adding a new arc $(v, w)$ to this
representation takes $\bigO(1)$ time.  If the addition of an arc $(v, w)$
triggers a search, we can update the graph representation either before or
after the search: arc $(v, w)$ will never be added to either $A_F$ or $A_B$.

We represent the topological order by a dynamic ordered list.  (See
Section~\ref{sec:lim-search}.)  If adding $(v, w)$ leaves the graph acyclic but
triggers a search, we reorder the vertices after the search as follows.  Determine $t$.
Determine the sets $F_<$ and $B_>$.  Determine the subgraphs induced by the vertices in
$F_<$ and $B_>$.  Topologically sort these subgraphs using either of the two linear-time
static methods (repeated deletion of sources or depth-first search).  Move the
vertices in $F_<$ and $B_>$ to their new positions using dynamic ordered list deletions
and insertions.  The number of vertices in $F \cup B$ is at most two plus the number of
arcs traversed by the search.  Furthermore, all arcs out of $F_<$ and all arcs into $B_>$
are traversed by the search.  It follows that the time for the topological
sort and reordering is at most linear in one plus the number of arcs traversed,
not including the time to determine $t$.  We discuss how to determine $t$ after
presenting some of the details of the search implementation.

We want the time of a search to be $\bigO(1)$ plus $\bigO(1)$ per arc traversal.
There are three tasks that are hard to implement in $\bigO(1)$ time: (1) adding arcs
to $A_F$ and $A_B$ (the number of arcs added as the result of an arc traversal may
not be $\bigO(1)$), (2) testing whether to continue the search, and (3) finding
a compatible pair of arcs to traverse.

By making the search vertex-guided instead of arc-guided, we simplify all of these
tasks, as well as the determination of $t$.  We do not maintain $A_F$ and $A_B$
explicitly.  Instead we partition $F$ and $B$ into {\em live} and {\em dead}
vertices.  A vertex in $F$ is live if it has at least one outgoing untraversed
arc; a vertex in $B$ is live if it has at least one incoming untraversed arc;
all vertices in $F \cup B$ that are not live are dead.  For each
vertex $x$ in $F$ we maintain a {\em forward pointer} $\outl(x)$ to the first
untraversed arc on its outgoing list, and for each vertex $y$ in $B$ we maintain
a {\em backward pointer} $\inl(y)$ to the first untraversed arc on its incoming
list; each such pointer is null if there are no untraversed arcs.  We also
maintain the sets $F_L$ and $B_L$ of live vertices in $F$ and $B$, respectively.
When choosing arcs to traverse, we always choose a forward arc indicated by a
forward pointer and a backward arc indicated by a backward pointer. The test
whether to continue the search becomes ``$\min F_L < \max B_L$.''

When a new arc $(v,w)$ has $v > w$, do the search by calling\hspace{4pt}
\vertexguidedsearch{$v$,$w$}, where the function \vertexguidedsearch is defined
in Figure~\ref{alg:vertex-search}.  It uses an auxiliary macro \searchstep, defined in Figure~\ref{alg:search-step}, intended to be expanded in-line; each
return from \searchstep returns from \vertexguidedsearch as well.  If
\vertexguidedsearch{$v$,$w$} returns null, let $t = \min(\{v\} \cup \{x \in
F | \outl(x) \ne null\}$ and reorder the vertices in $F_<$ and $B_>$ as discussed above.

\begin{figure}
\setlength{\algomargin}{7.5em}
\begin{function}[H]
arc {\bf function} \vertexguidedsearch{{\rm vertex} $v$, {\rm vertex} $w$}\;
\Indp
    $F = \{w\}; $B = \{v\}; $\outl(w) = \foutl(w)$; $\inl(v) =
    \finl(v)$\; 
    \lIf {$\outl(w) = \nlv$} {$F_L = \es$} \lElse {$F_L = \{w\}$}\;
    \lIf {$\inl(v) = \nlv$} {$B_L = \es$} \lElse {$B_L = \{v\}$}\;
    \While {$\min F_L < \max B_L$} {
        choose $u \in F_L$ and $z \in B_L$ with $u < z$; \searchstep{u, z}\;
    }
    \Return $\nlv$\;
\end{function}
\caption{Implementation of vertex-guided search.}
\label{alg:vertex-search}
\end{figure}

\begin{figure}
\setlength{\algomargin}{5.5em}
\begin{procedure}[H]
{\bf macro} \searchstep{{\rm vertex} $u$, {\rm vertex} $z$}\;
\Indp
    $(u, x) = \outl(u)$; $(y, z) = \inl(z)$\; 
    $\outl(u) = \noutl((u,x))$; $\inl(z) = \ninl((y, z))$\;
    \lIf {$\outl(u) = \nlv$} {$F_L = F_L \sm \{u\}$}; \lIf {$\inl(z) = \nlv$}
    {$B_L = B_L \sm \{z\}$}\; 
    \lIf {$x \in B$} {\Return $(u, x)$} \lElseIf {$y \in F$} {\Return $(y,z)$}\;
    \If {$x \not\in F$} {
    	$F = F \cup \{x\}$; $\outl(x) = \foutl(x)$\;
    	\lIf {$\outl(x) \ne \nlv$} {$F_L = F_L \cup \{x\}$}\;
    }
   	\If {$y \not\in B$} {
   		$B = B \cup \{y\}$; $\inl(y) = \finl(y)$\;
   		\lIf {$\inl(y) \ne \nlv$} {$B_L = B_L \cup \{y\}$}
   	}
\end{procedure}
\caption{Implementation of a search step.}
\label{alg:search-step}
\end{figure}

If we represent $F$ and $B$ by singly linked lists and $F_L$ and $B_L$ by doubly
linked lists (so that deletion takes $\bigO(1)$ time), plus flag bits for each
vertex indicating whether it is in $F$ and/or $B$, then the time for a search
step is $\bigO(1)$. The time to determine $t$ and to reorder the vertices is at
most $\bigO(1)$ plus $\bigO(1)$ per arc traversal.

It remains to implement tasks (2) and (3): testing whether to continue the
search and finding a compatible pair of arcs to traverse.  In vertex-guided
search these tasks are related: it suffices to test whether $\min F_L < \max
B_L$; and, if so, to find $u \in F_L$ and $z \in B_L$ with $u < z$.  The
historical solution is to store $F_L$ and $B_L$ in heaps (priority queues),
$F_L$ in a min-heap and $B_L$ in a max-heap, and in each iteration of the
while loop to choose $u = \min F_L$ and $z = \max B_L$.  This guarantees that $u
< z$, since otherwise the continuation test for the search would have failed.
With an appropriate heap implementation, the test $\min F_L < \max B_L$ takes
$\bigO(1)$ time, as does choosing $u$ and $z$.  Each insertion into a heap takes
$\bigO(1)$ time as well, but each deletion from a heap takes $\bigO(\log n)$
time, resulting in an $\bigO(\log n)$ time bound per search step and an
$\bigO(m^{3/2}\log n)$ time bound for $m$ arc additions.

This method is in essence the algorithm of Alpern et al.~\cite{Alpern1990}, although
their algorithm does not strictly alternate forward and backward arc traversals, and
they did not obtain a good total time bound.  Using heaps but relaxing the
alternation of forward and backward arc traversals gives methods with slightly better
time bounds~\cite{Alpern1990,Katriel2006,Kavitha2007}, the best bound to date being
$\bigO(m^{3/2} + nm^{1/2}\log n)$~\cite{Kavitha2007}.  One can further reduce the
running time by using a faster heap implementation, such as those of van Emde
Boas~\cite{Boas1977b,Boas1977}, Thorup~\cite{Thorup2004}, and Han and
Thorup~\cite{Han2002}.  Our goal is more ambitious: to reduce the overall running
time to $\bigO(m^{3/2})$ by eliminating the use of heaps.  This we do in the next
section.

\section{Soft-Threshold Search} \label{sec:soft-search}

\SetKwFunction{softthresholdsearch}{Soft-Threshold-Search}

To obtain a faster implementation of vertex-guided search, we exploit the flexibility
inherent in the algorithm by using a {\em soft threshold} $s$ to help choose $u$ and
$z$ in each search step.  Vertex $s$ is a forward or backward vertex, initially $v$.
We partition the sets $F_L$ and $B_L$ into {\em active} and {\em passive} vertices.
Active vertices are candidates for the current search step, passive vertices are
candidates for future search steps.  We maintain the sets $F_A$ and $F_P$, and
$B_A$ and $B_P$, of active and passive vertices in $F_L$ and $B_L$,
respectively.  All vertices in $F_P$ are greater than $s$; all vertices in $B_P$ are less than $s$; vertices
in $F_A \cup B_A$ can be on either side of $s$.  Searching continues while
$F_A \ne \es$ and $B_A \ne \es$.  The algorithm chooses $u$ from $F_A$ and $z$
from $B_A$ arbitrarily.  If $u < z$, the algorithm traverses an arc out of $u$
and an arc into $z$ and makes each newly live vertex active.  If $u > z$, the
algorithm traverses no arcs.  Instead, it makes $u$ passive if $u > s$ and
makes $z$ passive if $z < s$; $u > z$ implies that at least one of $u$ and $z$
becomes passive.  When $F_A$ or $B_A$ becomes empty, the algorithm updates $s$
and the vertex partitions, as follows.  Suppose $F_A$ is empty; the updating is
symmetric if $B_A$ is empty.  The algorithm makes all vertices in $B_P$ dead,
makes $s$ dead if it is live, chooses a new $s$ from $F_P$, and makes active
all vertices $x \in F_P$ such that $x \le s$.

Here are the details of this method, which we call {\em soft-threshold search}.
When a new arc $(v, w)$ has $v > w$, do the search by calling
\softthresholdsearch{$v$,$w$}, where the function \softthresholdsearch is
defined in Figure~\ref{alg:soft-search}, and procedure \searchstep is defined
as in Figure~\ref{alg:search-step}, but with $F_A$ and $B_A$ replacing $F_L$ and
$B_L$, respectively.  If \softthresholdsearch{$v$,$w$} returns null, let $t =
\min(\{v\} \cup \{x \in F| \outl(x) \ne \nlv\}$ and reorder the vertices in
$F_<$ and $B_>$ as discussed above.  Figure~\ref{fig:soft-search}
illustrates soft-threshold search.

\begin{figure}
\setlength{\algomargin}{5.5em}
\begin{function}[H]
arc {\bf function} \softthresholdsearch{{\rm vertex} $v$, {\rm vertex} $w$}\;
\Indp
$F = \{w\}$; $B = \{v\}$; $\outl(w) = \foutl(w)$; $\inl(v) =
\finl(v)$; $s = v$\;
\lIf {$\outl(w) = \nlv$} {$F_A = \es$} \lElse {$F_A = \{w\}$}; $F_P = \es$\;
\lIf {$\inl(v) = \nlv$} {$B_A = \es$} \lElse {$B_A = \{v\}$}; $B_P = \es$\;
\While {$F_A \ne \es$ and $B_A \ne \es$} {
 	choose $u \in F_A$ and $z \in B_A$\; 
   	\lIf {$u < z$} {\searchstep{u, z}}
   	\Else {
   		\If {$u > s$} {
   			$F_A = F_A \sm \{u\}$; $F_P = F_P \cup \{u\}$\;
   		}
   		\If {$z < s$} {
   			$B_A = B_A \sm \{z\}$; $B_P = B_P \cup \{z\}$\;
   		}
   	}
   	\If {$F_A = \es$} {
   		$B_P = \es$; $B_A = B_A \sm \{s\}$\;
    	\If {$F_P \ne \es$} {
       		choose $s \in F_P$; $F_A = \{x \in F_P| x \le s\}$; $F_P = F_P \sm
       		F_A$\;
       	}
    }
    \If {$B_A = \es$} {
    	$F_P = \es$; $F_A = F_A \sm \{s\}$\;
       	\If {$B_P \ne \es$} {
           	choose $s \in B_P$; $B_A = \{x \in B_P| x \ge s\}$; $B_P = B_P
           	\sm B_A$\;
       	}
    }
}
\Return $\nlv$\;
\end{function}
\caption{Implementation of soft-threshold search.}
\label{alg:soft-search}
\end{figure}

\begin{figure}[h!]
\centering
\includegraphics[scale=0.45]{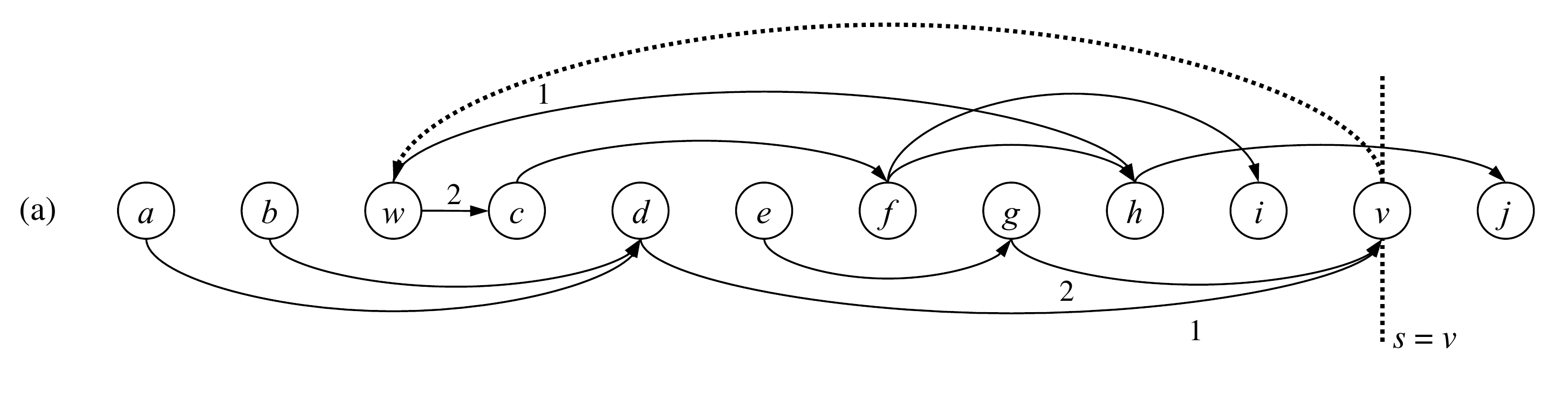}
\vspace{5pt}
\includegraphics[scale=0.45]{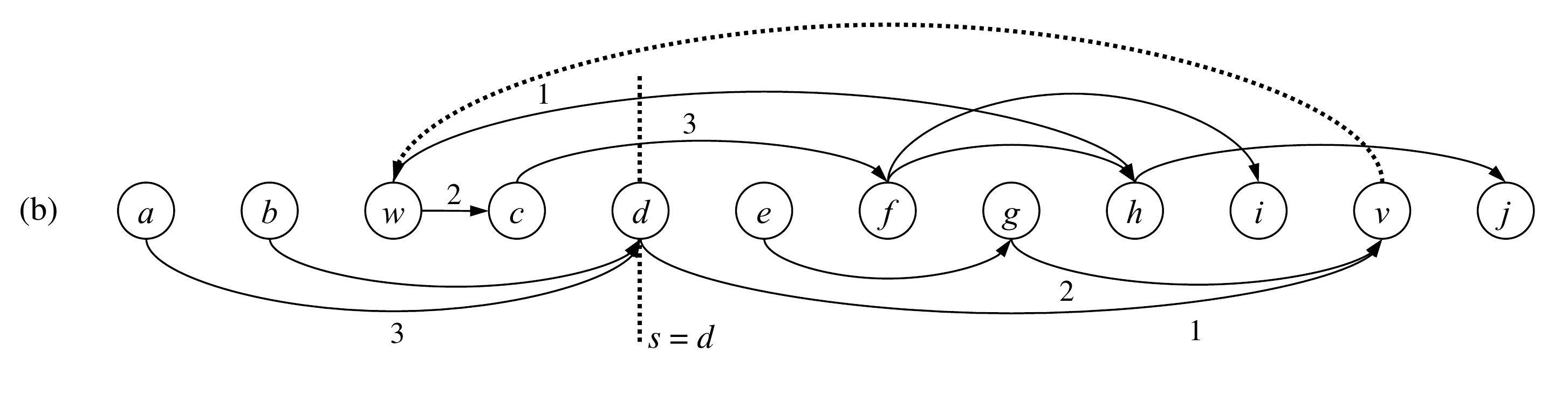}
\vspace{5pt}
\includegraphics[scale=0.45]{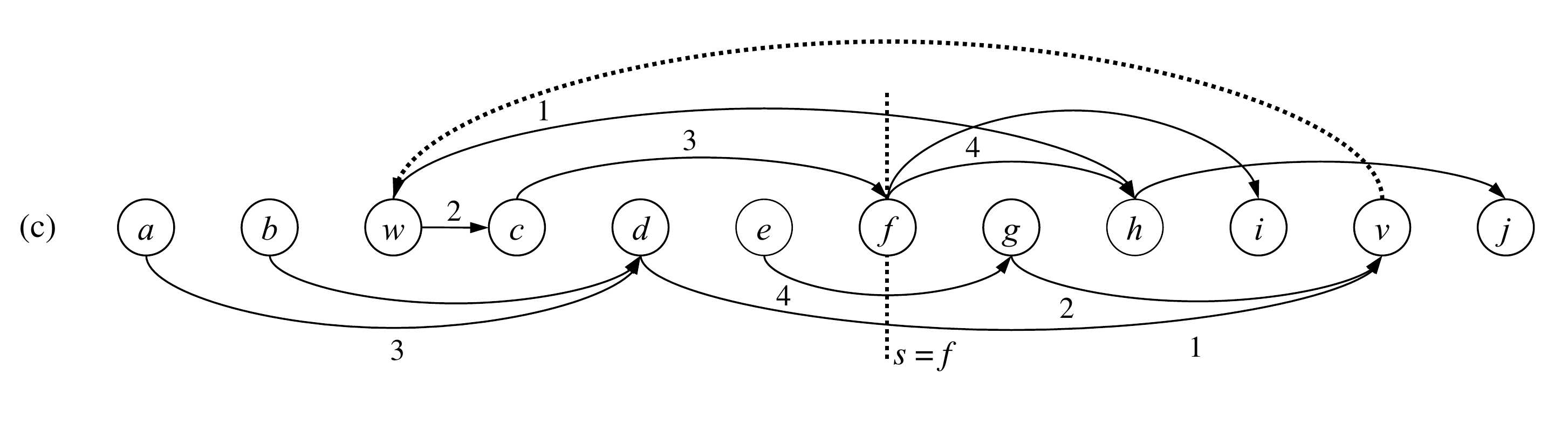}
\caption{
Soft-threshold search of the graph in Figure~\ref{fig:lim-search}.  Arc
traversal order is the same as in Figure~\ref{fig:comp-search}.  Initially $s =
v$, $F_A = \{w\}$, $B_A = \{v\}$. (a) Choosing $u = w$, $z = v$ twice causes
traversal of compatible pair $(w, h)$, $(d, v)$ followed by traversal of $(w, c)$, $(g, v)$. 
Now $F_A = \{h, c\}$, $B_A = \{d, g\}$.  Choice of $u = h$, $z = d$ moves $d$ to
$B_P$. Choice of $u = h$, $z = g$ moves $g$ to $B_P$, making $B_A$ empty.  (b)
New $s$ is $d$. Now $F_A = \{h, c\}$, $F_P = \es$, $B_A = \{d, g\}$, $B_P =
\es$. Choice of $u = h$, $z = d$ moves $h$ to $F_P$.  Choice of $u = c$, $z = d$
causes traversal of $(c, f)$, $(a, d)$, adding $f$ to $F_A$ and deleting $c$
from $F_A$.  (Vertex $a$ has no incoming arc, so it is not added to $B_A$.)
Choice of $u = f$, $z = d$ moves $f$ to $F_P$, making $F_A$ empty. (c) New $s$
is $f$. Now $F_A = \{f\}$, $F_P = \{h\}$, $B_A = \{g\}$, $B_P = \es$.  Choice
of $u = f$, $z = g$ causes traversal of $(f, h)$, $(e, g)$, deleting $g$ from
$B_A$ and making $B_A$ empty.  Since $B_P$ is also empty, the search ends.
 Reordering is the same as in Figure~\ref{fig:comp-search}. }
\label{fig:soft-search}
\end{figure}

Soft-threshold search is an implementation of vertex-guided search except that it
makes additional vertices dead, not just those with no incident arcs left to
traverse.  Once dead, a vertex stays dead.  We need to prove that this does not
affect the search outcome. First we prove that soft-threshold search terminates.

\begin{theorem}
\label{thm:soft-search-steps}
A soft-threshold search terminates after at most $n^2 + m + n$ iterations of
the while loop.

\end{theorem}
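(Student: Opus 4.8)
The plan is to partition the while-loop iterations into two kinds and bound each separately. Call an iteration a \emph{search step} if it invokes \searchstep (the case $u<z$) and a \emph{reordering step} otherwise (the case $u>z$, in which no arc is traversed but, since $u>z$ forbids $u\le s$ and $z\ge s$ simultaneously, at least one of $u,z$ is moved to a passive set). The cycle tests inside \searchstep fire before a vertex could be added to whichever of $F,B$ already contains it, so $F\cap B=\es$ holds throughout except possibly at the conclusion of the final search step; hence every iteration but the last is exactly one of these two kinds, and it suffices to bound each count. Bounding search steps is easy: each one advances the forward pointer $\outl(u)$ of its vertex $u\in F$ by one position, and since a vertex enters $F$ only once and its pointer only moves forward along its outgoing list, vertex $u$ is the chosen $u$ of at most $\mathrm{outdeg}(u)$ search steps, for a total of $\sum_u\mathrm{outdeg}(u)=m$.

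For the reordering steps, observe that each one causes at least one ``vertex moved to a passive set'' event, so their number is at most the total number of such events. I would organize the search into \emph{epochs}: maximal time intervals during which the threshold $s$ stays on one side (a \emph{forward epoch} if $s\in F$ throughout, a \emph{backward epoch} if $s\in B$). The search begins in a backward epoch ($s=v$), and $s$ changes sides only in the ``$F_A=\es$'' block (switching to a forward vertex) or the ``$B_A=\es$'' block (to a backward vertex), so epochs alternate. The two facts I would prove are: (i) there are at most $n+1$ epochs; and (ii) within a single epoch each vertex is moved to a passive set at most once. Granting these, the total number of passive-move events, and hence of reordering steps, is at most $(n+1)\cdot n=n^2+n$; adding the at most $m$ search steps yields the bound $n^2+m+n$.

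For (i): a switch from a backward epoch to a forward epoch executes $B_P:=\es$ and $B_A:=B_A\sm\{s\}$, permanently retiring the old threshold $s\in B$ --- it can never re-enter $B_A$ (it is already in $B$, so \searchstep will not re-add it, and it is not in $B_P$ to be reinstated), it is never in $F_P$, and hence it is never again a candidate threshold. Symmetrically a forward-to-backward switch retires a distinct vertex of $F$. Since $F\cap B=\es$, there are at most $|F|+|B|\le n$ switches and so at most $n+1$ epochs. For (ii), consider a forward epoch (the backward case is symmetric). Its successive thresholds $\sigma_1<\sigma_2<\cdots$ are strictly increasing: right after $\sigma_i$ is chosen, $F_P$ holds only vertices exceeding $\sigma_i$ (those $\le\sigma_i$ were moved into $F_A$), and every vertex later added to $F_P$ during that sub-phase also exceeds $\sigma_i$, so $\sigma_{i+1}\in F_P$ exceeds $\sigma_i$. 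Now if a reordering step moves a forward vertex $x$ to $F_P$ (so $x$ exceeds the current threshold), then $x$ can return to $F_A$ only at the next threshold update, and only if $x\le\sigma_{i+1}$; from then on the threshold stays $\ge\sigma_{i+1}\ge x$, so $x$ is never again above it and is never moved to $F_P$ again in the epoch. A backward vertex moved to $B_P$ during a forward epoch is instead deleted (by $B_P:=\es$) at the very next threshold update, so it too cannot be moved to $B_P$ twice in the epoch. Hence the forward vertices sent to $F_P$ and the backward vertices sent to $B_P$ within one epoch are distinct vertices, each sent once, for at most $|F|+|B|\le n$ moves.

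I expect the delicate part to be the epoch bookkeeping behind (ii): one must check that a passive set is merged back into its active set only at an epoch boundary (never mid-epoch), that a mid-epoch threshold update strictly advances the threshold while emptying the \emph{opposite} passive set, and that the corner cases --- the concluding step of the final search, an iteration in which both the ``$F_A=\es$'' and ``$B_A=\es$'' blocks execute, and zero-length epochs --- do not disturb the monotonicity or the retirement count. All of this reduces to a careful reading of how $s$, $F_P$, and $B_P$ are updated inside those two blocks.
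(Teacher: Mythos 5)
Your proof is correct and uses the same decomposition as the paper's: at most $m$ iterations that traverse arcs (charged to advances of the forward pointers, which sum to the total out-degree $m$), plus at most $n^2+n$ iterations that only passivate vertices. The two arguments differ only in how that second count is obtained. The paper's accounting is shorter: the number of passivations is at most the number of activations; a vertex is activated when first visited (at most $n$ such events) and otherwise only when $s$ changes; each change of $s$ makes the old $s$ dead, so $s$ changes at most $n$ times, and each change activates at most $n$ vertices, for $n+n^2$ activations in all. You instead partition the search into epochs delimited by side-switches of $s$ (at most $n+1$ of them, by essentially the same each-switch-retires-a-vertex observation) and prove that the thresholds within an epoch are strictly increasing, so that each vertex is passivated at most once per epoch. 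Both routes are sound and give the identical bound $n^2+m+n$; yours needs the extra monotonicity lemma and the corner-case checking you flag at the end (mid-epoch threshold updates, iterations in which both update blocks fire, the terminal step), whereas the paper's passivations-at-most-activations bookkeeping avoids tracking the threshold's trajectory inside an epoch altogether. If you want the lighter proof, count activations rather than epochs.
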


\begin{proof}
Each iteration either traverses one or two arcs or makes one or two vertices
passive. The number of times a vertex can become passive is at most the number of times it becomes
active.  Vertices become active only when they are visited (once per vertex) or when
$s$ changes.  Each time $s$ changes, the old $s$ becomes dead if it was not dead
already.  Thus $s$ changes at most $n$ times.  The number of times vertices become
active is thus at most $n + n^2$ (once per vertex visit plus once per vertex per
change in $s$).  \qed

\end{proof}

To prove correctness, we need two lemmas. 

\begin{lemma}
\label{lem:soft-search-passive}
If $x$ is a passive vertex, $x > s$ if $x$ is in $F_P$, $x < s$ if $x$ is in
$B_P$.

\end{lemma}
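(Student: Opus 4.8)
The plan is to establish the stated property as an invariant of the while loop in soft-threshold search, proved by induction on the operations that can modify $F_P$, $B_P$, or $s$. For the base case, note that the initialization sets $F_P = \es$ and $B_P = \es$, so both clauses hold vacuously. For the inductive step, I would inspect the body of the while loop and check that every statement that inserts a vertex into $F_P$ or $B_P$, and every statement that reassigns $s$, preserves the property.

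First I would dispose of the easy cases. A search step --- the branch taken when $u < z$ --- modifies only $F$, $B$, $F_A$, $B_A$, and the forward/backward pointers; it never inserts into $F_P$ or $B_P$ and never changes $s$, so the invariant is trivially preserved. In the branch taken when $u \geq z$, a vertex $u$ is placed into $F_P$ only under the guard $u > s$, and a vertex $z$ is placed into $B_P$ only under the guard $z < s$; these guards are exactly the two inequalities the lemma asserts, and $s$ is unchanged, so again the invariant survives.

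The one case that needs care is the reassignment of $s$, which occurs in the ``$F_A = \es$'' block and, symmetrically, in the ``$B_A = \es$'' block. Consider the ``$F_A = \es$'' block. Before any new $s$ is chosen, the block executes $B_P = \es$, which makes the $B_P$-clause of the invariant vacuously true no matter what the new $s$ is. A new $s$ is then taken from $F_P$, after which the block immediately sets $F_A = \{x \in F_P \mid x \leq s\}$ and removes those vertices from $F_P$; hence the vertices that remain in $F_P$ are precisely those exceeding the new $s$, and the $F_P$-clause holds. The ``$B_A = \es$'' block is the mirror image: it first empties $F_P$ and then leaves in $B_P$ exactly the vertices strictly below the new $s$. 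I would also remark that within one iteration the ``$F_A = \es$'' block may be followed by the ``$B_A = \es$'' block, since the former performs $B_A = B_A \sm \{s\}$ and can thereby empty $B_A$; but in that case $B_P$ has already been set to $\es$, so the inner test ``$B_P \ne \es$'' of the second block fails, $s$ is not reassigned a second time, and the second block's only other effect, $F_P = \es$, cannot violate the invariant. Composing these observations across all iterations yields the lemma.

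The hard part, to the extent there is one, is purely bookkeeping: one must confirm that $s$ is never reassigned while the passive set on the opposite side is still nonempty, and that the passive set on the same side is re-filtered against the new threshold. Both facts are visible directly in the code (the ``opposite'' passive set is emptied immediately before the new $s$ is picked, and the ``same'' passive set is pared down by the $\leq s$ / $\geq s$ test), so once they are pointed out the induction is routine.
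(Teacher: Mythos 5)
Your proof is correct and follows essentially the same route as the paper's: an induction over the loop iterations checking that the guards $u>s$ and $z<s$ establish the inequality when a vertex becomes passive, and that each reassignment of $s$ either empties the opposite passive set or re-filters the same-side passive set against the new threshold. The paper phrases this per passive vertex rather than as an explicit loop invariant, but the argument is the same; your extra remark about the two reassignment blocks both firing in one iteration is a harmless refinement.
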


\begin{proof}
If $x$ is a passive vertex, $x$ satisfies the lemma when it becomes passive, and it
continues to satisfy the lemma until $s$ changes.  Suppose $x$ is forward; the
argument is symmetric if $x$ is backward. If $s$ changes because $F_A$ is empty, $x$
becomes active unless it is greater than the new $s$.  If $s$ changes because
$B_A$ is empty, $x$ becomes dead.  The lemma follows by induction on the number of search
steps.  \qed

\end{proof}

\begin{lemma}
\label{lem:soft-search-live}
Let $A_F$ be the set of untraversed arcs out of vertices in $F$, let $A_B$
be the set of untraversed arcs into vertices in $B$, let $q = \min\{u| \exists (u, x) \in
A_F\}$, and let $r = \max\{z| \exists (y, z) \in A_B\}$.  Then $q$ and $r$ remain live vertices
until $q > r$.

\end{lemma}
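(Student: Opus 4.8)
The plan is to prove, by induction along the course of the search, an invariant $(\star)$ strictly stronger than the lemma: \emph{at every moment, every forward vertex that has been made dead by the extra soft‑threshold killing has index larger than the current $r$, and every backward vertex made dead by such killing has index smaller than the current $q$}. Here $q$ and $r$ are the quantities defined in the lemma, re‑evaluated at the current moment. The lemma is then immediate: by definition $q$ is a forward vertex that still owns an untraversed outgoing arc, so it is not dead for the honest reason of having exhausted its outgoing arcs; hence if $q$ were dead it was killed by soft‑thresholding, and $(\star)$ would force $q>r$, contradicting $q\le r$. The argument for $r$ is symmetric.

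Two preliminary facts power the induction. First, since $(v,w)$ is never inserted into $A_F$ or $A_B$, every arc that ever belongs to $A_F$ or $A_B$ obeys the topological order in force just before $(v,w)$ was added; so a forward traversal of $(u,x)$ has $u<x$ and makes $x$ forward, and the only new first coordinates entering $A_F$ (the outgoing arcs of $x$) start at $x>u\ge q$, while dually the only new second coordinates entering $A_B$ lie below the current $r$. Consequently $q$ is non‑decreasing and $r$ is non‑increasing throughout, and both change only at iterations that actually traverse arcs --- never at the bookkeeping steps (re‑partitioning $F_L,B_L$, clearing $F_P$ or $B_P$, retiring $s$) that do the soft‑killing, since those leave $F$, $B$, and the traversed/untraversed status of every arc unchanged. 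In particular ``$q\le r$'' holds exactly on an initial stretch of time. Second, I will invoke Lemma~\ref{lem:soft-search-passive}: at all times every vertex of $F_P$ exceeds $s$ and every vertex of $B_P$ lies below $s$.

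For the inductive step it suffices to examine the operations that create newly dead vertices; every other step either leaves the dead set alone or only moves $q$ up and $r$ down, which preserves $(\star)$ for vertices already dead. The soft‑kills are: when $F_A$ becomes empty, clearing $B_P$ and retiring $s$; when $B_A$ becomes empty, clearing $F_P$ and retiring $s$. Take the clearing of $F_P$: it runs only with $B_A=\es$, so every live backward vertex lies in $B_P$ and is $<s$, whereas a killed vertex $x\in F_P$ is $>s$; hence $x$ tops every live backward vertex. If the current $r$ is attained by a live backward vertex we are done; otherwise $r$ is attained by an already‑dead backward vertex $y$, which still owns an untraversed incoming arc, and the inductive hypothesis gives $y<q\le x$ --- the inequality $q\le x$ because $x$, owning an untraversed outgoing arc, is one of the first coordinates defining $q$ --- so again $x>r$. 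Clearing $B_P$ is the mirror image: it runs with $F_A=\es$, so every live forward vertex exceeds $s$ and each killed $y\in B_P$ is $<s$; if $q$ is attained by a live forward vertex then $q>s>y$, and if $q$ is attained by an already‑dead forward vertex $x'$ then $(\star)$ gives $y\le r<x'=q$. Finally, $s$ is retired only if it was live: if $s$ is retired as a backward vertex, then at that moment $F_A=\es$, so every live forward vertex exceeds $s$, and $s$ (just live as a backward vertex) has an untraversed incoming arc so $s\le r$; hence whether $q$ is attained by a live forward vertex (then $q>s$) or by an already‑dead forward vertex $x'$ (then $q=x'>r\ge s$), we get $s<q$. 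Retiring $s$ as a forward vertex is symmetric. Thus $(\star)$ is maintained, and the lemma follows.

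I expect the main subtlety to be the mutually‑referential nature of the two halves of $(\star)$: within a single while‑iteration the clearing of $B_P$ (inside the ``$F_A=\es$'' block) can precede the clearing of $F_P$ (inside the ``$B_A=\es$'' block), so verifying the forward clear seems to require $(\star)$ for backward vertices killed an instant earlier in the same iteration. This is handled by checking the operations in their actual execution order and noting that each verification uses $(\star)$ only as it stood at the start of the iteration; in particular the bound $y\le r$ used for a freshly killed backward vertex $y$ is valid precisely because killing traverses no arcs, so $y$ keeps an untraversed incoming arc and $r$ has not moved. One edge case deserves a remark: if adding $(v,w)$ creates a cycle, a vertex can be simultaneously forward and backward; but $(\star)$ is only needed while the reordering logic matters, i.e.\ up to the first detection of a cycle, and the induction above runs unchanged until that point.
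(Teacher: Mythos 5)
Your proof is correct, and it reaches the same crux as the paper's but packages it differently. The paper's proof is a one-shot ``first failure'' argument: assume (say) $q$ dies before $r$, observe that a forward vertex can only be soft-killed when $B_A=\es$, so the still-live $r$ must sit in $B_P$, and then Lemma~\ref{lem:soft-search-passive} gives $q\ge s>r$ in three lines. You instead strengthen the statement to the invariant $(\star)$ --- every soft-killed forward vertex exceeds the \emph{current} $r$, every soft-killed backward vertex is below the \emph{current} $q$ --- and maintain it by induction over every operation, using the monotonicity of $q$ and $r$ to push the invariant through arc traversals. What your version buys is rigor at exactly the points the paper's argument glosses over: the case where $q$ ``becomes dead'' not through a kill event but because the minimum shifts onto a vertex that was soft-killed earlier (your $(\star)$ handles this directly, since that vertex already exceeds the current $r$, forcing $q>r$), and the case where both clearing blocks fire within a single while-iteration (which you resolve by processing the kills in execution order and noting that $q$ and $r$ are frozen during bookkeeping). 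What the paper's version buys is brevity. Your preliminary facts (monotonicity of $q$ and $r$, and that bookkeeping steps never move them) are all verified correctly, and your observation that $F\cap B=\es$ until a cycle is detected makes the final edge case moot. The only stylistic quibble is that the mutual dependence of the two halves of $(\star)$ within one iteration, which you flag as the main subtlety, could be stated as a cleaner sub-induction on the sequence of individual kill events rather than on while-iterations; but as written the argument is sound.
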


\begin{proof}
If $q$ and $r$ remain live vertices until $q = \infty$ or $r = -\infty$, the lemma
holds.  Thus suppose $q$ dies before $r$ and before either $q = \infty$ or $r =
-\infty$. When $q$ dies, $q = s$ or $q$ is passive, and $B_A = \es$.  Since $r$
is still live, $r$ is passive.  By Lemma~\ref{lem:soft-search-passive}, $q \ge s
> r$.  The argument is symmetric if $r$ dies before $q$ and before either $q =
\infty$ or $r = -\infty$.  \qed

\end{proof}

\begin{theorem}
Soft-threshold search is correct.

\end{theorem}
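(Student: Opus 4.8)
The plan is to reduce everything to the correctness of ordinary compatible search (Theorem~\ref{thm:2way-corr}), so that the only thing needing a genuinely new argument is that the extra deaths introduced by soft-threshold search are harmless. First I would observe that soft-threshold search performs only legitimate compatible search steps: \searchstep{$u$,$z$} is invoked only when $u\in F_A\subseteq F_L$, $z\in B_A\subseteq B_L$, and $u<z$, so the forward arc $\outl(u)$ and backward arc $\inl(z)$ it traverses form a compatible pair, exactly as in vertex-guided search; moreover, since the same arcs are traversed as vertex-guided search would traverse given the same choices, the sets $F$, $B$ and the sets $A_F$, $A_B$ of untraversed arcs out of $F$ and into $B$ (as in Lemma~\ref{lem:soft-search-live}) evolve exactly as before. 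Hence the reachability invariants from the proof of Theorem~\ref{thm:2way-corr} carry over unchanged: if the algorithm returns a non-null arc there is a cycle, and if the addition of $(v,w)$ creates a cycle then, as long as no cycle has yet been reported, there is a pre-existing $w$-to-$v$ path $P$ together with either two distinct arcs $(a,b),(c,d)$ on $P$ with $b\le c$, $(a,b)\in A_F$, $(c,d)\in A_B$, or a single arc of $P$ lying in both $A_F$ and $A_B$.

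The one new ingredient is to pin down the state in which the while loop halts. Let $q=\min\{u\mid\exists(u,x)\in A_F\}$ and $r=\max\{z\mid\exists(y,z)\in A_B\}$, as in Lemma~\ref{lem:soft-search-live}. I would first show, by a short inspection of the two refill blocks at the end of the loop body, that the loop can exit only when $F_A\cup F_P=\es$ or $B_A\cup B_P=\es$: whenever the block triggered by ``$F_A=\es$'' leaves $F_A$ empty, $F_P$ was already empty or was just emptied, and symmetrically for $B$. Then, if the loop exited with $q\le r$, the vertex $q$ would be a forward vertex with an untraversed outgoing arc and $r$ a backward vertex with an untraversed incoming arc, so by Lemma~\ref{lem:soft-search-live} both would be live, contradicting $F_A\cup F_P=\es$ or $B_A\cup B_P=\es$. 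So the loop halts --- which it does, by Theorem~\ref{thm:soft-search-steps} --- only in a state with $q>r$, where we read $q=\infty$ if $A_F=\es$ and $r=-\infty$ if $A_B=\es$.

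From here the two halves of correctness drop out. If $(v,w)$ creates a cycle, then so long as no cycle has been reported the path-$P$ invariant supplies either arcs $(a,b)\in A_F$, $(c,d)\in A_B$ with $a<b\le c<d$ (arcs of $P$ respect the topological order in force before $(v,w)$ was added) or a single arc $(a,d)\in A_F\cap A_B$ with $a<d$; either way $q\le a<d\le r$, so $q<r$ and the loop cannot exit, hence by Theorem~\ref{thm:soft-search-steps} a cycle must be reported. If instead $(v,w)$ leaves the graph acyclic, then the loop exits and returns null in a state with $q>r$ --- exactly the kind of state in which ordinary compatible search halts --- and the reordering is given by the same rule, since $t=\min(\{v\}\cup\{x\in F\mid\outl(x)\ne\nlv\})=\min(\{v\}\cup\{u\mid\exists(u,x)\in A_F\})$ and $F_<$, $B_>$ are determined by $t$, $F$, and $B$ alone. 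Consequently the five-case argument in the proof of Theorem~\ref{thm:2way-corr} applies verbatim and shows that topological order is restored.

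I expect the main obstacle to be the second paragraph: bridging the loop's syntactic halting condition (one of $F_A$, $B_A$ empty and surviving the refill blocks) with the semantic condition $q>r$ that the reordering analysis actually consumes. This is precisely where Lemma~\ref{lem:soft-search-live} --- and through it Lemma~\ref{lem:soft-search-passive} --- is essential; with that link established, everything else is inherited from compatible search with essentially no extra work.
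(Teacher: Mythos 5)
Your proof is correct and follows essentially the same route as the paper's: both reduce correctness to that of vertex-guided/compatible search (Theorem~\ref{thm:2way-corr}) and use Lemma~\ref{lem:soft-search-live} to show the search cannot halt before $q>r$, after which the vertex-guided halting condition and reordering analysis apply. Your second paragraph merely spells out a step the paper leaves implicit, namely that the loop can only exit once $F_L$ or $B_L$ is empty.
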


\begin{proof}
Let $q$ and $r$ be defined as in Lemma~\ref{lem:soft-search-live}.  By that
lemma, the search will continue until a cycle is detected or $q > r$. While the
search continues, it traverses arcs in exactly the same way as vertex-guided
search.  Once $q > r$, the continuation test for vertex-guided search fails. 
If the graph is still acyclic, the continuation test for soft-threshold search
may not fail immediately, but no additional arcs can be traversed; any
additional iterations of the while loop merely change the
state (active, passive, or dead) of various vertices.  Such changes do not affect the
outcome of the search.  \qed

\end{proof}

To implement soft-threshold search, we maintain $F_A$, $F_P$, $B_A$, and
$B_P$ as doubly-linked lists.  The time per search step is $\bigO(1)$, not counting the
computations associated with a change in $s$ (the two code blocks at the end of the
while loop that are executed if $F_A$ or $B_A$ is empty, respectively).

The remaining freedom in the algorithm is the choice of $s$.  The following
observation guides this choice.  Suppose $s$ changes because $F_A$ is empty.  The
algorithm chooses a new $s$ from $F_P$ and makes active all vertices in $F_P$ that
are no greater than $s$.  Consider the next change in $s$.  If this change occurs
because $F_A$ is again empty, then all the vertices that were made active by the
first change of $s$, including $s$, are dead, and hence can never become active
again.  If, on the other hand, this change occurs because $B_A$ is empty, then all
the forward vertices that remained passive after the first change in $s$ become dead,
and $s$ becomes dead if it is not dead already.  That is, either the vertices in
$F_P$ no greater than the new $s$, or the vertices in $F_P$ no less than the new $s$,
are dead after the next change in $s$. Symmetrically, if $s$ changes because
$B_A$ is empty, then either all the vertices in $B_P$ no less than the new $s$,
or all the vertices in $B_P$ no greater than the new $s$, are dead after the
next change in $s$.  To minimize the worst case, we always select $s$ to be the
median of the set of choices.  This takes time linear in the number of
choices~\cite{Blum1973,Schonhage1976,DorZ1999}.

\begin{theorem}
\label{thm:soft-search-time}
If each new $s$ is selected to be the median of the set of choices, soft-threshold
search takes $\bigO(m^{3/2})$ time over $m$ arc additions.

\end{theorem}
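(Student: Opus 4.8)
The plan is to bound the cost of a single soft-threshold search by $\bigO(1)$ plus $\bigO(1)$ per arc traversal plus the total cost of the ``rebalancing'' work done whenever $s$ changes, and then to show that this extra rebalancing cost, summed over one search, is $\bigO((\text{arc traversals in the search}) + (\text{something that telescopes across the search}))$ — so that the whole thing reduces, via Theorem~\ref{thm:2way-search-arcs}, to $\bigO(m^{3/2})$. The per-search-step cost and the reordering cost at the end of a search are already known to be $\bigO(1)$ per arc traversal from the discussion preceding the theorem, so the only new quantity to control is the work triggered by changes of $s$: each such change scans a set of candidates (to find the median and to repartition $F_P$ or $B_P$), and this scan takes time linear in the number of candidates.

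The key idea, already flagged in the paragraph before the statement, is a potential/charging argument on the candidate sets using the median choice. First I would set up the following accounting for a single search. Consider two consecutive changes of $s$ caused by $F_A$ becoming empty (the $B_A$ case is symmetric). When the first change happens we pick $s$ to be the median of the current $F_P$ and activate the half that is $\le s$; by the observation preceding the theorem, by the time of the \emph{next} change of $s$, at least one of the two halves of that $F_P$ is entirely dead — either the activated half (if $F_A$ empties again) or the passive half (if $B_A$ empties). Since $s$ was the median, that dead half has size at least $\lfloor |F_P|/2\rfloor$. So I would charge the cost of scanning $F_P$ (which is $\bigO(|F_P|)$, including median selection) to those vertices that are guaranteed to die before the next $s$-change, at $\bigO(1)$ per vertex. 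Each vertex dies once per search, so it is charged $\bigO(1)$ for this, except possibly the last $s$-change in the search (and the last change among forward-triggered changes, and among backward-triggered changes), which has no ``next change'' to appeal to; those $\bigO(1)$-many leftover scans are each of size at most the number of live vertices, which is $\bigO(1 + \text{arc traversals in this search})$, exactly as in the reordering bound. Summing over a single search: total $s$-change work is $\bigO(n) $ charged to dead vertices in that search $+\ \bigO(1 + \text{arc traversals})$ leftover.

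Summing over all $m$ arc additions: each of the $n$ vertices can die and be revived at most once per search, and there are at most $m$ searches, so the ``charged to dead vertices'' terms sum to $\bigO(nm)$ — wait, that is too weak; the right accounting is that within a \emph{single} search a vertex dies at most once and the number of $s$-changes in a search is at most $n$, so the charged work in one search is $\bigO(n + \text{arc traversals})$, not $\bigO(n)$ per vertex-death. But there can be $\Theta(m)$ searches, giving $\bigO(nm)$, which is also too weak. The fix — and this is the step I expect to be the main obstacle — is to observe that a search in which $s$ changes at all must have done a substantial number of arc traversals, so that the $\bigO(n)$ term per search is dominated by the arc-traversal term whenever the search is ``big'', and for ``small'' searches $s$ essentially never changes or changes only $\bigO(m^{1/2})$ times. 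Concretely: I would argue that between two consecutive $s$-changes of the \emph{same} type, at least one vertex permanently dies within the current search and at least one vertex has also been visited-or-traversed-out-of, so the number of $s$-changes in a single search is $\bigO(1 + (\text{arc traversals in the search}))$ rather than $\bigO(n)$; then the per-change leftover scan of size $\bigO(1+\text{live vertices}) = \bigO(1+\text{arc traversals})$ would blow up to quadratic in the traversals, which is again not obviously fine. The honest route, which I believe is what the authors intend, is: total $s$-change work over all searches $\le \bigO(1)$ per $s$-change $\times$ (number of candidates scanned), and by the median argument the candidates scanned at one $s$-change either (a) will be $\ge$-half killed before the next change, contributing $\bigO(1)$ amortized per killed vertex and hence $\bigO(\sum_{\text{searches}} n_i)$ where $n_i$ is the number of distinct vertices that die in search $i$, which is at most the number of live vertices in search $i$, which is $\bigO(1+\text{arc traversals in search }i)$; or (b) belong to one of the $\bigO(1)$ terminal $s$-changes of the search, each also $\bigO(1+\text{arc traversals in search }i)$. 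Either way the total $s$-change work is $\bigO\!\big(\sum_i (1 + (\text{arc traversals in search }i))\big) = \bigO(m + m^{3/2}) = \bigO(m^{3/2})$ by Theorem~\ref{thm:2way-search-arcs}, and adding the $\bigO(1)$-per-arc-traversal search and reordering cost and the $\bigO(1)$-per-arc-addition overhead gives the claimed $\bigO(m^{3/2})$. The delicate point to get exactly right is the claim that the vertices killed-by-the-next-$s$-change constitute a $\ge\!\tfrac12$ fraction of the scanned set — that is precisely where the median choice is used, and where I would be most careful, separating the ``next change is same type'' and ``next change is opposite type'' cases as in the observation preceding the theorem.
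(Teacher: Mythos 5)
Your proposal is correct and, after the self-correcting detours in the middle, lands on essentially the same argument as the paper: charge each change of $s$ to the at least half of the scanned candidate set that the median choice guarantees will be dead by the next change of $s$, observe that each vertex dies at most once per search so the total charge per search is $\bigO(1)$ per vertex of $F \cup B$ (plus a leftover term for the final change, charged to the remaining live vertices), bound $|F \cup B|$ by the number of arc traversals in that search, and conclude via Theorem~\ref{thm:2way-search-arcs}. The case split you flag as delicate (next change same type versus opposite type) is exactly the one the paper makes, and it works as you describe.
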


\begin{proof}
Consider a soft-threshold search.  For each increase in $s$ we charge an amount
equal to the number of vertices in $F_P$ when the change occurs; for each
decrease in $s$ we charge an amount equal to the number of vertices in $B_P$
when the change occurs. The charge covers the time spent in the code block
associated with the change, including the time to find the new $s$ (the median)
and the time to make vertices passive or dead, all of which is linear in the
charge. The charge also covers any time spent later to make passive any
vertices that became active as a result of the change; this time is $\bigO(1)$
for each such vertex.  The remainder of the search time is $\bigO(1)$ for
initialization plus $\bigO(1)$ per arc traversal.  We claim that the total
charge is $\bigO(1)$ per arc traversal.  The theorem follows from the claim and
Theorem~\ref{thm:2way-search-arcs}.

The number of vertices in $F \cup B$ is at most the number of arc traversals.  We
divide the total charge among these vertices, at most two units per vertex.  The
claim follows.

Consider a change in $s$ other than the last.  Suppose this is an increase.  Let $k$
be the number of vertices in $F_P$ when this change occurs; $k$ is also the charge
for the change.  Since $s$ is selected to be the median of $F_P$, at
least $\lceil k/2 \rceil$ vertices in $F_P$ are no greater than $s$, and
at least $\lceil k/2 \rceil$ vertices in $F_P$ are no less than $s$.  If the
next change in $s$ is an increase, all the vertices in $F_P$ no greater than $s$ must be dead by the time of the next change.  If the next
change in $s$ is a decrease, all the vertices in $F_P$ no less than $s$ will be made
dead by the next change, including $s$ if it is not dead already.  In either case we
associate the charge of $k$ with the at least $\lceil k/2 \rceil$ vertices that
become dead after the change in $s$ but before or during the next change in $s$.

A symmetric argument applies if $s$ decreases.  The charge for the last change in $s$
we associate with the remaining live vertices, at most one unit per vertex.  \qed

\end{proof}

Theorem~\ref{thm:soft-search-time} holds (with a bigger constant factor) if each new
$s$ is an approximate median of the set of choices; that is, if $s$ is larger
than $\epsilon k$ and smaller than $\epsilon k$ of the $k$ choices, for some
fixed $\epsilon > 0$.  An alternative randomized method is to select each new
$s$ uniformly at random from among the choices.

\begin{theorem}
\label{thm:soft-search-rtime}
If each new $s$ is chosen uniformly at random from among the set of choices,
soft-threshold search takes $\bigO(m^{3/2})$ expected time over $m$ arc additions.

\end{theorem}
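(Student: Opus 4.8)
The plan is to reuse the charging argument from the proof of Theorem~\ref{thm:soft-search-time} almost verbatim, replacing the deterministic fact ``the median kills at least half the candidates'' by the expectation bound ``a uniformly random choice kills at least a quarter of the candidates on average.'' As in that proof, charge each increase of $s$ an amount equal to $|F_P|$ at that moment and each decrease an amount equal to $|B_P|$. Selecting the new $s$ (be it the median or a uniformly random element of the set of choices) and executing the rest of the code block associated with the change both take time linear in the number of choices, i.e., linear in the charge, exactly as before; so the charge still dominates, up to a constant factor, every part of the running time that is not already $\bigO(1)$ per arc traversal or $\bigO(1)$ per search. Hence it suffices to show that for each search the expected total charge is $\bigO(T+1)$, where $T$ denotes the number of arcs that search traverses: summing over the at most $m$ searches and invoking Theorem~\ref{thm:2way-search-arcs}, which bounds the total number of arc traversals over all searches by $4m^{3/2}+m+1$ with certainty, then gives the claimed $\bigO(m^{3/2})$ expected running time.

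Fix one search and index its changes of $s$ in temporal order $1,2,\dots,p$. Consider a change $j<p$, say an increase, and condition on the history $\mathcal{F}_j$ up to the instant just before the new $s$ is drawn. Let $k=k_j=|F_P|$ at that instant; this is $\mathcal{F}_j$-measurable, and since the $k$ candidates are distinct vertices, the rank $i$ of the drawn $s$ among them is uniform on $\{1,\dots,k\}$ given $\mathcal{F}_j$. Recall from the proof of Theorem~\ref{thm:soft-search-time} that if change $j+1$ is again an increase, then all $i$ vertices of the current $F_P$ that are $\le s$ are dead by change $j+1$, whereas if change $j+1$ is a decrease, then all $k-i+1$ vertices of the current $F_P$ that are $\ge s$ (including $s$) are dead by change $j+1$. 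Letting $D_j$ be the number of vertices of the current $F_P$ that die strictly after change $j$ and at or before change $j+1$, we thus have $D_j\ge\min(i,\,k-i+1)$ with certainty, regardless of which way change $j+1$ goes. Consequently
\[
\E[D_j\mid\mathcal{F}_j]\ \ge\ \E[\min(i,k-i+1)\mid\mathcal{F}_j]\ =\ \frac1k\sum_{i=1}^{k}\min(i,k-i+1)\ \ge\ \frac{k}{4},
\]
so that $k_j\le 4\,\E[D_j\mid\mathcal{F}_j]$. The symmetric inequality, with $B_P$ in place of $F_P$, holds when change $j$ is a decrease.

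To conclude, apply the tower property and sum. For a vertex $x$ that ever dies, $x$ is counted by at most one $D_j$, namely the one whose window contains the instant $x$ dies; hence $\sum_{j<p}D_j\le|F\cup B|\le T+2$, and therefore $\sum_{j<p}\E[k_j]\le 4\,\E\bigl[\sum_{j<p}D_j\bigr]\le 4\E[T]+8$. The charge $k_p=|F_P|$ (or $|B_P|$) for the last change is itself at most $|F\cup B|\le T+2$. Thus the expected total charge for the search is $\bigO(\E[T]+1)$, which together with the first paragraph and Theorem~\ref{thm:2way-search-arcs} proves the theorem.

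The only genuinely new point, and the step I expect to need the most care, is the middle paragraph. In the deterministic version the median forces at least $\lceil k/2\rceil$ of the $k$ candidates to die \emph{regardless of which direction the next change of $s$ takes}; a single random choice gives no such guarantee, so one must not try to control the direction of the next change. Instead one observes that the analysis underlying Theorem~\ref{thm:soft-search-time} already supplies, for \emph{each} of the two possible directions separately, an exact lower bound ($i$ or $k-i+1$) on the number of candidates that subsequently die, whence the direction-free bound $D_j\ge\min(i,k-i+1)$; the elementary estimate $\sum_{i=1}^{k}\min(i,k-i+1)\ge k^2/4$ then recovers a constant fraction of the charge in expectation. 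Everything else is bookkeeping identical to the deterministic proof.
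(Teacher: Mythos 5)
Your proof is correct, but it takes a genuinely different route from the paper's. Both arguments share the same charging scheme and the same key combinatorial fact (inherited from the deterministic proof): if the new $s$ has rank $i$ among the $k$ candidates, then at least $\min(i,k-i+1)$ of them are dead by the next change of $s$, whichever direction that next change takes. From there the paper runs a quickselect-style analysis: it defines $\E(k)$ as the maximum expected total \emph{future} charge given at most $k$ remaining distinct candidates, writes the recurrence $\E(k)\le j+\frac{1}{j}\sum_i\E(k-\min(i,j-i+1))$, and proves $\E(k)\le 4k$ by induction, with a separate computation for $j$ even and $j$ odd. You instead argue locally and sum: the charge $k_j$ at each change is at most $4\E[D_j\mid\mathcal{F}_j]$ because $\frac1k\sum_{i=1}^k\min(i,k-i+1)\ge k/4$, the death windows are disjoint so $\sum_j D_j\le|F\cup B|$, and the tower property finishes. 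This buys you a shorter argument with no recurrence, no induction, and no parity case analysis, at the cost of a little measure-theoretic bookkeeping: since the number $p$ of changes is random, the step $\sum_{j<p}\E[k_j]\le 4\E[\sum_{j<p}D_j]$ should strictly be justified by noting that $\{j<p\}$ is $\mathcal{F}_j$-measurable (so $\E[k_j\mathbf{1}_{j<p}]\le 4\E[D_j\mathbf{1}_{j<p}]$ for every $j$); this is routine and does not affect correctness. Both routes produce the same constant $4$ and the same final bound via Theorem~\ref{thm:2way-search-arcs}.
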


\begin{proof}
Each selection of $s$ takes time linear in the number of choices.  We charge for the
changes in $s$ exactly as in the proof of Theorem~\ref{thm:soft-search-time}.  The
search time is then $\bigO(1)$ plus $\bigO(1)$ per arc traversal plus $\bigO(1)$ per unit of
charge.  We shall show that the expected total charge for a search is at most
linear in the number of vertices in $F \cup B$, which in turn is at most the
number of arc traversals.  The theorem follows from the bound on expected total
charge and Theorem~\ref{thm:2way-search-arcs}.

The analysis of the expected total charge is much like the
analysis~\cite{Knuth1972} of Hoare's ``quick select''
algorithm~\cite{Hoare1961}. We construct an appropriate recurrence and prove a
linear bound by induction.  Consider the situation just before some search
step.  Let $\E(k)$ be the maximum expected total future charge, given that at
most $k$ distinct vertices are candidates for $s$ during future changes of $s$. 
(A vertex can be a candidate more than once, but we only count it once.)  The
maximum is over the possible current states of all the data structures; the
expectation is over future choices of $s$. We prove by induction on $k$ that
$\E(k) \le 4k$.

If $s$ does not change in the future, or if the next change in $s$ is the last
one, then the total future charge is at most $k$.  Suppose the next change of $s$
is not the last, and the next choice of $s$ is from among $j$ candidates.  Each of
these $j$ candidates is selected with probability $1/j$.  If the new $s$ is the
$i^{\text{th}}$ smallest among the candidates, then at least $\min\{i, j - i + 1\}$
of these candidates cannot be future candidates.  The charge for this change in $s$
is $j$.  The maximum expected future charge, including that for this change in $s$,
is at most $j + \sum_{i=1}^{j/2} (2\E(k - i)/j)$ if $j$ is even, at most $j +
\E(k - \lceil j/2 \rceil)/j + \sum_{i=1}^{\lfloor j/2 \rfloor} (2\E(k - i)/j)$ if $j$ is
odd.  Using the induction hypothesis $\E(k') \le 4k'$ for $k' < k$, we find that the
maximum expected future charge is at most $j + \sum_{i=1}^{j/2} (8(k - i)/j) =
4k + j - \sum_{i=1}^{j/2} (8i/j) = 4k + j - (4/j)(j/2)(j/2+1) = 4k - 2$ if $j$
is even, at most $j + 4(k - \lceil j/2 \rceil)/j + \sum_{i=1}^{\lfloor j/2
\rfloor} (8(k - i)/j) = 4k + j - 4\lceil j/2 \rceil / j - \sum_{i=1}^{\lfloor j/2 \rfloor} (8i/j) = 4k +
j - (4/j)(j/2 + 1/2 + (j/2-1/2)(j/2+1/2)) < 4k + j - (4/j)(j/2)^2 = 4k$ if $j$
is odd. By induction $\E(k) \le 4k$ for all $k$.

Over the entire search, there are at most $|F \cup B|$ candidates for $s$. It
follows that the expected total charge over the entire search is at most $4|F
\cup B|$, which is at most four times the number of arcs traversed during the
search.  \qed

\end{proof}

Soft-threshold search with either method of choosing $s$ uses $\bigO(n + m)$
space, as do all the algorithms we have discussed so far.  Katriel and
Bodlaender~\cite{Katriel2006} give a set of examples on which soft-threshold
search takes $\Omga(m^{3/2})$ time no matter how $s$ is chosen, so the bounds
in Theorems~\ref{thm:soft-search-time} and~\ref{thm:soft-search-rtime} are tight.

It is natural to ask whether there is a faster algorithm.  To address this question,
we consider algorithms that maintain (at least) an explicit list of the vertices in
topological order and that do any needed reordering by moving one vertex at a time to
a new position in this list.  All known algorithms do this or can be modified to do
so with at most a constant-factor increase in running time.  We further restrict our
attention to {\em local} algorithms, those that update the order after an arc $(v,
w)$ with $v > w$ is added by reordering only affected vertices (defined in
Section~\ref{sec:lim-search}: those vertices between $w$ and $v$, inclusive).  These
vertices form an interval in the old order and must form an interval in the new
order; within the interval, any permutation is allowed as long as it restores
topological order.  Our algorithms, as well as all previous ones except for
those of Shmueli~\cite{Shmueli1983} and Bender et al.~\cite{Bender2009}, are
local. The following theorem gives a lower bound of $\Omga(n\sqrt{m})$ on the
worst-case number of vertices that must be moved by any local algorithm.  Thus
for sparse graphs ($m/n = \bigO(1)$), soft-threshold search is as fast as
possible among local algorithms.

\begin{theorem}
\label{thm:local-lb}
Any local algorithm must reorder $\Omga(n\sqrt{m})$ vertices, and hence must take
$\Omga(n\sqrt{m})$ time.

\end{theorem}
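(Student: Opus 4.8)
The plan is to exhibit, for every $n$ and every $m$ with $n \le m \le \binom n2$, an input sequence of at most $m$ arc additions on $n$ vertices that forces any local algorithm to perform $\Omega(n\sqrt m)$ vertex moves. The construction works in phases. I would split the vertex set into two blocks of roughly $n/2$ vertices each, a ``left'' block $L$ and a ``right'' block $R$, laid out so that initially every vertex of $L$ precedes every vertex of $R$ in the topological order. Each phase adds a single arc from a vertex high in $R$'s current position to a vertex low in $L$'s current position — i.e.\ an arc $(v,w)$ with $v \in R$, $w \in L$, and $v>w$ — which makes the entire pair of blocks ``affected'' and forces the algorithm to interleave $L$ and $R$ in some way to restore topological order. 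The key is to choose, in each phase, the endpoints $v$ and $w$ so that the \emph{previously added} cross arcs impose enough constraints that a large number of vertices (proportional to $n$) must actually change relative position, while the total number of arcs used stays within the budget $m$.

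The main quantitative step is a potential/counting argument. I would track, as a potential, the number of ``inversions'' between the pattern of cross arcs and the current order — more concretely, for the $i$-th added cross arc $(v_i,w_i)$ one wants $w_i$ currently far from the front (say at position $\Omega(i)$ from the relevant end among the affected vertices) so that restoring topological order after adding $(v_i,w_i)$ drags $\Omega(i)$ vertices past each other; conversely, because each cross arc is between a distinct high vertex and a distinct low vertex, after $k$ cross arcs the blocks are ``braided'' and the next arc can be placed to cost $\Omega(k)$ moves. Summing the per-phase cost $\Omega(i)$ over $i = 1,\dots,k$ gives $\Omega(k^2)$ moves. Choosing $k = \Theta(\sqrt m)$ keeps the arc count $O(k + \text{(padding arcs used to pin positions)}) = O(m)$ while forcing $\Omega(m)$ moves in the dense regime — but we want $\Omega(n\sqrt m)$, so the two blocks must each have $\Theta(n)$ vertices and the braiding argument must show each of the $\Theta(\sqrt m)$ phases moves $\Theta(n)$ distinct vertices on average, not just $\Theta(i)$; this is achieved by having each cross-arc endpoint be an entire group of $\Theta(n/\sqrt m)$ vertices wired in series (a path), so that moving ``one endpoint'' actually relocates a whole group, and any local algorithm, being forced to keep affected vertices in an interval and restore topological order, must move $\Omega(n)$ of them per phase for $\Omega(\sqrt m)$ phases, totalling $\Omega(n\sqrt m)$.

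The hard part will be pinning down the adversary argument so that it is genuinely \emph{local}-algorithm-agnostic: a local algorithm may choose any permutation of the affected interval that restores topological order, so I must argue that \emph{every} such permutation incurs the claimed number of moves, counting a ``move'' as a vertex whose position in the list changes. I would do this by a crossing-number lower bound: set up the series-parallel gadget so that the pre-phase order and any valid post-phase order, viewed as two linear arrangements of the affected vertices respecting the (old and new) arc constraints, must differ in $\Omega(n)$ positions because the added arc reverses the relative order of two groups each of size $\Omega(n/\sqrt m)$ that sit at distance $\Omega(\sqrt m \cdot n/\sqrt m) = \Omega(n)$ apart, and no valid reordering can fix the single reversed pair without sliding everything between them. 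Finally I would verify the bookkeeping: the gadget uses $O(n)$ arcs for the internal series paths plus $O(\sqrt m)$ cross arcs plus $O(m)$ ``scaffolding'' arcs to force the desired positions, all within the budget, and the claimed $m = \Omega(n)$ assumption of the paper guarantees $\sqrt m = \Omega(\sqrt n)$ so that $n\sqrt m$ is a meaningful bound; the time lower bound is then immediate since each vertex move costs $\Omega(1)$.
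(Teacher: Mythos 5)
Your high-level ingredients match the paper's construction (groups of vertices wired into paths acting as ``heavy'' super-vertices, plus cross arcs that force whole groups to swap past each other), but the quantitative structure of your argument has a genuine flaw. You claim $\Theta(\sqrt m)$ phases, each consisting of a single cross arc between two groups of size $\Theta(n/\sqrt m)$, and each forcing $\Omega(n)$ moves because ``no valid reordering can fix the single reversed pair without sliding everything between them.'' That last claim is false in the move model the theorem uses: a local algorithm may choose any permutation of the affected interval that restores topological order, and moving one vertex at a time means vertices that are not explicitly moved simply retain their relative order. If the only new constraint is that group $G_1$ must now follow group $G_2$, the algorithm can move the $\Theta(n/\sqrt m)$ vertices of $G_1$ past everything and leave every intermediate vertex untouched; nothing forces the intermediates to move, since none of their pairwise order constraints changed. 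So each of your phases forces only $\Theta(n/\sqrt m)$ moves, and your total is $\Theta(\sqrt m\cdot n/\sqrt m)=\Theta(n)$, far short of $\Omega(n\sqrt m)$. (A single arc \emph{can} force $\Omega(n)$ moves, but only if earlier arcs have already interleaved two long chains so that the new arc reverses $\Omega(n)$ pairwise orders among otherwise-unmoved vertices --- and setting up each such interleaving itself consumes many arcs, which is exactly the amortization issue your sketch does not resolve.)

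The correct accounting, which is what the paper does, is the dual of yours: $\Theta(m)$ forcing arcs, each forcing only $\Theta(n/\sqrt m)$ moves. Concretely, with $k=\Theta(\sqrt m)$ paths $P_1,\dots,P_{k+1}$ of $p=\Theta(n/\sqrt m)$ vertices each, one adds $k(k+1)/2=\Theta(m)$ cross arcs, each from the last vertex of some $P_j$ to the first vertex of some $P_i$ currently just below it, forcing $P_i$ to move past $P_j$ (at least $p$ moves, since for every pair of vertices whose order must reverse at least one must be moved). Each pair of paths is used exactly once, so the arc budget is respected and the total is $p\cdot k(k+1)/2=\Theta(nk)=\Omega(n\sqrt m)$. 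To repair your proposal you would need to replace the ``$\sqrt m$ phases of cost $n$'' plan with this ``$m$ steps of cost $n/\sqrt m$'' plan (or rigorously justify re-interleaving between phases, which effectively amounts to the same thing).
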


\begin{proof}
Let $p$ and $k$ be arbitrary positive integers such that $p \le k$.  We shall give an
example with $n = p(k + 1)$ vertices and $m = n - k - 1 + k(k + 1)/2$ arcs that
requires at least $pk(k + 1)/2 = nk/2$ vertex movements.  Since $p \le k$, $k(k + 1)/2 \le m
\le 3k(k + 1)/2$, so $\sqrt{m} = \Thta(k)$.  The example is such that, after $n
- k - 1$ initial arc additions, each subsequent arc addition forces at least $p$
vertices to be moved in the topological order, assuming the algorithm is local.  The total number
of vertex movements is thus at least $pk(k+1)/ 2 = \Omga(n\sqrt{m})$.  Given any
target number of vertices $n'$ and target number of arcs $m'$, we can choose $p$ and
$k$ so that $n = \Thta(n')$ and $m = \Thta(m')$, which gives the theorem.

The construction is quite simple.  Let the $n$
vertices be numbered 1 through $n$ in their original topological order.  Add $n
- k - 1$ arcs so that each interval of $p$ consecutive vertices ending in an
integer multiple of $p$ forms a path of the vertices in increasing order (so
that vertices 1 through $p$ form a path from 1 to $p$, $p + 1$ through $2p$
form a path from $p + 1$ to $2p$, and so on).  Now there are $k + 1$ paths,
each containing $p$ vertices.  Call these paths $P_1, P_2,\ldots,P_{k + 1}$, in
increasing order by first (and last) vertex.  Add an arc from the last vertex
of $P_2$ (vertex $2p$) to the first vertex of $P_1$ (vertex 1). This forms a
path from $p + 1$ through $p + 2, p + 3,\ldots$ to $2p$, then through $1, 2,
\ldots$ to $p$.  The affected vertices are the vertices 1 through $2p$, and
the only way to rearrange them to restore topological order is to move $p + 1$
through $2p$ before 1 through $p$, which takes at least $p$ individual vertex
moves. The effect is to swap $P_1$ and $P_2$ in the topological order.  Now add
an arc from the last vertex of $P_3$ to the first vertex of $P_1$.  This forces
$P_1$ to swap places with $P_3$, again requiring at least $p$ vertex moves. 
Continue adding one arc at a time in this way, forcing $P_1$ to swap places
with $P_4, P_5,\ldots, P_{k + 1}$.  After $k$ arcs additions of arcs from the
last vertex of $P_2, P_3,\ldots, P_{k + 1}$ to the first vertex of $P_1$, path
$P_1$ has been forced all the way to the top end of the topological order.  Now
ignore $P_1$ and repeat the construction with $P_2$, forcing it to move past
$P_3, P_4,\ldots, P_{k + 1}$ by adding arcs $(3p, p + 1), (4p, p + 1),\ldots,
((k + 1)p, p + 1)$.  Do the same with $P_3, P_4,\ldots, P_k$. The total number
of arcs added that force vertex moves is $k(k + 1)/2$.  Each of these added
arcs forces at least $k$ vertex moves.  Figure~\ref{fig:soft-lb} gives an
example of the construction.  \qed

\end{proof}

\begin{figure}[h!]
\centering
\includegraphics[scale=0.45]{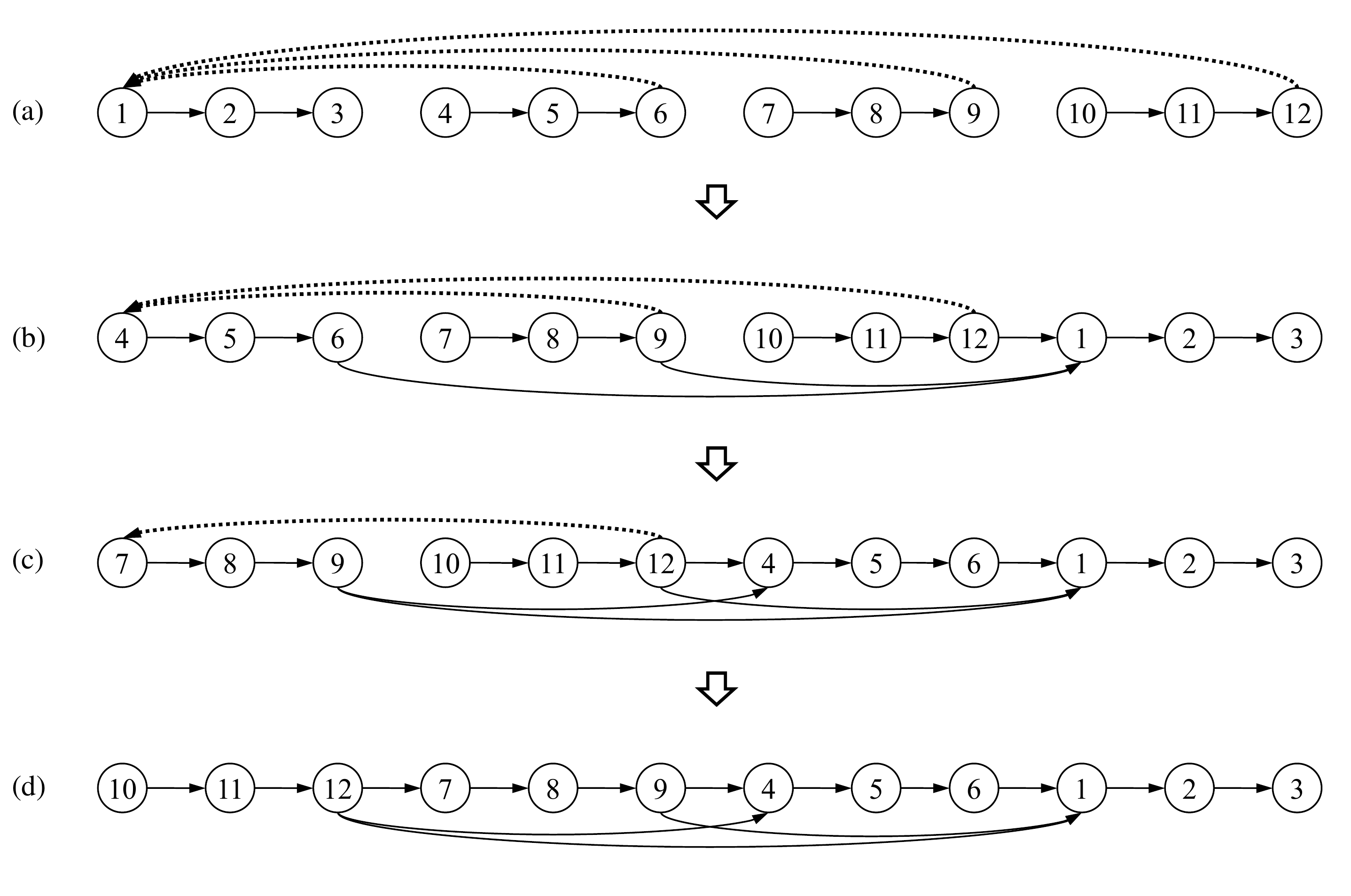}
\caption{The $\Omga(nm^{1/2})$ vertex reordering construction for $p = 3$ and $k
= 3$, yielding an example with $n = 12$ vertices and $m = 14$ arcs. (a)
 Insertion of arc $(6, 1)$ moves 1, 2, 3 past 4, 5, 6.  Insertion of $(9, 1)$
 moves 1, 2, 3 past 7, 8, 9.  Insertion of $(12, 1)$ moves 1, 2, 3 past 10, 11,
 12. (b) Insertion of $(9, 4)$ moves 4, 5, 6 past 7, 8, 9.  Insertion of $(12,
 4)$ moves 4, 5, 6 past 10, 11, 12. (c)	Insertion of $(12, 7)$ moves 7, 8, 9
 past 10, 11, 12. (d) Final order.
}
\label{fig:soft-lb}
\end{figure}

The $\Omga(n\sqrt{m})$ bound on vertex reorderings is tight.  An algorithm that
achieves this bound is a two-way search that does not alternate forward and backward
arc traversals but instead does forward arc traversals until visiting an unvisited
vertex less than $v$, then does backward arc traversals until visiting an
unvisited vertex greater than $w$, and repeats.  Each forward traversal is
along an arc $(u, x)$ with $u$ minimum; each backward traversal is along an arc
$(y, z)$ with $z$ maximum.  Searching continues until a cycle is detected or
there is no compatible pair of untraversed arcs.  If the search stops without
detecting a cycle, the algorithm reorders the vertices in the same way as in
two-way compatible search.  One can prove that this method reorders
$\bigO(n\sqrt{m})$ vertices over $m$ arc additions by counting related vertex
pairs (as defined in the next section: two vertices are related if one path
contains both). Unfortunately we do not know an implementation of this
algorithm with an overall time bound approaching the bound on vertex reorderings.

For algorithms that reorder one vertex at a time but are allowed to move
unaffected vertices, the only lower bound known is the much weaker one of
Ramalingam and Reps~\cite{Ramalingam1994}.  They showed that $n - 1$ arc
additions can force any algorithm, local or not, to do $\Omga(n\log n)$ vertex
moves.

\section{Topological Search} \label{sec:top-search}

\SetKwFunction{topologicalsearch}{Topological-Search}
\SetKwFunction{reorder}{Reorder}

Soft-threshold search is efficient on sparse graphs but becomes less and less
efficient as the graph becomes denser; indeed, if $m = \Omga(n^2)$ the time bound is
$\bigO(n^3)$, the same as that of one-way limited search (Section
\ref{sec:lim-search}). In this section we give an alternative algorithm that is efficient for dense graphs.
The algorithm uses two-way search, but differs in three ways from the methods
discussed in Sections \ref{sec:2way-search} and \ref{sec:soft-search}: it balances
vertices visited instead of arcs traversed (as in the method sketched at the end of
Section~\ref{sec:soft-search}); it searches the topological order instead of the
graph; and it uses a different reordering method, which has the side benefit of
making it a topological sorting algorithm.  We call the algorithm {\em topological
search}.

We represent the topological order by an explicit mapping between the vertices
and the integers from 1 to $n$.  We denote by $\position(v)$ the number of
vertex $v$ and by $\vertex(i)$ the vertex with number $i$.  We implement
$\vertex$ as an array.  The initial numbering is arbitrary; it is topological
since there are no arcs initially.  If $v$ and $w$ are vertices, we test $v <
w$ by comparing $\position(v)$ to $\position(w)$.  We represent the graph by an
adjacency matrix $A: A(v, w) = 1$ if $(v, w)$ is an arc, $A(v, w) = 0$ if not. 
Testing whether $(v, w)$ is an arc takes $\bigO(1)$ time, as does adding an
arc.  Direct representation of $A$ uses $\bigO(n^2)$ bits of space;
representation of $A$ by a hash table reduces the space to $\bigO(n + m)$ but
makes the algorithm randomized.
 
To simplify the running time analysis and the extension to strong component
maintenance (Section \ref{sec:strong}), we test for cycles after the search. 
Thus the algorithm consists of three parts: the search, the cycle test, and the
vertex reordering.  Let $(v, w)$ be a new arc with $v > w$.  The search
examines every affected vertex (those between $w$ and $v$ in the order,
inclusive).  It builds a queue $F$ of vertices reachable from $w$ by searching
forward from $w$, using a current position $i$, initially $\position(w)$. 
Concurrently, it builds a queue $B$ of vertices from which $v$ is
reachable by searching backward from $v$, using a current position $j$,
initially $\position(v)$. It alternates between adding a vertex to $F$
and adding a vertex to $B$ until the forward and backward searches meet.  When
adding a vertex $z$ to $F$ or $B$, the method sets $\vertex(\position(z)) =
\nlv$.

In giving the details of this method, we use the following notation for queue
operations: $\eq$ denotes an empty queue; $\inject(x, Q)$ adds element $x$ to
the back of queue $Q$; $\pop(Q)$ deletes the front element $x$ from queue $Q$
and returns $x$; if $Q$ is empty, $\pop(Q)$ leaves $Q$ empty and returns null. 
Do the search by calling \topologicalsearch{$v$,$w$}, where procedure
\topologicalsearch is defined in Figure~\ref{alg:top-search}.

\begin{figure}
\setlength{\algomargin}{7em}
\begin{procedure}[H]
{\bf procedure} \topologicalsearch{{\rm vertex} $v$, {\rm vertex} $w$}\;
\Indp
    $F = \eq$; $B = \eq$; $\inject(w, F)$; $\inject(v, B)$\;
    $i = \position(w)$; $j = \position(v)$; $\vertex(i) = \vertex(j) = \nlv$\; 
    \While {true} {
        $i = i + 1$\;
        \lWhile {$i < j$ {\rm and} $\forall u \in F (A(\position(u), i) = 0)$}
        {$i = i + 1$}\; 
        \lIf {$i = j$} {\Return} \Else {
        	$\inject(\vertex(i), F)$; $\vertex(i) = \nlv$\;
        }
        $j = j - 1$\; 
        \lWhile {$i < j$ {\rm and} $\forall z \in B(A(j, \position(z)) = 0)$}
        {$j = j - 1$}\;
        \lIf {$i = j$} {\Return} \Else {
        	$\inject(\vertex(j),B)$; $\vertex(j) = \nlv$\;
        }
	}
\end{procedure}
\caption{Implementation of topological search.}
\label{alg:top-search}
\end{figure}

Once the search finishes, test for a cycle by checking whether there is an arc
$(u, z)$ with $u$ in $F$ and $z$ in $B$.  If there is no such arc, reorder the vertices as
follows.  Let $F$ and $B$ be the queues at the end of the search, and let $k$ be
the common value of $i$ and $j$ at the end of the search.  Then $\vertex(k) =
\nlv$.  If the search stopped after incrementing $i$, then $\vertex(k)$ was
added to $B$, and $F$ and $B$ contain the same number of vertices.  Otherwise,
the search stopped after decrementing $j$, $\vertex(k)$ was added to $F$, and
$F$ contains one more vertex than $B$.  In either case, the number of positions
$g \ge k$ such that $\vertex(g) = \nlv$ is $|F|$, and the number of positions $g
< k$ such that $\vertex(g) = \nlv$ is $|B|$. Reinsert the vertices in $F \cup B$
into the vertex array, moving additional vertices as necessary, by calling
\reorder, using as the initial values of $F$, $B$, $i$, $j$ their values at the
end of the search, where procedure \reorder is defined in
Figure~\ref{alg:reorder}.

\begin{figure}
\setlength{\algomargin}{7.3em}
\begin{procedure}[H]
{\bf procedure} \reorder\;
\Indp
	\While {$F \ne \eq$} {
        \If {$\vertex(i) \ne \nlv$ {\rm and} $\exists u \in F (A(u,\vertex(i)) =
        1)$} { 
        	$\inject(\vertex(i), F)$; $\vertex(i) = \nlv$\; 
        }
        \If {$\vertex(i) = \nlv$} {
        	$x = \pop(F)$; $\vertex(i) = x$; $\position(x) = i$\;
       	}
        $i = i + 1$
	}
	\While {$B \ne \eq$} {
        $j = j - 1$\;
        \If {$\vertex(j) \ne \nlv$ {\rm and} $\exists z \in B (A(\vertex(j), z)
        = 1)$} { 
        	$\inject(\vertex(j), B)$; $\vertex(j) = \nlv$\; 
        } 
        \If {$\vertex(j) =
        \nlv$} { $y = \pop(B)$; $\vertex(j) = y$; $\position(y) = j$\; 
        }
	}
\end{procedure}
\caption{Implementation of reordering.}
\label{alg:reorder}
\end{figure}

The reordering process consists of two almost-symmetric while loops.  The first loop
reinserts the vertices in $F$ into positions $k$ and higher.  Variable $i$ is the current
position.  If $\vertex(i)$ is a vertex $q$ with an arc from a vertex currently in $F$,
vertex $q$ is added to the back of $F$ and $\vertex(i)$ becomes null: vertex $q$ must be
moved to a higher position.  If $\vertex(i)$ becomes null, or if $\vertex(i)$ was already
null, the front vertex in $F$ is deleted from $F$ and becomes $\vertex(i)$.  The
second loop reinserts the vertices in $B$ into positions $k - 1$ and lower in symmetric
fashion.  The only difference between the loops is that the forward loop increments $i$
last, whereas the backward loop decrements $j$ first, to avoid examining $\vertex(k)$.
The forward and backward loops are completely independent and can be executed in
parallel.  (This is not true of the forward and backward searches.)
Figure~\ref{fig:top-search} gives an example of topological search and
reordering.

\begin{figure}
\centering
\includegraphics[scale=0.45]{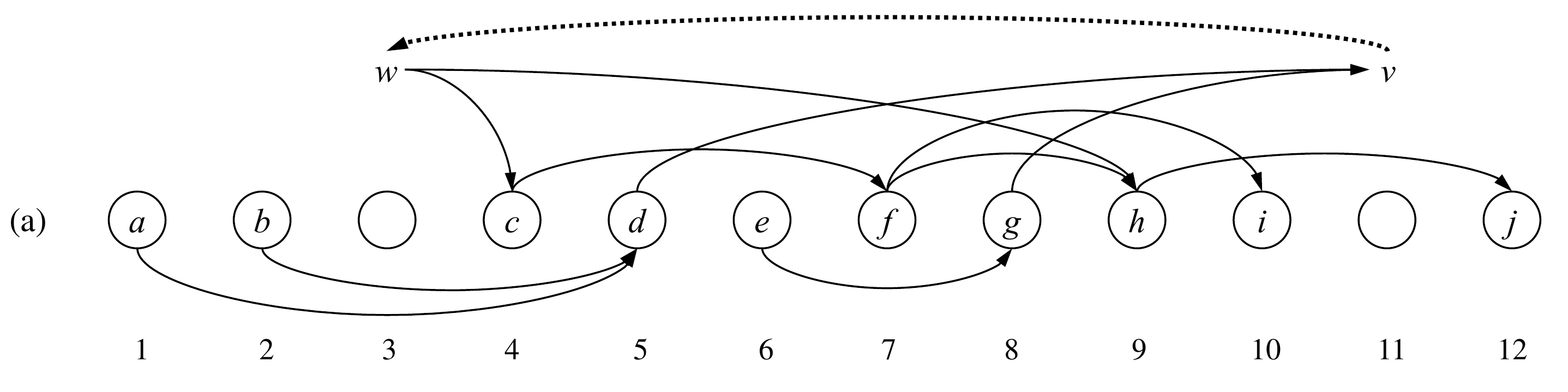}\\
\vspace{5pt}
\includegraphics[scale=0.45]{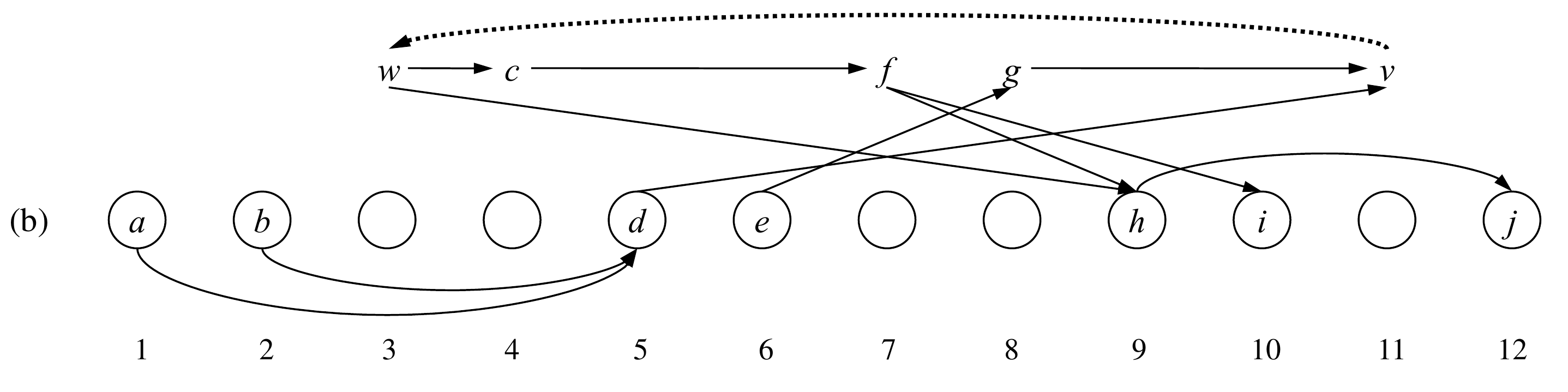}\\
\vspace{5pt}
\includegraphics[scale=0.45]{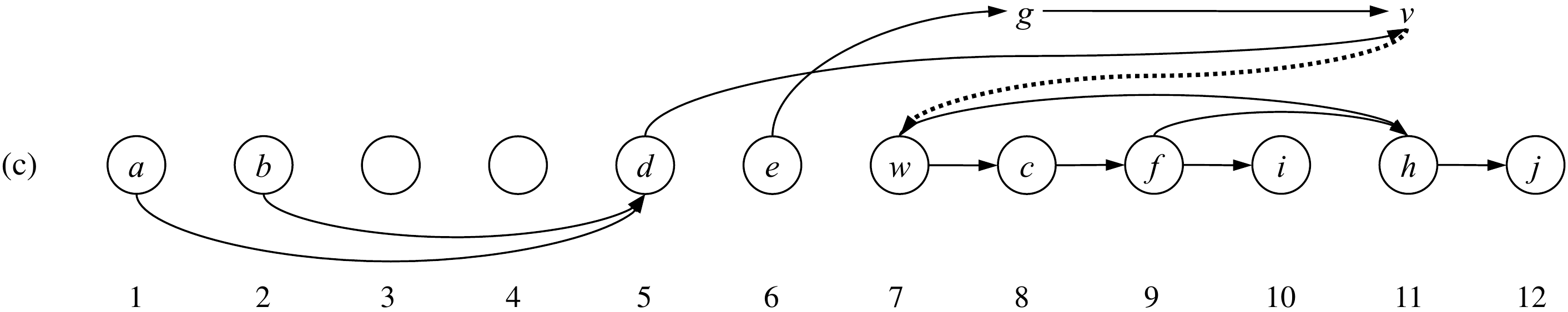}\\
\vspace{5pt}
\includegraphics[scale=0.45]{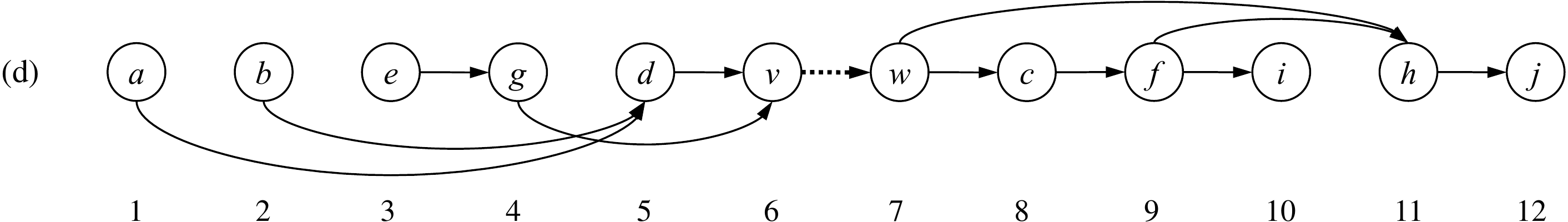}\\
\caption{
Topological search and reordering of the graph in Figure~\ref{fig:lim-search}.
(a) Initially positions 3 and 11, of $w$ and $v$, respectively, become empty,
$F = [w]$, $B = [v]$. The search adds $c$ to $F$, $g$ to $B$, $f$ to $F$, and
stops with $i = j = 7$. (b) Forward reordering begins from position 7.  Vertex
$w$ drops into position 7, $c$ drops into position 8.  Vertex $h$ in position 9
has an arc from $f$, still in $F$: $h$ is added to $F$, $f$ drops into position
9. Vertex $i$ in position 10 has no arc from any vertex still in $F$.  Vertex
$h$ drops into position 11. (c) Backward reordering begins from position 6. 
Vertex $e$ has an arc to $g$; $e$ is added to $B$ and replaced by $v$. Vertices
$g$ and $e$ drop into positions 4 and 3, respectively.  (d) Final order.  Forward
and backward reordering are independent and can be done in either order or
concurrently.
}
\label{fig:top-search}
\end{figure}

\begin{theorem}
\label{thm:top-corr}
Topological search is correct.

\end{theorem}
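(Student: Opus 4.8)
The plan is to verify three things in turn, mirroring the three phases of the algorithm: (1) the search correctly identifies, via the meeting position $k$, exactly which affected vertices must be reordered, and in particular $|F|$ and $|B|$ are the numbers of null positions at and above $k$, and strictly below $k$, respectively; (2) the cycle test (checking for an arc from $F$ to $B$) is sound and complete; and (3) if there is no cycle, \reorder{} restores a topological order. The key structural facts to establish first are loop invariants for \topologicalsearch: at the top of each while-loop iteration, $F$ consists of vertices reachable from $w$ all having position $\le i$ (those whose positions have been nulled), $B$ consists of vertices from which $v$ is reachable all having position $\ge j$, and every position strictly between the ``old'' $i$ and ``old'' $j$ that was skipped has no arc into $F$ (respectively out of $B$). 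Crucially, since the inner \textbf{while} loops only skip a position $p$ when no vertex currently in $F$ has an arc to $\vertex(p)$ (resp. no vertex in $B$ has an arc from $\vertex(p)$), and $F$ only grows, a skipped position is never reachable from $w$ through affected vertices and never reaches $v$; so the vertices left in place between the final $F$-region and $B$-region are correctly unaffected by the reachability that forces the reorder.

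Next I would treat the cycle test. Soundness: if there is an arc $(u,z)$ with $u \in F$, $z \in B$, then $w$ reaches $u$, $u$ reaches $z$, $z$ reaches $v$, and with the new arc $(v,w)$ we get a cycle; so the test never reports a cycle falsely. Completeness: suppose adding $(v,w)$ creates a cycle, i.e. there is a pre-existing path $P$ from $w$ to $v$. Since all vertices on $P$ lie strictly between $w$ and $v$ in position (they are on a $w$-to-$v$ path in an acyclic graph) they are all affected. Using the search invariants, I would argue by induction along $P$ that every vertex of $P$ is eventually placed in $F$ or in $B$: a vertex of $P$ at position $\le$ the final $i$ that is reachable from $w$ cannot have been skipped (its predecessor on $P$, once in $F$, gives an arc into it), and symmetrically on the $B$ side; the two searches meet, so $P$ is split into an initial segment in $F$ and a final segment in $B$, and the arc of $P$ crossing the split is an $F$-to-$B$ arc, which the test detects.

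Finally, correctness of \reorder. I would first confirm the invariant that after the search, among positions $\ge k$ there are exactly $|F|$ null slots and among positions $< k$ exactly $|B|$ null slots (this follows from the alternation: each iteration nulls one position going up and one going down, plus the endpoints, and the stopping-condition case analysis already sketched in the excerpt pins down the off-by-one). Then I would show the forward loop is well-defined (it never runs off the array: the count of null slots at or above the current $i$ always equals the current $|F|$, an invariant preserved because each iteration either nulls one new slot and injects one vertex, or pops one vertex into the current slot) and that when it terminates every vertex originally in $F$, together with every vertex it dragged up, has received a position, with the relative order among ``dropped'' vertices being their old (topological) order and each dragged-up vertex placed strictly above all its in-$F$ predecessors. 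Combining the two loops, every arc $(x,y)$ with $x<y$ before the addition is checked to still satisfy $x<y$: a short case analysis on whether $x,y$ are moved up (in the closure of $F$), moved down (in the closure of $B$), or fixed — essentially the same five-case argument as in Theorem~\ref{thm:2way-corr} — shows order is preserved; and the new arc $(v,w)$ is satisfied because $v$ ends in the $B$-region (positions $<k$) while $w$ ends in the $F$-region (positions $\ge k$). I expect the main obstacle to be phase (3): making the \reorder{} invariants precise enough to see simultaneously that the loop stays in bounds, that it terminates exactly when the null slots are exhausted, and that the interleaving of ``drop'' and ``drag up'' steps produces a genuine topological order of the moved set without disturbing the fixed vertices — the bookkeeping is fiddly even though each individual implication is routine.
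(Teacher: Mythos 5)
Your plan is correct and follows essentially the same route as the paper's proof: reachability invariants for $F$ and $B$ giving soundness of the cycle test, the crossing arc of the path $P$ from $w$ to $v$ giving completeness, the counting invariant on null positions at or above (resp.\ below) $k$ to show \reorder{} stays in bounds and terminates, and a five-case analysis on arcs $(x,y)$ according to membership in the closures $\overline{F}$ and $\overline{B}$. The only cosmetic difference is that you establish completeness by induction along $P$ (splitting it into an $F$-prefix and $B$-suffix), where the paper argues more directly from the last vertex of $P$ lying in $F$; both work.
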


\begin{proof}
Let $(v, w)$ be a new arc such that $v > w$.  The search maintains the invariant
that every vertex in $F$ is reachable from $w$ and $v$ is reachable from every vertex in
$B$.  Thus if there is an arc $(u, z)$ with $u$ in $F$ and $z$ in $B$, there is a cycle.
Suppose the addition of $(v, w)$ creates a cycle.  The cycle consists of $(v,
w)$ and a path $P$ from $w$ to $v$ of vertices in increasing order.  Let $u$ be the
largest vertex on $P$ that is in $F$ at the end of the search.  Since $u \ne v$, there is an arc $(u, z)$
on $P$.  Vertex $z$ must be in $B$, or the search would not have stopped.  We conclude that
the algorithm reports a cycle if and only if the addition of $(v, w)$ creates one.

Suppose the addition of $(v, w)$ does not create a cycle.  When the search stops, the
number of positions $g \ge i$ such that $\vertex(g) = \nlv$ is $|F|$.  The
forward reordering loop maintains this invariant as it updates $F$.  It also maintains the invariant that
once position $i$ is processed, every position from $k$ to $i$, inclusive, is non-null.
Thus if $i = n + 1$, $F$ must be empty, and the loop terminates.  Symmetrically, the
backward reordering loop terminates before $j$ can become 0.  Thus all vertices in $F \cup
B$ at the end of the search are successfully reordered; some other vertices may also
be reordered.  Let $\overline{F}$ and $\overline{B}$ be the sets of vertices added to $F$ and to $B$,
respectively, during the search and reordering.  Vertices in $\overline{F}$ move to higher
positions, vertices in $\overline{B}$ move to lower positions, and no other vertices move.

All vertices in $\overline{F}$ are reachable from $w$, and $v$ is reachable from all
vertices in $\overline{B}$.  We show by case analysis that after the reordering every
arc $(x, y)$ has $x < y$.  There are five cases, of which two pairs are symmetric.
Suppose $x$ and $y$ are both in $\overline{F} \cup \overline{B}$.  Since there is no
cycle, it cannot be the case that $x$ is in $\overline{F}$ and $y$ is in
$\overline{B}$.  The reordering moves all vertices in $\overline{F}$ after all
vertices in $\overline{B}$ without changing the order of vertices in $\overline{F}$
and without changing the order of vertices in $\overline{B}$.  It follows that $x <
y$ after the reordering.  This includes the case $(x, y) = (v, w)$, since $w$ is in
$\overline{F}$ and $v$ is in $\overline{B}$.  Suppose $y$ is in $\overline{F}$
and $x$ is not in $\overline{F} \cup \overline{B}$. 
The reordering does not move $x$ and moves $y$ higher in the order, so $x < y$ after the reordering.  The case of $x$
in $\overline{B}$ and $y$ not in $\overline{F} \cup \overline{B}$ is symmetric.
Suppose $x$ is in $\overline{F}$ and $y$ is not in $\overline{F} \cup \overline{B}$.
Since $x < y$ before the reordering, the first loop of the reordering must reinsert
$x$ before it reaches the position of $y$; otherwise $y$ would be in $\overline{F}$.
Thus $x < y$ after the reordering.  The case $y$ in $\overline{B}$ and $x$ not in
$\overline{F} \cup \overline{B}$ is symmetric.  \qed

\end{proof}

To bound the running time of topological search, we extend the concept of
relatedness to vertex pairs. We say two vertices are {\em related} if they are
on a common path.  Relatedness is symmetric; order on the path does not matter.

\begin{lemma}
\label{lem:top-cycle-time}
Over $m$ arc additions, topological search spends $\bigO(n^2)$ time testing for cycles.

\end{lemma}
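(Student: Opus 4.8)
The plan is to bound the cost of the cycle test directly by a relatedness-counting argument, paralleling Lemmas~\ref{lem:lim-search-rel} and~\ref{lem:2way-search-rel}. After an arc $(v,w)$ with $v>w$ is added and \topologicalsearch{$v$,$w$} returns, the cycle test inspects, for every $u$ in $F$ and $z$ in $B$, whether $(u,z)$ is an arc; each such check is an $O(1)$ adjacency-matrix lookup, so the test costs $O(|F|\cdot|B|)$, and for additions with $v<w$ no test is performed. Hence it suffices to show that $\sum |F|\cdot|B| = O(n^2)$, summed over the searches actually triggered.

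At most one addition creates a cycle (arcs are only added, so once the graph is cyclic it stays cyclic), and that addition is the last one; for it, $|F|+|B|\le n$, so its cycle test costs $O(n^2)$ and can be set aside. For every other triggered search I claim that each pair $\{u,z\}$ with $u\in F$ and $z\in B$ is \emph{unrelated} before the addition of $(v,w)$ and \emph{related} after it, where ``related'' for a vertex pair is as defined just before the lemma. The ``after'' direction is immediate from the search invariants used in the proof of Theorem~\ref{thm:top-corr}: $w$ reaches $u$ and $z$ reaches $v$, so the path $z\rightsquigarrow v$, then $(v,w)$, then $w\rightsquigarrow u$ contains both $u$ and $z$. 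For the ``before'' direction I first note that $u$ precedes $z$ in the topological order in force at the time of the addition: the forward queue $F$ occupies only positions $\le k$ and the backward queue $B$ only positions $\ge k$ (with $k$ the common final value of $i$ and $j$), and $\vertex(k)$ joins exactly one of the two queues, so $\position(u)<\position(z)$. Consequently, if $\{u,z\}$ were related before the addition, the common path would lie in an acyclic graph and therefore go from $u$ to $z$; combined with $w\rightsquigarrow u$ and $z\rightsquigarrow v$ this yields a $w$-to-$v$ path, and adding $(v,w)$ would create a cycle, contradicting the case assumption.

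To finish, I will use that relatedness is monotone under arc additions, so a vertex pair can pass from unrelated to related during at most one addition; moreover distinct pairs $(u,z)\in F\times B$ yield distinct unordered pairs $\{u,z\}$ because $F$ and $B$ are disjoint (their position ranges meet only at $k$, which belongs to a single queue). Therefore the contributions $|F|\cdot|B|$ of all non-cycle-creating searches are disjoint sets of newly related vertex pairs, so $\sum|F|\cdot|B|\le\binom{n}{2}<n^2/2$. Adding the $O(n^2)$ cost of the single cycle-creating search gives the claimed $O(n^2)$ total.

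I expect the only mildly delicate point to be the position bookkeeping that justifies $\position(u)<\position(z)$ for $u\in F$ and $z\in B$ — in particular cleanly handling the vertex that occupies the meeting position $k$ — together with the observation that the search never revisits an emptied position (so $F$ and $B$ are genuinely disjoint); the rest is the standard ``unrelated-before, related-after'' counting template already established in this paper.
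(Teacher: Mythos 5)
Your proposal is correct and follows essentially the same argument as the paper: bound the test cost by $\bigO(|F||B|)$, set aside the single cycle-creating addition, and charge each pair $(u,z)\in F\times B$ of a non-cycle-creating search to a vertex pair that is unrelated before the addition and related after it, giving at most $\binom{n}{2}$ in total. The only difference is that you spell out the position bookkeeping justifying $u<z$ (which the paper asserts without elaboration), and that extra detail is accurate.
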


\begin{proof}
Suppose addition of an arc $(v, w)$ triggers a search.  Let $F$ and $B$ be the values
of the corresponding variables at the end of the search.  The test for cycles takes
$\bigO(|F||B|)$ time.  If this is the last arc addition, the test takes $\bigO(n^2)$ time.  Each
earlier addition does not create a cycle; for such an addition, each pair $x$ in $F$
and $y$ in $B$ is related after the addition but not before: before the reordering $x <
y$, so if $x$ and $y$ were related there would be a path from $x$ to $y$, and the addition of
$(v, w)$ would create a cycle, consisting of a path from $w$ to $x$, the path
from $x$ to $y$, a path from $y$ to $v$, and arc $(v,w)$. Since there are at
most ${n \choose 2}$ related vertex pairs, the time for all cycle tests other than the last is $\bigO(n^2)$.  \qed

\end{proof}

For each move of a vertex during reordering, we define the {\em distance} of the move
to be the absolute value of the difference between the positions of the vertex in the
old and new orders.

\begin{lemma}
\label{lem:sum-distances}
Over all arc additions, except the last one if it creates a cycle, the time spent by
topological search doing search and reordering is at most a constant times the
sum of the distances of all the vertex moves.

\end{lemma}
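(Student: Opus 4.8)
The plan is to show that every unit of time spent outside the final search can be charged to the distance of some vertex move. First I would break the running time of a single arc addition (that does not create a cycle) into its three phases: the search, the cycle test, and the reordering. The cycle test is handled separately by Lemma~\ref{lem:top-cycle-time}, so here I would account only for the search and the reordering, and argue each is $\bigO(1)$ plus a constant times the total move distance for that addition.

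For the reordering, the key observation is that the two while loops of \reorder together scan exactly the positions from $k$ through the final position of the last vertex placed by the forward loop, and from $k-1$ down through the final position of the last vertex placed by the backward loop. Every position scanned by the forward loop is either the old or new position of a vertex in $\overline{F}$ (it gets emptied and refilled), so the number of iterations of the forward loop is at most the number of vertices in $\overline{F}$ plus the span of positions they collectively occupy; but that span is bounded by the maximum distance any single vertex in $\overline{F}$ moves, and in fact by the sum of those distances. More carefully: each vertex of $\overline{F}$ is ``touched'' $\bigO(1)$ times by the loop, and each empty position that is filled corresponds to a vertex whose position strictly increases, so the loop length is $\bigO(1)$ per vertex moved plus the number of distinct positions visited, which is at most one more than the largest move distance among vertices of $\overline{F}$. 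Summing, the forward-loop time is $\bigO(1)$ plus a constant times the sum of the distances of the moves of vertices in $\overline{F}$; symmetrically for the backward loop and $\overline{B}$.

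For the search phase, I would argue that \topologicalsearch examines each affected position a bounded number of times: the outer loop alternately advances $i$ and retreats $j$, and the inner \textbf{while} loops scan each position between $w$ and $v$ at most once overall (each scan of a non-null position $\vertex(i)$ either finds an incoming arc from $F$, in which case $\vertex(i)$ is injected into $F$ and that vertex will be moved, or finds none, in which case $i$ is incremented past it permanently). So the search time for this addition is $\bigO(1)$ plus a constant times $(\position(v)-\position(w))$; and every one of those intermediate positions is either the position of a vertex that ends up in $\overline{F}\cup\overline{B}$ (hence moves) or of a vertex that stays put but lies between a moved vertex's old and new positions. The point is that the interval $[\position(w),\position(v)]$ has length at most the sum over $x\in\overline{F}$ of the distance $x$ moves plus the sum over $y\in\overline{B}$ of the distance $y$ moves, because the reordering must slide every vertex of $\overline{F}$ up past (at least) all the vertices of $\overline{B}$ that were originally above it and vice versa, so the total displacement accounts for the whole affected interval.

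The main obstacle I anticipate is the careful bookkeeping that ties the length of the scanned interval (and the inner-loop scans that re-examine positions) to the sum of move distances, rather than just to the number of moved vertices: a naive count gives only $\bigO(n)$ per addition, which summed over $m$ additions is $\bigO(nm)$, far worse than what the later $\bigO(n^{5/2})$ analysis needs. The resolution is that the affected interval $[\position(w),\position(v)]$ is partitioned, after reordering, into the new positions of $\overline{B}$ (all at the bottom), followed by the new positions of the unmoved affected vertices (which keep their relative order), followed by the new positions of $\overline{F}$ (all at the top); since the unmoved vertices do not change position, the length of the interval equals $|\overline{F}| + |\overline{B}| + (\text{\# unmoved affected vertices})$, and each unmoved affected vertex sits between some vertex of $\overline{F}$ that passed over it going up and some vertex of $\overline{B}$ that passed over it going down, so it is double-counted in $\sum_{x} \text{dist}(x)$. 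Hence the interval length, and therefore the search time and reordering time for this addition, is $\bigO(1)$ plus a constant times the sum of the move distances for this addition; summing over all additions except the last gives the lemma.
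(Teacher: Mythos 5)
There is a genuine gap, and it is exactly the cost that the lemma exists to control. Your accounting treats each evaluation of the guard ``$\exists u \in F\,(A(u,\vertex(i))=1)$'' (and the analogous guards in \topologicalsearch\ and in the backward loop) as a unit-cost operation, so you count only position scans and conclude the time is $\bigO(1)$ plus the length of the affected interval. But with an adjacency-matrix representation that test requires probing $A(\position(u),i)$ for every vertex $u$ currently in the queue $F$, i.e.\ it costs $\Thta(|F|)$ per position, and $|F|$ can be as large as the number of positions scanned. So your intermediate claim ``search time is $\bigO(\position(v)-\position(w))$'' is false: if $w$ has many out-neighbors that enter $F$ early and leave late, the number of matrix probes is $\omga(L)$ where $L$ is the interval length. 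The paper's proof is built precisely around this: it charges each individual probe $A(\position(q),i)$ to the queued vertex $q$, and observes that $q$ is probed against position $i$ only while $q$ is in the queue, which is exactly while the scan pointer lies strictly between $q$'s old position $i_1$ and its new position $i_2$ (and each such probe happens in the search or in the reordering, not both); hence $q$ receives at most $i_2-i_1=\mathrm{dist}(q)$ probes, and summing over $q$ gives the lemma. Your argument never performs this per-probe charging, and without it the per-addition bound degrades to $\bigO(L\cdot|F|)$ justified only as $\bigO(L)$, which does not establish the statement.

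A secondary error: your structural claim that after reordering the affected interval consists of the new positions of $\overline{B}$, then the unmoved affected vertices, then $\overline{F}$ is not how \reorder\ behaves. Unmoved vertices keep their absolute positions and are interleaved with the reinserted vertices of $\overline{F}$ and $\overline{B}$ (in Figure~\ref{fig:top-search}, the unmoved vertex $i$ ends up between $f$ and $h$, both of which are in $\overline{F}$). The conclusion you want from it --- that $\position(v)-\position(w)$ is at most the sum of move distances --- is true, but for a much simpler reason ($w$ moves up to a position $\ge k$ and $v$ moves down to a position $<k$, so $\mathrm{dist}(w)+\mathrm{dist}(v)\ge\position(v)-\position(w)$); in any case bounding the interval length is not sufficient, for the reason above.
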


\begin{proof}
Consider an arc addition that triggers a search and reordering.  Consider a vertex $q$
that is moved to a higher position; that is, it is added to $F$ during either the
search or the reordering and eventually placed in a new position during the
reordering.  Let $i_1$ be its position before the reordering and $i_2$ its
position after the reordering.  When $q$ is added to $F$, $i = i_1$; when $q$ is removed
from $F$, $i = i_2$.  For each value of $i$ greater than $i_1$ and no greater
than $i_2$, there may be a test for the existence of an arc $(q, \vertex(i))$: such a test
can occur during forward search or forward reordering but not both.  The number of
such tests is thus at most $i_2 - i_1$, which is the distance $q$ moves.  A
symmetric argument applies to a vertex moved to a lower position.  Every test for an
arc is covered by one of these two cases.  Thus the number of arc tests is at most
the sum of the distances of vertex moves.  The total time spent in search and
reordering is $\bigO(1)$ per increment of $i$, per decrement of $j$, and per arc test.  For
each increment of $i$ or decrement of $j$ there is either an arc test or an insertion of
a vertex into its final position.  The number of such insertions is at most one per
vertex moved.  The lemma follows.  \qed

\end{proof}

It remains to analyze the sum of the distances of the vertex moves. To simplify the
analysis, we decompose the moves into pairwise swaps of vertices.  Consider sorting a
permutation of $1$ through $n$ by doing a sequence of pairwise swaps of out-of-order
elements.  The {\em distance} of a swap is twice the absolute value of the difference
between the positions of the swapped elements; the factor of two accounts for the two
elements that move.  The sequence of swaps is {\em proper} if, once a pair is
swapped, no later swap reverses its order.

Consider the behavior of topological search over a sequence of arc additions,
excluding the last one if it creates a cycle. Identify the vertices with their final
positions.  Then the topological order is a permutation, and the final permutation is
sorted.

\begin{lemma}
\label{lem:proper-swaps}
There is a proper sequence of vertex swaps whose total distance equals the sum of the
distances of all the reordering moves.

\end{lemma}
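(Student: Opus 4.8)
We identify each vertex with its number in the final topological order (that is, the order once every arc addition has been processed, or the order just before the arc addition that creates a cycle). Then at every stage the current topological order is a permutation of $\{1,\dots,n\}$; it starts at some permutation $\sigma_0$, it ends at the identity, and each reordering carries one such permutation to the next. The sum of the distances of all the reordering moves is $\sum_\alpha\sum_q |p_\alpha(q)-p_{\alpha-1}(q)|$, where $p_{\alpha-1}$ and $p_\alpha$ denote vertex positions before and after the $\alpha$-th reordering. Recall that a swap of the two vertices currently at positions $i$ and $j$ reverses the relative order of those two vertices and of every vertex strictly between them, and that its distance $2|i-j|$ equals twice the number of position-boundaries it spans. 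The plan is to construct the desired proper swap sequence one reordering at a time, in chronological order, then concatenate.

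For a single reordering, triggered by adding an arc $(v,w)$ with $v>w$, recall its structure from the proof of Theorem~\ref{thm:top-corr}: it touches only the vertices whose positions lie between those of $w$ and $v$; it moves the set $\overline{F}$ of vertices ever added to $F$ to higher positions, preserving their relative order; it moves the set $\overline{B}$ of vertices ever added to $B$ to lower positions, preserving their relative order; it places every vertex of $\overline{F}$ above every vertex of $\overline{B}$; and it leaves every other vertex of the interval, and that vertex's position, fixed. First I would show that this rearrangement can be realized by a family of swaps, each exchanging a vertex that moves up with a vertex currently to its right (in $\overline{B}$ or fixed) or, symmetrically, a vertex that moves down with a vertex currently to its left, whose total distance equals $\sum_q |p_\alpha(q)-p_{\alpha-1}(q)|$ for that reordering. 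The bookkeeping is a boundary-crossing count: for each boundary between two consecutive positions of the interval, the number of vertices that cross it is twice the net flow across it; $\sum_q|p_\alpha(q)-p_{\alpha-1}(q)|$ equals the total number of vertex--boundary crossings; and each swap crosses each boundary it spans with exactly one vertex going in each direction. Hence any decomposition in which each boundary is spanned by exactly (net flow) swaps has the correct total distance, and one can produce such a decomposition that also always exchanges a currently-misordered pair --- for instance, by repeatedly exchanging an outermost still-misordered pair.

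The heart of the argument is to show that concatenating these per-reordering families in chronological order yields a proper sequence. Since every swap we use exchanges two vertices that are currently misordered relative to the final order, it suffices to prove the claim

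\smallskip
\noindent$(\star)$\quad A reordering never reverses the relative order of a pair of vertices away from the order those vertices occupy in the final topological order; equivalently, the set of pairs that are misordered relative to the final order never grows.
\smallskip

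\noindent Given $(\star)$, every one of our swaps removes a misordered pair and no later swap can reverse a pair an earlier swap has already put in place, so the sequence is proper and its total distance is the sum over all reorderings of the displacement totals, i.e.\ the sum of the distances of all the reordering moves. For the proof of $(\star)$ one first notes that a reordering does not change the relative order of two vertices both of which stay fixed, nor of two both in $\overline{F}$, nor of two both in $\overline{B}$ (within $\overline{F}$, and within $\overline{B}$, the new order is the enqueueing order, which is the old position order). The clean case is when one exchanged vertex is $x\in\overline{F}$ and the other is $y\in\overline{B}$: then before the reordering $x$ precedes $y$ and after it $y$ precedes $x$, and since $(v,w)$ has now been added, $v$ is reachable from $y$ and $x$ is reachable from $w$, so there is a directed path from $y$ to $x$; hence $y$ precedes $x$ in every later topological order, in particular the final one, which is exactly the post-reordering order.

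The step I expect to be the main obstacle is the remaining case of $(\star)$: a pair in which one vertex moves and the other stays fixed. Handling it requires a careful inventory of which vertices the forward and backward scans and the two reordering loops can touch, starting from the fact that a vertex that stays fixed is not reachable from $w$ if its position is below the meeting point of the two searches, and cannot reach $v$ if its position is above that point (otherwise a scan or a reordering loop would have enqueued it). From this one must argue that an exchange involving such a fixed vertex is still in the final-order direction --- in effect, that if a reordering moved some $x\in\overline{F}$ past a fixed vertex $y$ with $y$ ending up before $x$, then no later reordering can bring $x$ back before $y$ without violating the topological-order invariant of Theorem~\ref{thm:top-corr}. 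Once $(\star)$ is established, the rest is immediate, and the single-reordering decomposition of the second paragraph is routine given the boundary-crossing identity.
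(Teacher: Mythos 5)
Your overall skeleton --- decompose each reordering into swaps, concatenate in chronological order, and use reachability to argue properness --- is the paper's, but the proof has a fatal flaw: the claim $(\star)$ is false. Counterexample: start with vertices $a,b,c,d$ in that order with no arcs. Adding $(c,a)$ triggers a search with $F=[a]$, $B=[c]$ that meets at position $3$; the reordering produces $c,b,a,d$, so $a\in\overline{F}$ jumps over the \emph{fixed} vertex $b$ and now $b$ precedes $a$. Next add $(a,b)$: the reordering produces $c,a,b,d$, and if no further arcs are added this is the final order, in which $a$ precedes $b$. So the first reordering reversed the pair $\{a,b\}$ \emph{away} from its final order, and the set of finally-misordered pairs grew. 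This is exactly the ``moving vertex versus fixed vertex'' case you flag as the main obstacle --- it is not an obstacle to be overcome but a genuinely false statement, so no amount of care with the scans will rescue $(\star)$. (A secondary soft spot: even if $(\star)$ held, it constrains only the before/after states of whole reorderings, while properness constrains every intermediate configuration of the swap sequence.)

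The escape, which is what the paper does, is to make sure a pair like $\{a,b\}$ above is never the \emph{endpoint pair} of any swap, since properness (and, more to the point, the two facts Lemma~\ref{lem:swap-distance} actually uses: every swap is of a finally-out-of-order pair, and no pair is swapped twice) only constrains pairs that occur as swap endpoints. Concretely: within one reordering, repeatedly swap some $x\in\overline{F}$ with some $y\in\overline{B}$ such that $x$ sits at a lower position than $y$ and no vertex of $\overline{F}\cup\overline{B}$ lies between them --- the swapped pair \emph{jumps over} the intervening fixed vertices, which are never endpoints. This also repairs your distance accounting: a swap must move both endpoints, so a fixed endpoint would have to be swapped back later, inflating the total distance and re-reversing a swapped pair; with the paper's choice, each mover's displacement is monotone, so summing half of each swap's distance to each endpoint gives exactly the sum of the reordering move distances, with no boundary-flow bookkeeping needed. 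Finally, for every endpoint pair $(x,y)$ with $x\in\overline{F}$, $y\in\overline{B}$, the new arc creates a directed path from $y$ to $x$ (through $v$, the arc $(v,w)$, and $w$); this single fact gives both that the swap is in the final-order direction and that the pair can never be swapped again (the opposite swap would require a path from $x$ to $y$, i.e., a cycle), which is all that is needed.
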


\begin{proof}
Consider an arc addition that triggers a search and reordering.  As in the proof of
Theorem~\ref{thm:top-corr}, let $\overline{F}$ and $\overline{B}$ be the sets of
vertices added to $F$ and to $B$, respectively, during the search and reordering.
Consider the positions of the vertices in $\overline{F} \cup \overline{B}$ before and
after the reordering.  After the reordering, these positions from lowest to highest
are occupied by the vertices in $\overline{B}$ in their original order, followed by
the vertices in $\overline{F}$ in their original order.  We describe a sequence of
swaps that moves the vertices in $\overline{F} \cup \overline{B}$ from their positions
before the reordering to their positions after the reordering.  Given the
outcome of the swaps so far, the next swap is of any two vertices $x$ in $\overline{F}$ and $y$
in $\overline{B}$ such that $x$ is in a smaller position than $y$ and no vertex in
$\overline{F} \cup \overline{B}$ is in a position between that of $x$ and that of $y$.
The swap of $x$ and $y$ moves $x$ higher, moves $y$ lower, and preserves the order of
the vertices in $\overline{F}$ as well as the order of the vertices in
$\overline{B}$.  If no swap is possible, all vertices in $\overline{F}$ must follow
all vertices in $\overline{B}$, and since swaps preserve the order within
$\overline{F}$ and within $\overline{B}$ the vertices are now in their positions
after the reordering.  Only a finite number of swaps can occur, since each vertex can
only move a finite distance (higher for a vertex in $\overline{F}$, lower for a
vertex in $\overline{B}$).  The total distance of the moves of the vertices in
$\overline{F}$ is exactly half the distance of the swaps, as is the total distance of
the moves of the vertices in $\overline{B}$.  Any particular pair of vertices is
swapped at most once.  Repeat this construction for each arc addition.  If an arc
addition causes a swap of $x$ and $y$, with $x$ moving higher and $y$ moving lower,
then the arc addition creates a path from $y$ to $x$, and no later arc addition can
cause a swap of $x$ and $y$.  Thus the swap sequence is proper.  \qed

\end{proof}

The following lemma was proved by Ajwani et al. \cite{Ajwani2006} as part of the
analysis of their $\bigO(n^{11/4})$-time algorithm.  Their proof uses a linear
program. We give a combinatorial argument.

\begin{lemma}
\label{lem:swap-distance}
{\em \cite{Ajwani2006}} Given an initial permutation of $1$ through $n$, any proper
sequence of swaps has total distance $\bigO(n^{5/2})$.

\end{lemma}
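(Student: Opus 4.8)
The plan is to bound the total swap distance by a counting argument over the positions of the permutation, grouping swaps by the \emph{span} they cover. Fix the initial permutation $\pi$ of $1$ through $n$, and consider any proper sequence of swaps. For a swap of elements located at positions $a < b$, call $b - a$ its \emph{length}; its distance is $2(b-a)$. First I would observe the key consequence of properness: if a pair $(x,y)$ is swapped, then afterwards $x$ precedes $y$ permanently, so each unordered pair of \emph{values} is swapped at most once. Thus the total distance is at most $2\sum_{\text{swapped pairs}}(\text{length at time of swap})$, and there are at most $\binom{n}{2}$ swaps. A trivial bound of $n \cdot \binom{n}{2} = \Theta(n^3)$ falls out immediately; the work is to save a factor of $\sqrt{n}$.

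The main idea I would pursue is a \emph{potential / charging} argument that penalizes long swaps. Classify a swap as \emph{short} if its length is at most $\sqrt{n}$ and \emph{long} otherwise. Short swaps contribute total distance at most $2\sqrt{n}\binom{n}{2} = \bigO(n^{5/2})$, which is already within budget, so the entire difficulty is concentrated in the long swaps. For the long swaps, I would set up a potential function counting ``inversions weighted by how far apart they are'' — concretely, something like $\Phi = \sum_{\text{inverted pairs }(x,y)} f(\text{current distance between } x,y)$ for a suitable concave $f$ — and argue that (i) the initial potential is $\bigO(n^{5/2})$ (or $\bigO(n^2)$, whichever the arithmetic forces), (ii) $\Phi$ never increases under a proper swap, and (iii) a long swap of length $\ell$ decreases $\Phi$ by $\Omega(\ell/\sqrt n)$ or so, because the many pairs it passes over each have their separation cut. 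Summing, the total length of long swaps is $\bigO(\sqrt n \cdot \Phi_{\text{init}})$, giving $\bigO(n^{5/2})$. The precise choice of $f$ — linear in the ``overlap'' count, versus something logarithmic — and the exact threshold $\sqrt n$ are the knobs to tune so that (i) and (iii) both land at $n^{5/2}$.

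An alternative combinatorial route, which may be cleaner, is to charge each unit of swap distance to a pair of positions $(p,q)$ that the swapped elements straddle, and then bound, for each fixed pair $(p,q)$, how many swaps can straddle it. Since properness forces every value-pair to be swapped at most once, and a swap straddling $(p,q)$ moves the element at the ``high'' side leftward past position $q$ and the element at the ``low'' side rightward past $p$, one shows a position-pair $(p,q)$ can be straddled by at most $\bigO(\min(q-p,\; n-(q-p)))$... — and more importantly, summing a per-position-pair bound that is $\bigO(\sqrt n)$ on average over all $\binom n2$ position pairs yields $\bigO(n^{5/2})$. The crux in either approach is the same: extracting from \emph{properness alone} a quantitative statement that long swaps cannot happen too often, i.e.\ that a single element cannot be carried across a long stretch by many disjoint-in-time swaps without those swaps interfering. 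That interference lemma — long swaps are ``self-limiting'' because each one reorders $\Omega(\ell)$ pairs that can then never be reordered again — is the main obstacle, and it is exactly where I expect the $\sqrt n$ saving over the naive $n^3$ bound to come from. I would replace Ajwani et al.'s linear-programming derivation of this fact with a direct double-counting of straddled position-pairs.
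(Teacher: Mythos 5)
There is a genuine gap, and it sits exactly where you locate the difficulty: the long swaps. Your dichotomy is by \emph{positional} length, but the quantity that a properness-based potential argument actually controls is the product of positional distance and \emph{value} difference. The paper's proof uses the signed all-pairs potential $\Phi(\RPi)=\sum_{i<j}(\RPi(i)-\RPi(j))$, which satisfies $|\Phi|\le n^3$, and a direct computation shows that swapping out-of-order elements at positions $i<j$ decreases $\Phi$ by exactly $d\cdot(\RPi(i)-\RPi(j))$, where $d=2(j-i)$ is the swap's distance. A swap that is positionally long but exchanges value-adjacent elements (say values $k$ and $k+1$ sitting $n/2$ apart) decreases this potential by only $\Thta(d)$, so your step (iii) --- that every long swap pays $\Omga(\ell/\sqrt n)$ against an $\bigO(n^2)$ potential, or $\Omga(\ell\sqrt n)$ against an $\bigO(n^3)$ one --- fails for such swaps, and they cannot be dismissed: there can be $\Thta(n^{3/2})$ value-pairs at value-distance below $\sqrt n$, each potentially swapped once at positional distance $\Thta(n)$. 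The correct dichotomy is therefore by value difference $\RPi(i)-\RPi(j)$: pairs with value difference below $\sqrt n$ number only $\bigO(n^{3/2})$ and each is swapped at most once at distance $\bigO(n)$, giving $\bigO(n^{5/2})$; swaps with value difference at least $\sqrt n$ decrease $\Phi$ by at least $d\sqrt n$, and the total decrease is $\bigO(n^3)$, giving $\bigO(n^{5/2})$ again.

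Two further points. Your candidate potential $\sum_{\text{inverted pairs}}f(\cdot)$ is not monotone: when $x$ at position $a$ is swapped with $y$ at position $b$, every element $z$ strictly between them has its relative order with both $x$ and $y$ flipped, and roughly half of these flips can \emph{create} new inversions; properness forbids re-reversing only the pairs that are explicitly swapped, not pairs reordered as side effects, so your ``interference lemma'' (that a long swap permanently settles $\Omga(\ell)$ pairs) does not follow from properness. The signed all-pairs potential sidesteps this because a side-effect flip merely changes the sign of a term, and the bookkeeping nets out to the clean $d\cdot(\RPi(i)-\RPi(j))$ decrease. Your short-swap bound and your alternative straddling-count sketch are fine as far as they go, but neither supplies the missing control on long, value-close swaps.
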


\begin{proof}
If $\RPi$ is a permutation of 1 to $n$, we denote by $\RPi(i)$ the $i^{\text{th}}$
element of $\RPi$.  We define the {\em potential} of $\RPi$ to be $\sum_{i<j}
(\RPi(i) - \RPi(j))$.  The potential is always between $-n^3$ and $n^3$.  We
compute the change in potential caused by a swap in a proper swap sequence. 
Let $\RPi$ be the permutation before the swap, and let $i < j$ be the positions
in $\RPi$ of the pair of elements ($\RPi(i)$ and $\RPi(j)$) that are swapped. 
The distance $d$ of the swap is $2(j - i)$.  Since the swap sequence is proper,
$\RPi(i) > \RPi(j)$.  Swapping $\RPi(i)$ and $\RPi(j)$ reduces the contribution
to the potential of the pair $i,j$ by $2(\RPi(i) - \RPi(j))$.  The swap also
changes the contributions to the potential of pairs other than $i,j$,
specifically those pairs exactly one of whose elements is $i$ or $j$.  We
consider three cases for the other element of the pair, say $k$.  If $k < i$,
the swap increases the contribution of $k,i$ by $\RPi(i) - \RPi(j)$ and
decreases the contribution of $k,j$ by $\RPi(i) - \RPi(j)$, for a net change of
zero.  Similarly, if $j < k$, the swap decreases the contribution of $i,k$ by
$\RPi(i) - \RPi(j)$ and increases the contribution of $j,k$ by $\RPi(i)
- \RPi(j)$, for a net change of zero. More interesting is what happens if $i < k
< j$.  In this case the swap decreases the contribution of both $i,k$ and $k,j$
by $\RPi(i) - \RPi(j)$.  There are $j - i - 1$ such values of $k$.  Summing
over all pairs, we find that the swap decreases the potential of the
permutation by $2(\RPi(i) - \RPi(j))( 1 + j - i - 1) = d(\RPi(i) - \RPi(j))$.

Call a swap of $\RPi(i)$ and $\RPi(j)$ {\em small} if $\RPi(i) - \RPi(j) < \sqrt{n}$
and {\em big} otherwise.  Because the swap sequence is proper, a given pair
can be swapped at most once.  Thus there are $\bigO(n^{3/2})$ small swaps.  Each has
distance at most $2(n - 1)$, so the sum of the distances of all small swaps is
$\bigO(n^{5/2})$.  A big swap of distance $d$ reduces the potential by at least
$d\sqrt{n}$.  Since the sum of the potential decreases over all swaps is
$\bigO(n^3)$, the sum of the distances of all big swaps is $\bigO(n^{5/2})$.  \qed

\end{proof}

The proof of Lemma \ref{lem:swap-distance} does not require that the swap sequence be
proper; it suffices that every swap is of an out-of-order pair and no pair of
elements is swapped more than once.  The lemma may even be true if all swaps
are of out-of-order pairs with some pairs swapped repeatedly, but our proof
fails in this case, because our bound on the distance of the small swaps
requires that there be $\bigO(n^{3/2})$ of them.

\begin{theorem}
\label{thm:top-search-time}
Over $m$ arc additions, topological search spends $\bigO(n^{5/2})$ time.

\end{theorem}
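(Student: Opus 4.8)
The plan is to assemble the theorem from results already established, after splitting the total running time into three parts: (i) cycle testing, (ii) search and reordering over all arc additions except possibly the last, and (iii) the single search done by the last arc addition in the event that it creates a cycle. For part (i), Lemma~\ref{lem:top-cycle-time} already gives $\bigO(n^2)$ in total, which is within $\bigO(n^{5/2})$, so nothing further is needed there.

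For part (ii), I would consider every arc addition with $v > w$ other than the last one in the case the last addition creates a cycle; each such addition triggers a search followed by a reordering. Here I would chain the three reordering lemmas in sequence: Lemma~\ref{lem:sum-distances} bounds the combined search-and-reordering time for these additions by a constant times the sum of the distances of all the vertex moves; Lemma~\ref{lem:proper-swaps} exhibits a \emph{proper} sequence of pairwise vertex swaps of a permutation of $1$ through $n$ whose total distance equals that sum exactly; and Lemma~\ref{lem:swap-distance} bounds the total distance of any proper swap sequence on a permutation of $1$ through $n$ by $\bigO(n^{5/2})$. Composing these three bounds yields $\bigO(n^{5/2})$ for all the search and reordering work attributable to non-cycle-creating additions.

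For part (iii), the last arc addition in the case that it creates a cycle is exactly the situation excluded from Lemmas~\ref{lem:sum-distances} and~\ref{lem:proper-swaps}: no reordering follows it, so the charging of arc tests to moved vertices no longer applies, and I would instead bound this one search directly. The search scans at most $n$ positions, and scanning a position costs $\bigO(1)$ per vertex currently in $F$ (or in $B$), a count that never exceeds $n$, so the search runs in $\bigO(n^2)$ time; the subsequent cycle test for this addition is already inside the $\bigO(n^2)$ term from Lemma~\ref{lem:top-cycle-time}, and no reordering is performed. It then remains to add $\bigO(n^2)$ (or $\bigO(n+m)$ with a hashed adjacency matrix) for initialization and $\bigO(m)$ for arc insertions and for the trivial additions with $v < w$; since $m \le {n \choose 2}$ and $m = \Omga(n)$, all of these contributions are $\bigO(n^{5/2})$, so the grand total is $\bigO(n^{5/2})$.

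I do not anticipate a genuine obstacle, since the substantive combinatorial content sits in Lemma~\ref{lem:swap-distance}, which is already proved, and the passage from vertex moves to a proper swap sequence is handled by Lemma~\ref{lem:proper-swaps}. The only point that demands care is the accounting: cleanly separating the cycle-testing time from the search-and-reordering time, and disposing of the last cycle-creating addition by an explicit $\bigO(n^2)$ estimate rather than through the proper-swap machinery that does not cover it.
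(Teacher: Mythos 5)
Your proposal is correct and follows essentially the same route as the paper's proof, which likewise bounds the last (cycle-creating) addition by $\bigO(n^2)$ directly and chains Lemmas~\ref{lem:top-cycle-time}--\ref{lem:swap-distance} for everything else. Your version merely spells out the accounting (and the $\bigO(n^2)$ bound for the final search) in more detail than the paper's two-line argument.
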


\begin{proof}
Topological search spends $\bigO(n^2)$ time on the last arc addition.  By
Lemmas~\ref{lem:top-cycle-time}-\ref{lem:swap-distance}, it spends
$\bigO(n^{5/2})$ on all the rest. \qed

\end{proof}

The bound of Theorem~\ref{thm:top-search-time} may be far from tight.  In the
remainder of this section we discuss lower bounds on the running time of topological
search, and we speculate on improving the upper bound.

Katriel~\cite{Katriel2004} showed that any topological sorting algorithm that is
local (as defined in Section~\ref{sec:lim-search}: the algorithm reorders
only affected vertices) must do $\Omga(n^2)$ vertex renumberings on a sequence of arc
additions that form a path.  This bound is $\Omga(n)$ amortized per arc on a graph of
$\bigO(n)$ arcs.  She also proved that the topological sorting algorithm of
Pearce and Kelly~\cite{Pearce2006} does $\bigO(n^2)$ vertex renumberings.  Since topological
search is a local topological sorting algorithm, her lower bound applies to this
algorithm.  Her lower bound on vertex reorderings is tight for topological search,
since a proper sequence of swaps contains at most ${n \choose 2}$ swaps, and
each pair of reorderings corresponds to at least one swap.

To get a bigger lower bound, we must bound the total distance of vertex moves, not
their number.  Ajwani~\cite{AjwaniThesis} gave a result for a related problem that
implies the following: on a sequence of arc additions that form a path, topological
search can take $\Omga(n^2\log n)$ time.  We proved this result independently in
our conference paper~\cite{Haeupler2008b}; our proof uses the same construction
as Ajwani's proof. This bound is $\Omga(n\log n)$ amortized per arc on a graph
of $\bigO(n)$ arcs.

We do not know if Ajwani's bound is tight for graphs with $\bigO(n)$ arcs, but it is
not tight for denser graphs.  There is an interesting connection between the running
time of topological search and the notorious $k$-levels problem of computational
geometry.  Uri Zwick~(private communication, 2009) pointed this out to us. The
$k$-levels problem is the following: Consider the intersections of $n$ lines in
the plane in general position: each intersection is of only two lines, and the
intersections have distinct $x$-coordinates.  An intersection is a {\em
$k$-intersection} if there are exactly $k$ lines below it (and $n - k - 2$
lines above it).  What is the maximum number of $k$-intersections as a function
of $n$ and $k$?  For our purposes it suffices to consider $n$ even and $k = n/2
- 1$.  We call an intersection with $n/2 - 1$ lines below it a {\em halving
intersection}. The current best upper and lower bounds on the maximum number of
halving intersections are $\bigO(n^{4/3})$~\cite{Dey1998} and
$\Omga(n2^{\sqrt{2\lg n}}/\sqrt{\lg n})$ (~\cite{Nivasch2008}; see also~\cite{Toth2001}).

The relationship between the $k$-levels problem and our problem does not require
that the lines be straight; it only requires that each pair intersect only once.
Thus instead of a set of lines we consider a set of {\em pseudolines},
arbitrary continuous functions from the real numbers to the real numbers, each
pair of which intersect at most once.  Such a set is in {\em general position}
if no point is common to three or more pseudolines, no two intersections of
pseudolines have the same $x$-coordinate, and each intersection is a crossing
intersection: if pseudolines $P$ and $Q$ intersect and $P$ is above $Q$ to the
left of the intersection, then $Q$ is above $P$ to the right of the
intersection.  The best bounds on the number of halving intersections of $2n$
pseudolines in general position are $\bigO(n^{4/3})$ (~\cite{TamakiT2003}; see
also~\cite{SharirS2003}) (the same as for lines) and $\Omga(n2^{\sqrt{2\lg
n}})$~\cite{Zwick2005}. The latter gives a lower bound of $\Omga(n^2
2^{\sqrt{2\lg n}})$ on the worst-case running time of topological search, as we
now show.

\begin{theorem}
\label{thm:halving-lb}
Let $n$ be even.  On a graph of $3n/2$ vertices, topological search can spend
$\Omga(n)$ time per arc addition for at least $H(n)$ arc additions, where $H(n)$
is the maximum number of halving intersections of $n$ pseudolines in the plane
in general position.

\end{theorem}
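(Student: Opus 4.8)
The plan is to run topological search on a graph that encodes an optimal pseudoline arrangement, so that each halving intersection forces $\Omga(n)$ work. Fix an arrangement of $n$ pseudolines in general position achieving $H(n)$ halving intersections, and read the $x$-coordinate as time: as $x$ increases from $-\infty$ to $+\infty$ the vertical order of the pseudolines is a permutation of them that changes by one adjacent transposition at each intersection. The graph will have $3n/2$ vertices: one \emph{line vertex} per pseudoline, together with $n/2$ \emph{separator vertices} that are never given any incident arc and serve only to pad out $n/2$ positions. Since the graph starts empty, the initial topological order may be chosen freely (Section~\ref{sec:top-search}); choose it so that the $n/2$ lowest pseudolines, in their vertical order at $x=-\infty$, occupy positions $1,\dots,n/2$, the separators occupy positions $n/2+1,\dots,n$, and the $n/2$ highest pseudolines, in vertical order, occupy positions $n+1,\dots,3n/2$.

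Process the intersections in increasing order of $x$. At the intersection of the two pseudolines currently in vertical positions $k$ and $k+1$, with line $a$ below and line $b$ above just before the crossing (so that $b$ lies below $a$ for all larger $x$, since pseudolines cross at most once), add the arc from $b$ to $a$. I would then prove, by induction on the processed prefix, that the graph stays acyclic and that after each step the topological order produced by the algorithm is the \emph{block order}: current lower-half line vertices in positions $1,\dots,n/2$ in their current vertical order, separators in positions $n/2+1,\dots,n$, current upper-half line vertices in positions $n+1,\dots,3n/2$ in their current vertical order. Acyclicity is easy: two pseudolines cross at most once, so the order the arcs impose on the line vertices is a sub-order of their vertical order at $x=+\infty$, and separators are isolated, so every arc ever added lies in one fixed acyclic relation and no cycle is ever reported. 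For the order claim, if $k\ne n/2$ the crossing is within a half, $a$ and $b$ are adjacent in the current order, and adding this arc runs the search with $j=i+1$: it returns at once and its reordering just swaps $a$ and $b$, in $\bigO(1)$ time, which is harmless.

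The interesting case is $k=n/2$, a halving intersection: by the invariant $a$ then sits at position $n/2$ and $b$ at position $n+1$, with exactly the $n/2$ separators between them. Adding the arc from $b$ to $a$ runs \topologicalsearch with $\position(w)=\position(a)=n/2$ and $\position(v)=\position(b)=n+1$. Since no separator has an incident arc, the search merely advances its pointers toward each other across the $n/2$ separator positions, adding nothing to $F$ or $B$, and the reordering then reinserts $a$ at position $n+1$ and $b$ at position $n/2$, leaving every separator fixed — exactly the next block order. This costs $\Omga(n)$, because the search examines all $n/2$ positions strictly between $\position(w)$ and $\position(v)$ (the two pointers sweep toward each other and the loop runs until they meet). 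As topological search is deterministic, it therefore spends $\Omga(n)$ time on each of the $H(n)$ arc additions that correspond to halving intersections, none of which creates a cycle; this is the claim.

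The step that takes the most care is the inductive maintenance of the block order: one must verify that each intersection's arc addition carries one block order to the next — in particular that topological search, being local (it reorders only affected vertices), never moves a separator, and that at a halving step the two line vertices at positions $n/2$ and $n+1$ are exactly the pair that is crossing. Everything else is a direct reading of the search and reorder routines (Figures~\ref{alg:top-search} and~\ref{alg:reorder}) together with the acyclicity observation above.
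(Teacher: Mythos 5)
Your construction is correct and rests on the same reduction as the paper's: one vertex per pseudoline, a middle block of $n/2$ isolated padding vertices, and each halving intersection forcing a swap of the two crossing lines across that block, which costs $\Omga(n)$ both in the search (the $i$ pointer sweeps all the separator positions before meeting $j$) and in the reordering. The realization differs in two ways. First, you add an arc for \emph{every} intersection, whereas the paper adds arcs only at halving intersections; your non-halving additions are adjacent transpositions costing $\bigO(1)$ each and are harmless, since the theorem only needs $H(n)$ expensive additions. Second, this lets you maintain a stronger invariant --- the topological order \emph{equals} the block order tracking the current vertical order of the pseudolines --- which makes the halving step immediate: $F$ and $B$ are singletons and the crossing pair sits exactly at positions $n/2$ and $n+1$. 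The paper instead maintains only the weaker invariant that the first and last thirds agree \emph{as sets} with the bottom and top halves of the arrangement, and must then argue at each halving step that no arc runs from $j$ to a vertex of the first third or from a vertex of the last third to $i$, so that $F$ and $B$ stay confined to the outer thirds. Your version buys a simpler per-step analysis at the price of a denser graph and the extra (easy) bookkeeping for within-half crossings; acyclicity in your version follows from consistency of all arcs with the vertical order at $x=+\infty$, in the paper's from consistency with the fixed initial numbering. Both arguments are sound.
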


\begin{proof}
Given a set of $n$ pseudolines with $H(n)$ halving intersections, we construct a
sequence of $H(n)$ arc additions on a graph of $3n/2$ vertices on which
topological search spends $\Omga(n)$ time on each arc addition.  Given such a
set of pseudolines, choose a value $x_0$ of the $x$-coordinate sufficiently
small that all the halving intersections have $x$-coordinates larger than $x_0$.  Number
the pseudolines from 1 to $n$ from highest to lowest $y$-coordinate at $x_0$, so
that the pseudoline with the highest $y$-coordinate gets number 1 and the one with
the lowest gets number $n$.  Construct a graph with $3n/2$ vertices and an initial
(arbitrary) topological order.  Number the first $n/2$ vertices in order from
$n$ down to $n/2 + 1$, and number the last $n/2$ vertices in order from $n/2$
down to 1, so that the first vertex gets number $n$, the $(n/2)^{\text{th}}$
gets number $n/2 + 1$, the middle $n/2$ get no number, the $(n + 1)^{\text{st}}$
gets number $n/2$, and the last gets number 1.  These numbers are permanent and
are a function only of the initial order.  Identify vertices by their number. 
Process the halving intersections in order by $x$-coordinate.  If the
$k^{\text{th}}$ halving intersection is of pseudolines $i$ and $j$ with $i < j$,
add an arc $(i, j)$ to the graph.  To the left of the intersection, pseudoline
$i$ is above pseudoline $j$; to the right of the intersection, pseudoline $j$ is
above pseudoline $i$. Figure~\ref{fig:halving-lb} illustrates this construction. 

\begin{figure}
\centering
\includegraphics[scale=0.45]{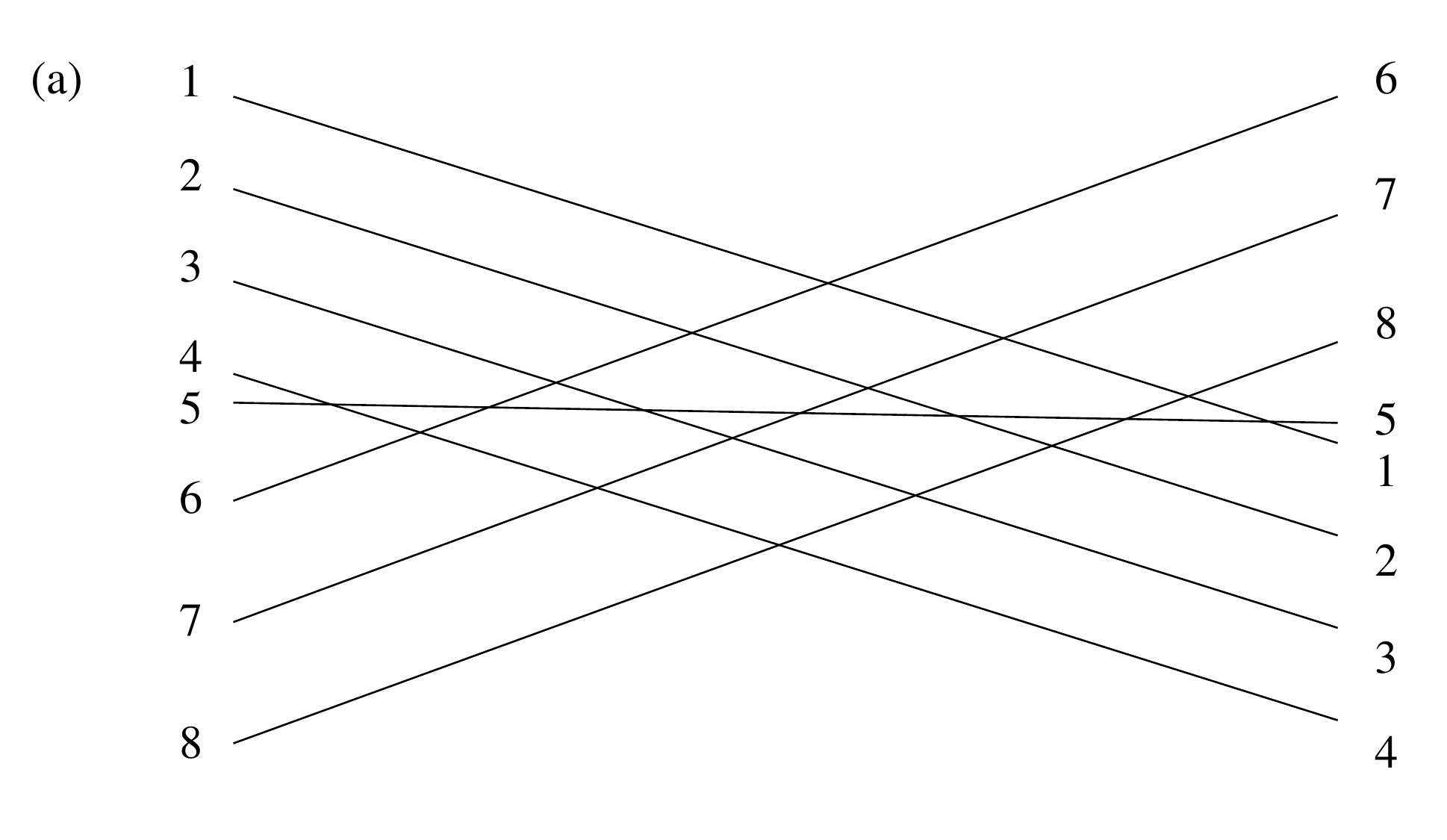}\\
\vspace{5pt}
\includegraphics[scale=0.45]{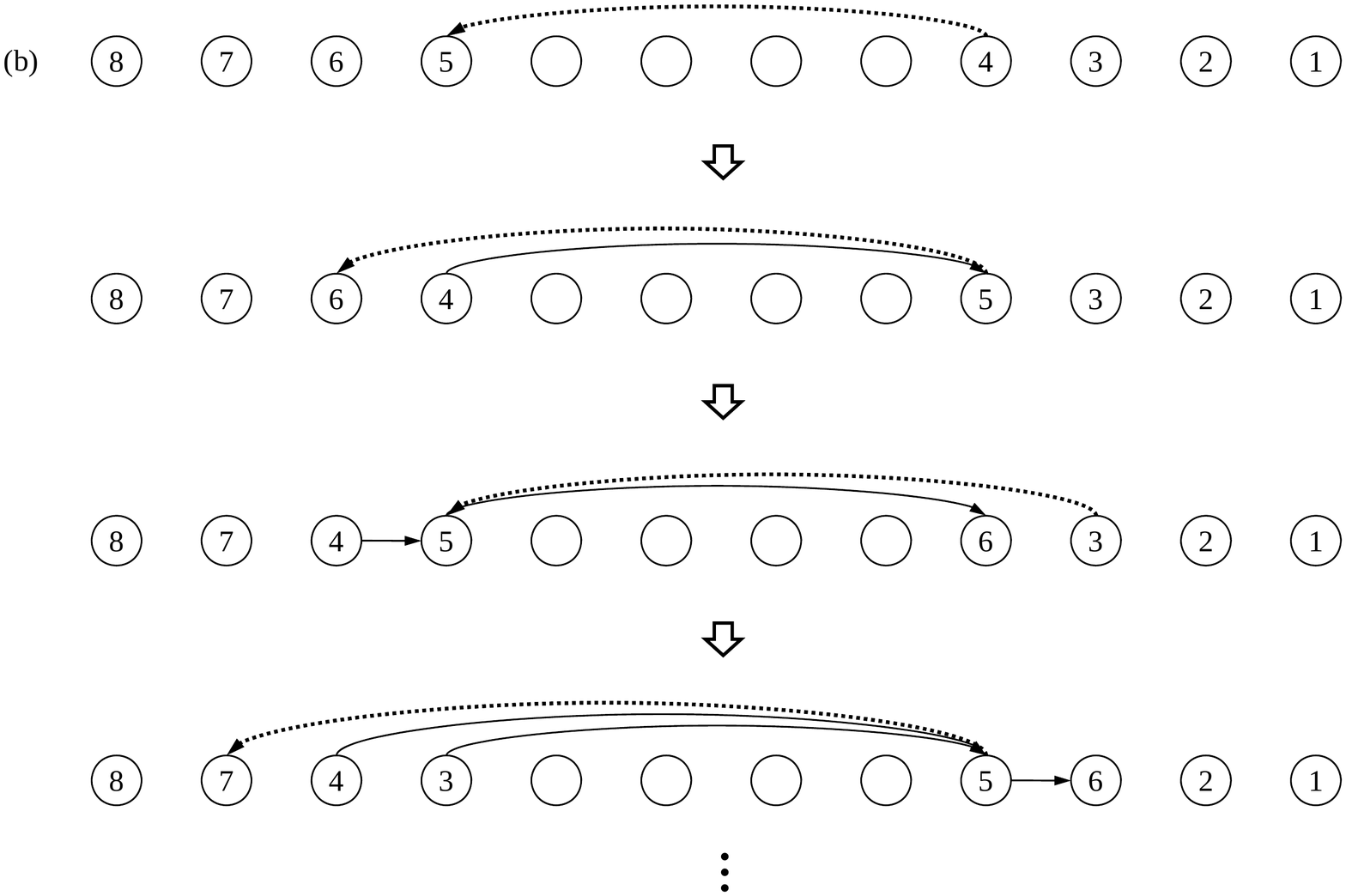}
\caption{(a) A set of $n = 8$ pseudolines with $H(n) = 7$ halving intersections.
Although the pseudolines are straight in this example, in general they need not
be. (b) The corresponding sequence of arc additions on a graph of $3n/2 = 12$
vertices on which topological search takes $\Omega(nH(n))$ time.  The arc
additions correspond to the halving intersections processed in increasing order
by $x$-coordinate; only the first four arc additions are shown.}
\label{fig:halving-lb}
\end{figure}

Since each arc $(i, j)$ has $i < j$, the graph remains acyclic.  Since two
pseudolines have only one intersection, a given arc is added only once. 
Consider running topological search on this set of arc additions.  We claim that each arc
addition moves exactly one vertex from the last third of the topological order
to the first third and vice-versa; the vertices in the middle third are never
reordered.  Each such arc addition takes $\Omga(n)$ time, giving the theorem.

To verify the claim, we prove the following invariant on the number of arc
additions: the set of vertices in the first or last third, respectively, of the
topological order have the same numbers as the bottom or top half of the
pseudolines, respectively.  In particular, a halving intersection of two
pseudolines $i, j$ with $i < j$ corresponds to a swap of vertices $i$ in the top
third and $j$ in the bottom third, giving the claim.

Intersections that are not halving intersections preserve the invariant. 
Suppose the invariant is true just to the left of a halving intersection of
pseudolines $i$ and $j$ with $i < j$.  Just to the left of the intersection,
pseudolines $i$ and $j$ are the $n/2$ and $n/2 + 1$ highest pseudolines,
respectively. By the induction hypothesis, just before the addition of $(i, j)$ vertex $i$ is in the
last third of the topological order and vertex $j$ is in the first third. 
Suppose that just before the addition of $(i, j)$ there is an arc $(j, k)$ with
$k$ in the first third.  Then $j < k$, but pseudoline $k$ is in the bottom half
and hence must be below pseudoline $j$. This is impossible, since the existence
of the arc $(j, k)$ implies that pseudoline $k$ crossed above pseudoline $j$ to
the left of the intersection of $i$ and $j$.  Thus there can be no such arc $(j, k)$. 
Symmetrically, there can be no arc $(k, i)$ with $k$ in the last third.  It
follows that the topological search triggered by the addition of $(i, j)$ will
compute a set $F$ all of whose vertices except $j$ are in the last third and a
set $B$ all of whose vertices except $i$ are in the first third.  The
subsequent reordering will move $i$ to the first third, move $j$ to the last
third, and possibly reorder other vertices within the first and last thirds. 
Thus the invariant remains true after the addition of $(i, j)$. By induction
the invariant holds, giving the claim, and the theorem.  \qed

\end{proof}

\begin{corollary}
There is a constant $c > 0$ such that, for all $n$, there is a
sequence of arc additions on which topological search takes $cn^2
2^{\sqrt{2\lg n}}$ time.

\end{corollary}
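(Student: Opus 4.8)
The plan is to read the corollary off from Theorem~\ref{thm:halving-lb} by substituting the best known lower bound on the number of halving intersections of pseudolines, namely $H(N)=\Omga\big(N\,2^{\sqrt{2\lg N}}\big)$ (for $N$ a suitable even value), due to Zwick~\cite{Zwick2005}, and then re-expressing the resulting time bound in terms of the number of vertices.

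First I would apply Theorem~\ref{thm:halving-lb} with a large even integer $N$: taking a set of $N$ pseudolines in general position that attains $H(N)$ halving intersections, the theorem produces a sequence of $H(N)$ arc additions on a graph of $3N/2$ vertices on each of which topological search spends $\Omga(N)$ time. Hence on this sequence topological search spends at least $\Omga\big(N\cdot H(N)\big)=\Omga\big(N^2\,2^{\sqrt{2\lg N}}\big)$ time in total. Note that getting $H(N)=\Omga\big(N\,2^{\sqrt{2\lg N}}\big)$ from the cited statement about $2n$ pseudolines already uses the log-shift estimate discussed below.

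Next I would pass from the pseudoline count $N$ to the vertex count $n$. Given a target $n$, let $N=2\lfloor n/3\rfloor$, so that $N$ is even, $3N/2\le n$, and $N=\Thta(n)$; realize the $3N/2$-vertex construction on $n$ vertices by adding $n-3N/2$ further vertices that are incident to no arc, which does not change the behavior of the search on the constructed arc sequence. It then remains to check that replacing $N$ by $n$ inside the doubly nested exponential costs only a constant factor: since $N=\Thta(n)$ we have $\lg N=\lg n+\bigO(1)$, hence $\sqrt{2\lg N}=\sqrt{2\lg n}\,\sqrt{1+\bigO(1/\lg n)}=\sqrt{2\lg n}+\lilo(1)$, so $2^{\sqrt{2\lg N}}=\Thta\big(2^{\sqrt{2\lg n}}\big)$; together with $N^2=\Thta(n^2)$ this yields a lower bound of $\Omga\big(n^2\,2^{\sqrt{2\lg n}}\big)$, which is exactly the statement with $c$ the hidden constant.

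The one step that deserves care — and the only real content beyond quoting a known bound and invoking Theorem~\ref{thm:halving-lb} — is this last asymptotic estimate: an additive $\bigO(1)$ perturbation of $\lg n$ inside $2^{\sqrt{2\lg n}}$ must be shown to be absorbed into the leading constant (the same remark also handles the constant-factor slack between $3N/2$ and $n$ in the $n^2$ factor). I would also remark that the cited pseudoline lower bound is available for all (large) $N$, so no monotonicity of $H$ is needed; if one preferred, the isolated-vertex padding above already lets one reach any vertex count from the nearest admissible value of $N$.
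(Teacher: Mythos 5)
Your proposal is correct and follows exactly the route the paper intends: the corollary is obtained by combining Theorem~\ref{thm:halving-lb} with Zwick's $\Omga\big(n\,2^{\sqrt{2\lg n}}\big)$ lower bound on halving intersections of pseudolines, yielding $\Omga(N\cdot H(N))$ total time. The only addition you make is the (legitimate, and correctly executed) bookkeeping showing that passing from $3N/2$ pseudoline-derived vertices to an arbitrary vertex count $n$ perturbs $2^{\sqrt{2\lg n}}$ by only a constant factor, a step the paper leaves implicit.
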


Unfortunately the reduction in the proof of Theorem~\ref{thm:halving-lb} goes
only one way.  We have been unable to construct a reduction in the other
direction, nor are we able to derive a better upper bound for topological
search via the methods used to derive upper bounds on the number of halving
intersections.

\section{Strong Components} \label{sec:strong}

All the known topological ordering algorithms can be extended to the maintenance of
strong components with at most a constant-factor increase in running time.  Pearce
\cite{Pearce2005} and Pearce and Kelly \cite{Pearce2003b} sketch how to extend their
algorithm and that of Marchetti-Spaccamela et al.~\cite{Marchetti1996} to strong
component maintenance. Here we describe how to extend soft-threshold search and
topological search.  The techniques differ slightly for the two algorithms, since one
algorithm is designed for the sparse case and the other for the dense case.

We formulate the problem as follows: Maintain the partition of the vertices defined
by the strong components.  For each strong component, maintain a {\em canonical
vertex}.  The canonical vertex represents the component; the algorithm is free to
choose any vertex in the component to be the canonical vertex.  Support the query
$\find(v)$, which returns the canonical vertex of the component containing vertex $v$.
Maintain a list of the canonical vertices in a topological order of the corresponding
components.

To represent the vertex partition, we use a disjoint set data
structure~\cite{Tarjan1975,Tarjan1984}.  This structure begins with the partition
consisting of singletons and supports find queries and the operation $\unite(x, y)$,
which, given canonical vertices $x$ and $y$, forms the union of the sets containing
$x$ and $y$ and makes $x$ the canonical vertex of the new set.  If the sets are
represented by trees, the finds are done using path compression, and the unites are
done using union by rank, the amortized time per find is $\bigO(1)$ if the total time
charged for the unites is $\bigO(n\log n)$~\cite{Tarjan1984}.  (In fact, the time
charged to the unites can be made much smaller, but this weak bound suffices for us.)
Since searching and reordering take much more than $\Omga(n\log n)$ time, we can
treat the set operations as taking $\bigO(1)$ amortized time each.

To maintain strong components using soft-threshold search, we represent the graph by
storing, for each canonical vertex $x$, a list of arcs out of its component, namely
those arcs $(y, z)$ with $\find(y) = x$, and a list of arcs into its component, namely
those arcs $(y, z)$ with $find(z) = x$.  This represents the graph of strong components,
except that there may be multiple arcs between the same pair of strong components,
and there may be loops, arcs whose ends are in the same component.  When doing a
search, we delete loops instead of traversing them.  When the addition of an arc $(v,
w)$ combines several components into one, we form the incoming list and the outgoing
list of the new component by combining the incoming lists and outgoing lists,
respectively, of the old components.  This takes $\bigO(1)$ time per old component, if
the incoming and outgoing lists are circular.  Deletion of a loop takes $\bigO(1)$ time
if the arc lists are doubly linked.

Henceforth we identify each strong component with its canonical vertex, and we
abbreviate $\find(x)$ by $f(x)$.  If a new arc $(v, w)$ has $f(v) > f(w)$, do a
soft-threshold search forward from $f(w)$ and backward from $f(v)$.  During the
search, do not stop when a forward arc traversal reaches a component in $B$ or 
when a backward arc traversal reaches a component in $F$.  Instead, allow
components to be in both $F$ and $B$.  Once the search stops, form the new
component, if any. Then reorder the canonical vertices and delete from the
order those that are no longer canonical.  Here are the details.  When a new
arc $(v, w)$ has $f(v) > f(w)$, do the search by calling
\softthresholdsearch{$f(v)$,$f(w)$}, where \softthresholdsearch is defined
as in Section~\ref{sec:soft-search} but with the macro \searchstep
redefined as in Figure~\ref{alg:search-step-strong}.  The new version of
\searchstep is just like the old one except that it visits canonical vertices
instead of all vertices, it uses circular instead of linear arc lists, and it
does not do cycle detection: \softthresholdsearch terminates only when $F_A$
or $B_A$ is empty, and it always returns null.

\begin{figure}
\setlength{\algomargin}{8.5em}
\begin{procedure}[H]
{\bf macro} \searchstep{{\rm vertex} $u$, {\rm vertex} $z$}\;
\Indp
    $(q, g) = \outl(u)$; $(h, r) = \inl(z)$\;
    $\outl(u) = \noutl((q, g))$; $\inl(z) = \ninl((h, r))$\; 
    $x = f(g)$; \lIf {$\outl(u) = \foutl(u)$} {$F_A = F_A \sm \{u\}$}\; 
    $y = f(h)$; \lIf {$\inl(z) = \finl(z)$} {$B_A = B_A \sm \{z\}$}\; 
    \lIf {$u = x$} {delete $(q, g)$}; \lIf {$y = z$} {delete $(h, r)$}\; 
    \If {$x \not\in F$} {
    	$F = F \cup \{x\}$; $\outl(x) = \foutl(x)$\;
    	\lIf {$\outl(x) \ne \nlv$} {$F_A = F_A \cup \{x\}$}\;
    }
    \If {$y \not\in B$} {
    	$B = B \cup \{y\}$; $\inl(y) = \finl(y)$\; 
    	\lIf {$\inl(y) \ne \nlv$} {$B_A = B_A \cup \{y\}$}\;
    }
\end{procedure}
\caption{Redefinition of \searchstep to find strong components using
soft-threshold search.}
\label{alg:search-step-strong}
\end{figure}

Once the search finishes, let $t = \min(\{f(v)\} \cup \{x \in F| \outl(x) \ne
\nlv\})$.  Compute the sets $F_<$ and $B_>$. Find the new component, if any, by
running a static linear-time strong components algorithm on the subgraph of the
graph of strong components whose vertex set is $X = F_< \cup \{t\} \cup B_>$
and whose arc set is $Y = \{(f(u), f(x))|(u, x) \text{ is an arc with } f(u)
\text{ in } \linebreak[0] F_< \text{ and } f(u) \ne f(x)\} \cup \{(f(y),
f(z))|(y, z) \text{ is an arc with } f(z) \in B_> \text{ and } f(y) \ne f(z)\}$.  If a new
component is found, combine the old components it contains into a new component
with canonical vertex $v$.

Reorder the list of vertices in topological order by moving the vertices in $X -
\{t\}$ as in Section~\ref{sec:2way-search}.  Then delete from the list all
vertices that are no longer canonical, namely the canonical vertices other than $f(v)$ of the
old components contained in the new component.  

{\em Remark}: Since the addition of $(v, w)$ can only form a single new component,
running a strong components algorithm to find this component is overkill.  A simpler
alternative is to unmark all vertices in $X$ and then run a forward depth-first
search from $f(w)$, traversing arcs in $Y$.  During the search, mark vertices as
follows: Mark $f(v)$ if it is reached. When retreating along an arc $(f(u), f(x))$,
mark $f(u)$ if $f(x)$ is marked.  At the end of the search, the marked vertices are
the canonical vertices contained in the new component.

\begin{theorem}
\label{thm:strong-soft-corr}
Maintaining strong components via soft-threshold search is correct.

\end{theorem}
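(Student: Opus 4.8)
The plan is to verify two things: that after the algorithm processes a new arc the recorded vertex partition is exactly the set of strong components of the augmented graph, and that the stored list is a topological order of the corresponding graph of strong components. Adding $(v,w)$ to $G$ leaves the strong components unchanged when $f(v)=f(w)$, and otherwise adds exactly the arc $(f(v),f(w))$ to the (acyclic) graph of strong components; if $f(v)<f(w)$ this keeps that graph acyclic, so no components merge and the stored order stays topological, which matches what the algorithm does. So assume $f(v)>f(w)$. The components that merge are exactly those lying on a cycle through $(f(v),f(w))$, i.e.\ the set $C^*$ of components $c$ for which the old graph of strong components has a path from $f(w)$ to $c$ and a path from $c$ to $f(v)$; exactly as in the acyclic case, $C^*$ is itself strongly connected and is the unique new component. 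So it suffices to show (i) the static strong-components computation returns $C^*$, and (ii) the subsequent reordering and deletion leave a topological order of the new graph of strong components.

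For (i) the crux is the inclusion $C^*\subseteq X=F_<\cup\{t\}\cup B_>$. I would prove this from the reachability invariants of the search (every component in $F$ is reachable from $f(w)$, and $f(v)$ is reachable from every component in $B$) together with Lemma~\ref{lem:soft-search-live} and the definition $t=\min(\{f(v)\}\cup\{x\in F\mid \outl(x)\ne\nlv\})$, in the same spirit as the cycle-detection correctness of ordinary soft-threshold search. Concretely, fix $c\in C^*$ on a path $f(w)=c_0\to\cdots\to c_p=f(v)$; since the old graph of strong components is acyclic, $c_0<\cdots<c_p$ in its topological order, and one argues that a component of this path below $t$ has been fully explored forward and so lies in $F_<$, a component equal to $t$ is $t$, and a component above $t$ reaches $f(v)$ only through components fully explored backward and so lies in $B_>$. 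Given $C^*\subseteq X$, one checks that the graph on which the static algorithm is run is exactly the portion of the new graph of strong components relevant to the merge, so its only non-trivial strong component is $C^*$; collapsing the old components of $C^*$ into one then yields the correct partition. The alternative depth-first procedure in the Remark computes the same set $C^*$ and is handled identically.

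For (ii) I would mimic the proof of Theorem~\ref{thm:2way-corr} in the new graph of strong components. The vertices actually moved are those in $X-\{t\}$: when $t<f(v)$ the set $F_<$ is moved just before $t$ and $B_>$ just before $F_<$ (and the limited-search-style move is used when $t=f(v)$), each of $F_<$ and $B_>$ is ordered topologically within itself, and only afterwards are the vertices that are no longer canonical deleted. Running the five-case analysis of Theorem~\ref{thm:2way-corr} on an arbitrary arc $(x,y)$ of the new graph of strong components shows no arc is reversed; the only genuinely new points are that $C^*\subseteq X$, so that no component outside $X$ can lie strictly between two members of $C^*$ in the order (here one uses that every arc out of $F_<$ and into $B_>$ in the graph of strong components was traversed by the search), and that deleting the non-canonical vertices cannot create an inversion among the survivors.

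The main obstacle is the inclusion $C^*\subseteq X$: that driving vertex-guided search with a soft threshold, which makes components dead more aggressively than plain vertex-guided search, never discards a component belonging to the new strong component. This is where the extension genuinely differs from both the plain soft-threshold-search correctness proof and the plain topological-ordering correctness proof, and it is the step that forces the combined use of Lemma~\ref{lem:soft-search-live} and the exact choice of $t$; once it is in hand, the rest is routine bookkeeping along the lines of Theorem~\ref{thm:2way-corr}.
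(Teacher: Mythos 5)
Your proposal is correct and follows essentially the same route as the paper: identify the new component as the union of old components on paths from $f(w)$ to $f(v)$, show these all lie in $X=F_<\cup\{t\}\cup B_>$ with the connecting arcs in $Y$ (so the static computation finds the component), and then rerun the case analysis of Theorem~\ref{thm:2way-corr} with extra cases for arcs having one or both endpoints in the new component. The only difference is one of emphasis: you isolate the inclusion $C^*\subseteq X$ as the crux and sketch its derivation from the search invariants, Lemma~\ref{lem:soft-search-live}, and the choice of $t$, whereas the paper asserts this decomposition of the path in a single sentence.
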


\begin{proof}
By induction on the number of arc additions.  Consider the graph of strong components
just before an arc $(v, w)$ is added.  This addition forms a new component if and
only if $f(v) > f(w)$ and there is a path from $f(w)$ to $f(v)$.  Furthermore the old
components contained in the new component are exactly the components on paths
from $f(w)$ to $f(v)$. The components on such a path are in increasing order,
so the path consists of a sequence of one or more components in $F_<$, possibly
$t$, and a sequence of one or more components in $B_>$.  Each arc on such a
path is in $Y$.  It follows that the algorithm correctly forms the new
component.  If there is no new component, the reordering is exactly the same as
in Section~\ref{sec:2way-search}, so it correctly restores topological order. 
Suppose there is a new component.  Then certain old components are combined
into one, and their canonical vertices other than $f(v)$ are deleted from the
list of canonical vertices in topological order.  We must show that the new
order is topological.  The argument in the proof of Theorem~\ref{thm:2way-corr}
applies, except that there are some new possibilities. Consider an arc $(x, y)$
other than $(v, w)$.  One of the cases in the proof of
Theorem~\ref{thm:2way-corr} applies unless at least one of $x$ and $y$ is in
the new component.  If both are in the new component, then $(x, y)$ becomes a
loop.  Suppose just one, say $y$, is in the new component.  Then $f(x)$ cannot
be forward, or it would be in the new component.  Either $f(x)$ is in $B_>$ or
$f(x)$ is not in $X$; in either case, $f(x)$ precedes $f(v)$ after the
reordering.  The argument is symmetric if $x$ but not $y$ is in the new
component.  \qed

\end{proof}

To bound the running time of the strong components algorithm we need to extend
Lemma~\ref{lem:2way-search-rel} and Theorem~\ref{thm:2way-search-arcs}.

\begin{lemma}
\label{lem:strong-search-rel}
Suppose the addition of $(v, w)$ triggers a search.  Let $(u, x)$ and $(y, z)$,
respectively, be arcs traversed forward and backward during the search, not
necessarily during the same search step, such that $f(u) < f(z)$.  Then either $(u,
x)$ and $(y, z)$ are unrelated before the addition of $(v, w)$ but related afterward,
or they are related before the addition and the addition makes them into loops.

\end{lemma}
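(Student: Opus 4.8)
The plan is to mirror the proof of Lemma~\ref{lem:2way-search-rel}, carried out in the graph of strong components, with the ``loops'' alternative replacing the hypothesis that no cycle is created. Write $G$ for the graph just before the addition of $(v,w)$ and $G'$ for its graph of strong components. Since the disjoint-set structure is untouched during a search, the search runs on $G'$; this graph is acyclic, and the topological order maintained on the canonical vertices --- the one used to test $f(u)<f(z)$ --- orders $G'$ topologically. First I would record the reachability invariant maintained by the search (as in the proofs of Theorems~\ref{thm:2way-corr} and~\ref{thm:strong-soft-corr}): at all times, every component in $F$ is reachable from $f(w)$ in $G'$, and $f(v)$ is reachable from every component in $B$ in $G'$. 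Lifting this to $G$: when $(u,x)$ is traversed forward, $f(u)\in F$, so $w$ reaches $u$ in $G$; when $(y,z)$ is traversed backward, $f(z)\in B$, so $z$ reaches $v$ in $G$. Also, since these arcs are traversed rather than deleted as loops, $f(u)\ne f(x)$ and $f(y)\ne f(z)$ at those moments; together with the hypothesis $f(u)<f(z)$ we may assume $(u,x)$ and $(y,z)$ are distinct (relatedness concerns distinct arcs, and a coincident arc is directly seen to become a loop).

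The case split is on whether $(u,x)$ and $(y,z)$ are related in $G$. Suppose they are, witnessed by a path $P$ in $G$ containing both. The crux is that $(u,x)$ must precede $(y,z)$ on $P$: otherwise $P$ would contain a sub-path from $z$ to $u$, hence $G'$ would contain a path from $f(z)$ to $f(u)$, contradicting $f(u)<f(z)$ in the topological order of the acyclic graph $G'$. So $P$ contains a sub-path from $x$ to $y$, and splicing a path from $w$ to $u$ (from the invariant), the arc $(u,x)$, that sub-path, the arc $(y,z)$, a path from $z$ to $v$ (from the invariant), and the new arc $(v,w)$ yields a closed walk in $G+(v,w)$. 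Every vertex on this walk lies in a single strong component of $G+(v,w)$; in particular $u,x,y,z$ do, so both $(u,x)$ and $(y,z)$ become loops --- the second alternative.

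If instead $(u,x)$ and $(y,z)$ are unrelated in $G$, I would exhibit a common path for them in $G+(v,w)$ exactly as in Lemma~\ref{lem:2way-search-rel}: the arc $(y,z)$, a path from $z$ to $v$, the arc $(v,w)$, a path from $w$ to $u$, and the arc $(u,x)$; this contains both arcs, so they are related afterward --- the first alternative. The step I expect to be the main obstacle is the ordering argument in the ``related before'' case: confirming that compatibility, $f(u)<f(z)$, together with acyclicity of $G'$, forces $(u,x)$ to precede $(y,z)$ on the common path, which is exactly what produces a cycle through all of $u,x,y,z$ and hence the loops. As in the original lemma, I would read ``path'' loosely enough that the spliced sequences count as witnesses of relatedness (equivalently, pass to the graph of strong components after the addition, which is again acyclic, before reading off the claim).
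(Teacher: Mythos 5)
Your proof is correct and follows essentially the same route as the paper's: the paper's (much terser) argument likewise observes that relatedness after the addition is witnessed by the spliced path through $(y,z)$, $z\to v$, $(v,w)$, $w\to u$, $(u,x)$, and that if the arcs were related beforehand then $(u,x)$ must precede $(y,z)$ on the common path (which is exactly where $f(u)<f(z)$ and the topological order of the component graph are used), so that the new path from $z$ to $u$ puts all of $u,x,y,z$ into one component and turns both arcs into loops. Your write-up merely makes explicit the reachability invariant and the degenerate coincident-arc case that the paper leaves implicit.
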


\begin{proof}
After $(v, w)$ is added, there is a path containing both of them, so they are related
after the addition.  If they were related before the addition, then there must be a
path containing $(u, x)$ followed by $(y, z)$.  After the addition there is a path
from $z$ to $u$, so $u$, $x$, $y$, and $z$ are in the new component, and both $(u,
x)$ and $(y, z)$ become loops. \qed

\end{proof}

\begin{theorem}
\label{thm:strong-arcs}
Over $m$ arc additions, the strong components algorithm does $\bigO(m^{3/2})$ arc
traversals.

\end{theorem}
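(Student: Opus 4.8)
The plan is to re-run the counting argument of Theorem~\ref{thm:2way-search-arcs}, with Lemma~\ref{lem:strong-search-rel} playing the role that Lemma~\ref{lem:2way-search-rel} played there. Fix a search triggered by the addition of an arc $(v,w)$ with $f(v)>f(w)$, and let $A$ be the set of arcs traversed forward during that search; write $k=|A|$. Each arc of $A$ has a distinct twin traversed backward during the same search step, and a search step is taken only when the tail component of its forward arc precedes (in $\find$-order) the head component of its backward arc, so every such forward/backward pair meets the hypothesis of Lemma~\ref{lem:strong-search-rel}. Ordering $A$ in non-decreasing order of the $\find$-value of the tail, exactly as in the proof of Theorem~\ref{thm:2way-search-arcs}, the $i$-th arc of $A$ forms a compatible pair with the twin of each of the $i$-th through $k$-th arcs of $A$; hence the search accounts for at least $k(k+1)/2$ compatible forward/backward pairs.

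By Lemma~\ref{lem:strong-search-rel}, each such pair either (a) consists of two arcs that are unrelated before the addition and related afterward, or (b) consists of two arcs that are related before the addition and both become loops (and are deleted) afterward. I would bound the two kinds over all searches. For kind (a): adding an arc only creates paths, and deleting a loop changes no path in the graph of strong components, so relatedness of the surviving arcs is monotone; a given arc pair can therefore be counted as a kind-(a) pair at most once, and there are at most $\binom{m}{2}<m^2/2$ of them in total. For kind (b): once an arc becomes a loop it is deleted and never traversed again, so each arc becomes a loop during at most one search; if $L_s$ is the set of arcs that become loops during search $s$, that search contributes at most $|L_s|^2$ kind-(b) pairs, and since $\sum_s|L_s|\le m$ we get $\sum_s|L_s|^2\le\big(\sum_s|L_s|\big)^2\le m^2$ kind-(b) pairs over all searches. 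Summing, the quantity $\sum_{\text{searches}}k(k+1)/2$ is $\bigO(m^2)$. (Unlike in Theorem~\ref{thm:2way-search-arcs}, there is no exceptional last search: the graph of strong components stays acyclic, so Lemma~\ref{lem:strong-search-rel} applies to every search.)

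Converting this quadratic budget to the claimed bound is routine, exactly as before. Call a search \emph{small} if it does at most $2m^{1/2}$ arc traversals and \emph{big} otherwise. There is at most one search per arc addition, hence at most $m$ searches, so the small searches do at most $2m^{3/2}$ arc traversals in total. A big search that does $2k$ arc traversals has $k>m^{1/2}$, so it accounts for at least $k(k+1)/2>km^{1/2}/2$ of the kind-(a)-or-(b) pairs; summing over big searches against the $\bigO(m^2)$ budget gives that the big searches do $\bigO(m^{3/2})$ arc traversals as well. Adding the two contributions proves the theorem.

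The only genuinely new work over Theorem~\ref{thm:2way-search-arcs} is the bookkeeping for arcs that turn into loops: one must verify that Lemma~\ref{lem:strong-search-rel} really covers every forward/backward pair actually traversed (including degenerate steps where a traversed arc is, or immediately becomes, a loop, and steps that use one arc on both sides), that relatedness of the surviving arcs is truly monotone under loop deletion, and that kind-(b) pairs are counted without duplication. I expect this to be the main obstacle, but it does not affect the asymptotics, since the loop-pair count is $\bigO(m^2)$ --- the same order as the related-pair count used in the original proof.
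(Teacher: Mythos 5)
Your proposal is correct and follows essentially the same route as the paper: combine Lemma~\ref{lem:strong-search-rel} with the pair-counting argument from the proof of Theorem~\ref{thm:2way-search-arcs}, treating separately the arcs that become loops. The only difference is bookkeeping: the paper charges the at most $2m$ traversals of loop-forming arcs directly and runs the pair count only on the remaining arcs, whereas you keep every traversal in the $k(k+1)/2$ count and absorb the loop pairs into the quadratic budget via $\sum_s|L_s|^2\le m^2$ --- both give $\bigO(m^{3/2})$.
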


\begin{proof}
Divide the arc traversals during a search into those of arcs that become loops as a
result of the arc addition that triggered the search, and those that do not.  Over
all searches, there are at most $2m$ traversals of arcs that become loops: each
such arc can be traversed both forward and backward.  By
Lemma~\ref{lem:strong-search-rel} and the proof of
Theorem~\ref{thm:2way-search-arcs}, there are at most $4m^{3/2}$ traversals of
arcs that do not become loops.  \qed

\end{proof}

\begin{theorem}
Maintaining strong components via soft-threshold search takes \linebreak
$\bigO(m^{3/2})$ time over $m$ arc additions, worst-case if $s$ is always a median or approximate
median of the set of choices, expected if $s$ is always chosen uniformly at
random.

\end{theorem}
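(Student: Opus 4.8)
The plan is to follow the proofs of Theorems~\ref{thm:soft-search-time} and~\ref{thm:soft-search-rtime} almost verbatim, substituting Theorem~\ref{thm:strong-arcs} for Theorem~\ref{thm:2way-search-arcs}, and then to verify that the additional work needed for strong-component maintenance---disjoint-set operations, loop deletions, merging of old components, running the static strong-components subroutine, and deleting stale canonical vertices---contributes only $\bigO(1)$ per arc traversal together with a negligible $\bigO(n\log n)$ term.

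First, for the search itself: each search step not accompanied by a change of $s$ still costs $\bigO(1)$, now including a constant number of $\find$ (and occasional $\unite$) operations and a constant number of loop deletions, the latter $\bigO(1)$ each because the incoming and outgoing arc lists are doubly linked and circular. Termination of each search, and the bound on the number of while-loop iterations, follow exactly as in Theorem~\ref{thm:soft-search-steps}, whose argument counts activations and changes of $s$ and never uses cycle detection. As before, the number of canonical vertices ever in $F\cup B$ during a search is at most two more than the number of arc traversals of that search.

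Next, the work triggered by changes of $s$ is charged exactly as in the proof of Theorem~\ref{thm:soft-search-time} (respectively Theorem~\ref{thm:soft-search-rtime}): an increase of $s$ is charged $|F_P|$, a decrease is charged $|B_P|$, and this charge pays for selecting the new $s$ (the median, or a uniform random choice---whose quick-select analysis is unchanged), for making vertices passive or dead, and for later deactivating the vertices the change activated. Hence the total $s$-change charge over one search is $\bigO(|F\cup B|)$, i.e. $\bigO(1)$ per arc traversal (in expectation in the randomized case). By Theorem~\ref{thm:strong-arcs} the algorithm performs $\bigO(m^{3/2})$ arc traversals over $m$ additions, so the search phase costs $\bigO(m^{3/2})$ in total (worst case, resp. in expectation).

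Finally, the post-search overhead. The static strong-components routine is run on the subgraph with vertex set $X=F_<\cup\{t\}\cup B_>$ and arc set $Y$; since every arc out of $F_<$ and every arc into $B_>$ was traversed during the search, $|X|+|Y|$ is $\bigO(1)$ plus the number of arc traversals of that search, so this step is linear in the search cost. Merging the old components of the new component costs $\bigO(1)$ per merged component (circular lists) and deleting the now-noncanonical vertices from the dynamic ordered list costs $\bigO(1)$ each, both $\bigO(|X|)$; the remaining reordering is the one of Section~\ref{sec:2way-search}, already linear in one plus the number of arcs traversed. Charging the $\unite$ operations $\bigO(n\log n)$ in total makes every $\find$ take $\bigO(1)$ amortized time, and since $m=\Omga(n)$ this $\bigO(n\log n)$ term is $\lilo(m^{3/2})$ and is absorbed. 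Summing everything gives $\bigO(m^{3/2})$, worst case when $s$ is a (approximate) median and in expectation when $s$ is uniformly random. The step needing the most care is the last one: one must re-check, exactly as in Section~\ref{sec:2way-search} but now for the dynamically changing graph of strong components, that all arcs out of $F_<$ and into $B_>$ are indeed traversed by the search, so that $|Y|$, the number of merged components, and the number of deleted canonical vertices are all $\bigO(1)$ plus the arc-traversal count of the triggering search.
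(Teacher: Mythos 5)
Your proposal is correct and follows essentially the same route as the paper's proof: combine Theorem~\ref{thm:strong-arcs} with the charging arguments from the proofs of Theorems~\ref{thm:soft-search-time} and~\ref{thm:soft-search-rtime}, and observe that the extra strong-component work (loop deletions, the static strong-components subroutine on $(X,Y)$, merging and deleting old canonical vertices, and the amortized disjoint-set costs) is $\bigO(1)$ per arc traversal plus lower-order terms. The only cosmetic difference is bookkeeping: the paper charges loop deletions separately as $\bigO(m)$ total, whereas you fold them into the per-search-step cost and rely on Theorem~\ref{thm:strong-arcs} counting loop traversals among its $\bigO(m^{3/2})$; both accountings are valid.
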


\begin{proof}
Consider the addition of an arc $(v, w)$ such that $f(v) > f(w)$.  Each search step
either traverses two arcs or deletes one or two loops. An arc can only become a
loop once and be deleted once, so the extra time for such events is $\bigO(m)$
over all arc additions.  The arcs in $Y$ were traversed by the search, so the
time to form the new component and to reorder the vertices is $\bigO(1)$ per
arc traversal.  The theorem follows from Theorem~\ref{thm:strong-arcs} and the
proofs of Theorems~\ref{thm:soft-search-time} and~\ref{thm:soft-search-rtime}. 
\qed

\end{proof}

To maintain strong components via topological search, we represent the graph of
strong components by an adjacency matrix $A$ with one row and one column per
canonical vertex.  If $x$ and $y$ are canonical vertices, $A(x, y) = 1$ if $x \ne y$
and there is an arc $(q, r)$ with $f(q) = x$ and $f(r) = y$; otherwise, $A(x, y) =
0$.  We represent the topological order of components by an explicit numbering of the
canonical vertices using consecutive integers starting from one.  We also store the
inverse of the numbering.  If $x$ is a canonical vertex, $\position(x)$ is
its number; if $i$ is a vertex number, $\vertex(i)$ is the canonical vertex with
number $i$. Note that the matrix $A$ is indexed by vertex, {\em not} by vertex number; the
numbers change too often to allow indexing by number.

To maintain strong components via topological search, initialize all entries of $A$
to zero.  Add a new arc $(v, w)$ by setting $A(f(v), f(w)) = 1$.  If $f(v) > f(w)$,
search forward from $f(w)$ and backward from $f(v)$ by executing
\topologicalsearch{$f(v)$, $f(w)$} where \topologicalsearch is defined as in
Section~\ref{sec:top-search}.  Let $k$ be the common value of $i$ and $j$ when
the search stops.  After the search, find the vertex set of the new component,
if any, by running a linear-time static strong components algorithm on the graph
whose vertex set is $X = F \cup B$ and whose arc set is $Y = \{(x, y)|x
\text{ and } y \text{ are in } F \cup B \text{ and } A(x, y) = 1\}$.  Whether or
not there is a new component, reorder the old canonical vertices exactly as in
Section~\ref{sec:top-search}.  Finally, if there is a new component, do the
following: form its vertex set by combining the vertex sets of the old
components contained in it.  Let the canonical vertex of the new component be
$\vertex(k)$. Form a row and column of $A$ representing the arcs out of and
into the new component by combining those of the old components contained in
it. Delete from the topological order all the vertices that are no longer
canonical. Number the remaining canonical vertices consecutively from 1.

{\em Remark}: As in soft-threshold search, using a static strong components
algorithm to find the new component is overkill; a better method is the one described
in the remark before Theorem~\ref{thm:strong-soft-corr}: run a forward
depth-first search from $f(w)$, marking vertices when they are found to be in
the new component.

\begin{theorem}
\label{thm:strong-top-corr}
Maintaining strong components via topological search is correct.

\end{theorem}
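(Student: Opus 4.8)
The plan is to argue by induction on the number of arc additions, paralleling the proof of Theorem~\ref{thm:strong-soft-corr} but invoking Theorem~\ref{thm:top-corr} in place of Theorem~\ref{thm:2way-corr}. The inductive invariant is that, just before each arc addition, the disjoint-set structure, the adjacency matrix $A$, and the maps $\position$ and $\vertex$ correctly represent the current strong components, their canonical vertices, the graph of strong components, and a topological order of the canonical vertices. Adding $(v,w)$ with $f(v)=f(w)$ creates a loop and changes nothing; adding it with $f(v)<f(w)$ only sets one entry of $A$, cannot create a new component (that would require a path from $f(w)$ to $f(v)$, hence $f(v)>f(w)$), and leaves the order topological. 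So assume $f(v)>f(w)$, which triggers \topologicalsearch{$f(v)$,$f(w)$}; let $k$ be the common value of $i$ and $j$ when it stops.

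First I would show the algorithm correctly identifies the new component, if any. Exactly as in Theorem~\ref{thm:strong-soft-corr}, a new component forms iff the old graph of strong components contains a path from $f(w)$ to $f(v)$, and the old components merged are precisely those lying on such paths; since that graph was acyclic and topologically ordered, each lies strictly between $f(w)$ and $f(v)$ in number. Using the reachability invariant from the proof of Theorem~\ref{thm:top-corr} --- a vertex reachable from $f(w)$ and numbered below $k$ is placed in $F$, a vertex from which $f(v)$ is reachable and numbered above $k$ is placed in $B$, and $\vertex(k)$ is placed in one of them --- all these components lie in $X=F\cup B$ and all arcs among them lie in $Y$. Conversely, $Y$ consists only of arcs of the old acyclic graph of strong components together with the new arc $(f(v),f(w))$, so any nontrivial strongly connected subset of $(X,Y)$ must use $(f(v),f(w))$ and is therefore contained in the merged set; hence the static strong-components routine returns exactly the new component $C$, or correctly reports that there is none.

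Next I would show that the reordering restores a topological order of the new set of canonical vertices. If no new component forms, \reorder runs exactly as in Section~\ref{sec:top-search} on a graph that is still acyclic, so Theorem~\ref{thm:top-corr} applies verbatim. If a component $C$ forms, I would first observe that \reorder's forward loop pops $f(w)$ from the front of $F$ and places it at position $k$, so the designated canonical vertex $\vertex(k)$ of $C$ equals $f(w)$, which does lie in $C$. I would then extend the five-case analysis of the proof of Theorem~\ref{thm:top-corr}, with $\overline F$ and $\overline B$ as defined there: arcs internal to $C$ become loops in the contracted graph and impose nothing; arcs with neither end in $C$ are handled by the original cases, whose only use of acyclicity (ruling out the tail in $\overline F$ and the head in $\overline B$) here applies only to arcs internal to $C$; and for an arc with exactly one end in $C$, say the head $y\in C$, one argues $x\notin\overline F$ (else $f(w)\to x\to y\to f(v)\to f(w)$ would put $x$ in $C$) and that, whenever $y$ lands in $\overline B$, the backward search (or backward reordering) would already have pulled $x$ into $\overline B$ as well, so $x$ is moved below $k$ or left unmoved below $k$ --- in either case before $f(w)$'s final position $k$; the case of the tail in $C$ is symmetric. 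Deleting the non-canonical vertices of $C$ and renumbering preserves the relative order, so the new order is topological, and the induction goes through.

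I expect the main obstacle to be precisely this last case analysis: one must verify that \reorder, although now run in a setting where the contracted graph contains the cycle through $(f(v),f(w))$, still terminates correctly (the invariant counting null positions is unaffected by the cycle) and that it positions $f(w)$ so that every outside in-neighbor of $C$ ends up before it and every outside out-neighbor after it. This is a somewhat delicate but essentially mechanical extension of the bookkeeping in the proof of Theorem~\ref{thm:top-corr}, which I would organize along exactly the case split used there.
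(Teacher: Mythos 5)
Your proposal is correct and follows essentially the same route as the paper's proof: induction on arc additions, showing that every path from $f(w)$ to $f(v)$ decomposes into a prefix in $F$ and a suffix in $B$ with all its arcs in $Y$ (so the static SCC routine finds exactly the new component), identifying $f(w)$ at position $k$ as the new canonical vertex, and then extending the case analysis of Theorem~\ref{thm:top-corr} to arcs with one or both endpoints in the new component. Your handling of the one-endpoint case is phrased via membership in $\overline{F}$ and $\overline{B}$ where the paper argues directly about positions relative to $k$, but the two arguments are interchangeable.
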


\begin{proof}
By induction on the number of arc additions.  Consider the addition of an arc $(v,
w)$.  Let $f$ and $f'$ be the canonical vertex function just before and just after
this addition, respectively.  The addition creates a new component if and only if
$f(v) > f(w)$ and there is a path from $f(w)$ to $f(v)$.  Suppose $f(v) > f(w)$ and
let $F$ and $B$ be the values of the corresponding variables just after the search
stops.  Any path from $f(w)$ to $f(v)$ consists of a sequence of one or more vertices
in $F$ followed by a sequence of one or more vertices in $B$.  Each arc on such a
path is in $Y$.  It follows that the algorithm correctly finds the new component.  If
there is no new component, the algorithm reorders the vertices exactly as in
Section~\ref{sec:top-search} and thus restores topological order.  Suppose there is a
new component.  Let $k$ be the common value of $i$ and $j$ when the search
stops.  The reordering sets $\vertex(k) = f(w)$. This vertex is the canonical vertex of the new component. 
Let $(x, y)$ be an arc. The same argument as in the proof of Theorem~\ref{thm:top-corr} shows
that $f'(x) = f(x) < f(y) = f'(y)$ after the reordering unless $f(x)$ or $f(y)$
or both are in the new component.  If both are in the new component, then $(x,
y)$ is a loop after the addition of $(v, w)$.  Suppose $f(x)$ but not $f(y)$ is
in the new component.  Then $f(x) \in F \cup B$.  If $f(x) \in F$, then
$\position(f(y)) > k$ after the reordering but before the renumbering, so $f'(x)
< f'(y)$ after the reordering and renumbering. If $f(x) \in B$, then $f(y)
\notin B$, since otherwise $f(y)$ is in the component.  It follows that
$\position(f(y)) > k$ before the reordering, and also after the reordering but
before the renumbering, so $f'(x) < f'(y)$ after the reordering and renumbering. 
A symmetric argument applies if $f(y)$ but not $f(x)$ is in the new component. 
\qed

\end{proof}

\begin{theorem}
Maintaining strong components via topological search takes $\bigO(n^{5/2})$ time over
all arc additions.

\end{theorem}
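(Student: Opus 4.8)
The plan is to adapt the proof of Theorem~\ref{thm:top-search-time} to the graph of strong components, checking that the extra work caused by component formation does not exceed the cost of topological search itself. Fix an arc addition $(v,w)$ with $f(v) > f(w)$; its cost splits into three parts: (a) the topological search on the graph of strong components together with the subsequent reordering of the old canonical vertices, carried out exactly as in Section~\ref{sec:top-search}; (b) locating the new component, by a depth-first search from $f(w)$ through the arcs of $A$ lying inside $X = F \cup B$; and (c) the bookkeeping done when a component actually forms --- disjoint-set unites, building the combined row and column of $A$, discarding the rows and columns of the absorbed components, deleting their canonical vertices from the order, and renumbering the survivors from $1$. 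I would bound (b) and (c) by $\bigO(n^2)$ each, bound (a) by $\bigO(n^{5/2})$ exactly as in Theorem~\ref{thm:top-search-time}, and conclude that (a) dominates.

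Parts (c) and (b) I would dispose of first. A component forms at most $n-1$ times: a formation that unites $\ell_t \ge 2$ old components into one decreases the number of canonical vertices by $\ell_t - 1$, and that number never drops below $1$, so $\sum_t(\ell_t - 1) \le n-1$ and hence $\sum_t \ell_t \le 2(n-1)$. Each formation does $\bigO(\ell_t n)$ work on $A$, $\bigO(n)$ renumbering work, and $\bigO(\ell_t)$ unites, so part (c) totals $\bigO(n^2)$. For part (b), the depth-first search costs $\bigO(|X|^2)$ when out-neighbors are found by scanning rows of $A$; since $|F|$ and $|B|$ differ by at most one we have $|X|^2 = \bigO(|F|\,|B|)$, so it suffices to bound $\sum |F|\,|B|$ over all searches --- the analogue of Lemma~\ref{lem:top-cycle-time}. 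For any $x \in F$ and $y \in B$ we have $\position(x) < \position(y)$ before the search, so $x$ precedes $y$ in the old order; after the addition there is a path $y \to f(v) \to f(w) \to x$, so $x$ and $y$ are related afterward; and if they were \emph{already} related, the old path from $x$ to $y$ closes a cycle through $(v,w)$, forcing both of them into the new component. Thus $|F|\,|B|$ is the number of pairs $x \in F, y \in B$ that become newly related at this step, plus the number with both ends in the new component. Over the whole sequence the first quantity is at most ${n \choose 2}$ (a pair, once related, stays related), and the second is at most $\sum_t \ell_t^2 \le n\sum_t \ell_t \le 2n(n-1)$. Hence $\sum |F|\,|B| = \bigO(n^2)$, and part (b) is $\bigO(n^2)$.

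It remains to bound part (a) by $\bigO(n^{5/2})$, and here I would reuse the chain Lemma~\ref{lem:sum-distances}--Lemma~\ref{lem:proper-swaps}--Lemma~\ref{lem:swap-distance} essentially verbatim, since the search and the call to \reorder are literally the Section~\ref{sec:top-search} procedures run on a graph of at most $n$ canonical vertices. Lemma~\ref{lem:sum-distances} then bounds the time of (a), summed over all additions, by $\bigO(1)$ per addition plus a constant times the total distance of all reordering moves of canonical vertices; Lemma~\ref{lem:proper-swaps} decomposes those moves into a proper sequence of pairwise vertex swaps of the same total distance; and Lemma~\ref{lem:swap-distance} bounds a proper swap sequence by $\bigO(n^{5/2})$. \emph{The one point that needs care --- and the main obstacle --- is that, as components form, canonical vertices are deleted from the order and the survivors renumbered, whereas Lemma~\ref{lem:swap-distance} is stated for a fixed permutation of $1,\ldots,n$.} I would resolve this by regarding an absorbed canonical vertex as \emph{remaining in place} in the order (with its arcs transferred to the new canonical vertex, so that it is never again touched by a search or a reordering): the order is then permanently a permutation of the original $n$ positions, each reordering moves only canonical vertices around the fixed absorbed ones, and one must verify that (i) the move distances of the real (renumbering) algorithm are no larger than those of this virtual version, (ii) the swap sequence is still proper --- true because a swap of $x \in \overline{F}$ upward past $y \in \overline{B}$ creates a path $y \to x$, which, since the graph of strong components stays acyclic, forces $y$ before $x$ in every later order, and a component formation (like a reordering) never reverses an order already forced by a path --- and (iii) the potential argument of Lemma~\ref{lem:swap-distance} still yields $\bigO(n^{5/2})$, noting that the genuine renumbering work is already counted in the $\bigO(n^2)$ of part (c). Carrying out verifications (i) and (iii) cleanly is the delicate step; granting them, parts (a), (b), (c) sum to $\bigO(n^{5/2})$, which is the claimed bound.
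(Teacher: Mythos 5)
Your proposal is correct and follows essentially the same route as the paper: the same three-way decomposition (search/reorder, component detection, bookkeeping), the same $\bigO(|F||B|)$ relatedness-plus-merging count for the cycle/component tests, and the same key device of leaving absorbed canonical vertices in place (with no incident arcs) so that Theorem~\ref{thm:top-search-time} applies verbatim. The only cosmetic difference is that you bound the ``both in the new component'' pairs via $\sum_t \ell_t^2 = \bigO(n^2)$ rather than by noting each vertex pair is merged at most once, and the verifications you flag as delicate are exactly the observations the paper makes (the real, compacted algorithm only does less work than the virtual one).
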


\begin{proof}
The time spent combining rows and columns of $A$ and renumbering vertices after
deletion of non-canonical vertices is $\bigO(n)$ per deleted vertex, totaling
$\bigO(n^2)$ time over all arc additions.  The time spent to find the new component
after a search is $\bigO(|F| + |B|)^2 = \bigO(|F||B|)$ since $|B| \le |F| \le |B| + 1$,
where $F$ and $B$ are the values of the respective variables at the end of the search.
If $x$ is in $F$ and $y$ is in $B$, then either $x$ and $y$ are unrelated before
the arc addition that triggered the search but related after it (and possibly in the
same component), or they are related and in different components before the arc
addition but in the same component after it.  A given pair of
vertices can become related at most once and can be combined into one component at
most once.  There are ${n \choose 2}$ vertex pairs.  Combining these facts, we
find that the total time spent to find new components is $\bigO(n^2)$.

To bound the rest of the computation time, we apply
Theorem~\ref{thm:top-search-time}.  To do this, we modify the strong components
algorithm so that it does not delete non-canonical vertices from the topological
order but leaves them in place.  Such vertices have no incident arcs and are never
moved again.  This only makes the search and reordering time longer, since the
revised algorithm examines non-canonical vertices during search and reordering,
whereas the original algorithm does not.  The proof of
Theorem~\ref{thm:top-search-time} applies to the revised algorithm, giving a bound of
$\bigO(n^{5/2})$ on the time for search and reordering.  \qed

\end{proof}

\section{Remarks} \label{sec:remarks}

We are far from a complete understanding of the incremental topological ordering
problem.  Indeed, we do not even have a tight bound on the running time of
topological search.  Given the connection between this running time and the
$k$-levels problem (see Section~\ref{sec:top-search}), getting a tighter bound seems
a challenging problem.  As mentioned in the introduction, Bender et
al.~\cite{Bender2009} have proposed a completely different algorithm with a running
time of $\Thta(n^2\log n)$.

A more general problem is to find an algorithm that is efficient for any graph
density.  Our lower bound on the number of vertex reorderings is $\Omga(nm^{1/2})$
for any local algorithm (see the end of Section \ref{sec:soft-search}); we conjecture
that there is an algorithm with a matching running time, to within a polylogarithmic
factor.  For sparse graphs, soft-threshold search achieves this bound to within a
constant factor.  For dense graphs, the algorithm of Bender, Fineman, and Gilbert
achieves it to within a logarithmic factor.  For graphs of intermediate density,
nothing interesting is known.

We have used total running time to measure efficiency.  An alternative is to use an
incremental competitive model \cite{Ramalingam1991}, in which the time spent to
handle an arc addition is compared to the minimum work that must be done by any
algorithm, given the same topological order and the same arc addition.  The minimum
work that must be done is the minimum number of vertices that must be reordered,
which is the measure that Ramalingam and Reps used in their lower bound.  (See the
end of Section \ref{sec:soft-search}.)  But no existing algorithm handles an arc
addition in time polynomial in the minimum number of vertices that must be reordered.
To obtain positive results, researchers have compared the performance of their
algorithms to the minimum sum of degrees of reordered vertices \cite{Alpern1990}, or
to a more-refined measure that counts out-degrees of forward vertices and in-degrees
of backward vertices \cite{Pearce2006}.  For these models, appropriately balanced
forms of ordered search are competitive to within a logarithmic factor
\cite{Alpern1990,Pearce2006}.  In such a model, semi-ordered search is competitive to
within a constant factor.  We think, though, that these models are misleading: they
ignore the possibility that different algorithms may maintain different topological
orders, they do not account for the correlated effects of multiple arc additions, and
good bounds have only been obtained for models that overcharge the adversary.

Alpern et al. \cite{Alpern1990} and Pearce and Kelly \cite{Pearce2007} studied
batched arc additions as well as single ones.  Pearce and Kelly give an algorithm
that handles an addition of a batch of arcs in $\bigO(m')$ time, where $m'$ is
the total number of arcs after the addition, and such that the total time for all arc additions
is $\bigO(nm)$.  Thus on each batch the algorithm has the same time bound as a
static algorithm, and the overall time bound is that of the incremental algorithm of
Marchetti-Spaccamela et al. \cite{Marchetti1996}.

This result is not surprising, because {\em any} incremental topological ordering
algorithm can be modified so that each batch of arc additions takes $\bigO(m')$
time but the overall running time increases by at most a constant factor.  The idea is to run
a static algorithm concurrently with the incremental algorithm, each
maintaining its own topological order. Here are the details.  The incremental
algorithm maintains a set of added arcs that have not yet been processed. 
Initially this set is empty.  To handle a new batch of arcs, add them to the
graph and to the set of arcs to be processed.  Then start running a static
algorithm; concurrently, resume the incremental algorithm on the expanded set
of new arcs.  The incremental algorithm deletes an arc at a time from this set
and does the appropriate processing.  Allocate time in equal amounts to the two
algorithms.  If the static algorithm stops before the incremental algorithm
processes all the arcs, suspend the incremental algorithm and use the
topological order computed by the static algorithm as the current order.  If
the incremental algorithm processes all the arcs, stop the static algorithm and
use the topological order computed by the incremental algorithm as the current
order.  This algorithm runs a constant factor slower than the incremental
algorithm and spends $\bigO(m')$ time on each batch of arcs.

For the special case of soft-threshold search, this method can be improved to
maintain a single topological order, and to restart the incremental algorithm
each time the static algorithm completes first.  The time bound remains the
same. If the static algorithm stops first, replace the topological order
maintained by the incremental algorithm by the new one computed by the static
algorithm, and empty the set of new arcs.  These arcs do not need to be
processed by the incremental algorithm.  This works because the running time
analysis of soft-threshold search does not use the current topological order,
only the current graph, specifically the number of related arc pairs.  Whether
something similar works for topological search is open. Much more interesting
would be an overall time bound based on the size and number of batches that is
an improvement for cases other than one batch of $m$ arcs and $m$ batches of
single arcs.

Alpern et al. \cite{Alpern1990} also allowed unrelated vertices to share a position
in the order.  More precisely, their algorithm maintains a numbering of the vertices
such that if $(v, w)$ is an arc, $v$ has a lower number than $w$, but unrelated
vertices may have the same number.  This idea is exploited by Bender, Fineman, and
Gilbert in their new algorithm.

\begin{acks}

{\normalsize 
The last author thanks Deepak Ajwani, whose presentation at the 2007 Data Structures
Workshop in Bertinoro motivated the work described in Section \ref{sec:soft-search},
and Uri Zwick, who pointed out the connection between the running time of topological
search and the $k$-levels problem.  All the authors thank the anonymous referees
for their insightful suggestions that considerably improved the presentation.
}

\end{acks}

%
%

\bibliographystyle{acmsmall}
\bibliography{local}

\end{document}